\author{S. L. Lyakhovich, E. A. Mosman and A. A. Sharapov}
\address{Department of Quantum Field Theory, Tomsk State University, Lenin ave. 36, Tomsk 634050, Russia.}
\email{sll@phys.tsu.ru} \email{mosman@phys.tsu.ru}
\email{sharapov@phys.tsu.ru}
\title[Characteristic classes of $Q$-manifolds]{Characteristic classes of $Q$-manifolds:\\ classification  and applications}
\newtheorem{thm}{Theorem}[section]
\newtheorem{cor}{Corollary}
\newtheorem{lem}{Lemma}[section]
\newtheorem{prop}{Proposition}[section]
\theoremstyle{definition}
\newtheorem{defn}{Definition}[section]
\theoremstyle{remark}
\newtheorem{ex}{Example}[section]
\newtheorem{rem}{Remark}[section]
\renewcommand{\geq}{\geqslant}
\keywords{Q-manifolds, characteristic classes, gauge theories}
\begin{document}
\maketitle

\begin{abstract}
A $Q$-manifold $M$ is a supermanifold endowed with an odd vector
field $Q$ squaring to zero. The Lie derivative $L_Q$ along $Q$
makes the algebra of smooth tensor fields on $M$ into a
differential algebra. In this paper, we define and study the
invariants  of $Q$-manifolds called characteristic classes. These
take values in the cohomology of the operator $L_Q$ and, given an
affine symmetric connection with curvature $R$, can be represented
by universal tensor polynomials in the repeated covariant
derivatives of $Q$ and $R$ up to some finite order. As usual, the
characteristic classes are proved to be independent of the choice
of the affine connection used to define them. The main result of
the paper is a complete classification of the intrinsic
characteristic classes, which, by definition, do not vanish
identically on flat $Q$-manifolds. As an illustration of the
general theory  we interpret some of the intrinsic characteristic
classes as anomalies in the BV and BFV-BRST quantization methods
of gauge theories. An application to the theory of  (singular)
foliations is also discussed.
\end{abstract}

%\tableofcontents
\section{Introduction}

By definition, a \textit{ $Q$-manifold } is a pair $(M,Q)$, where
$M$ is a smooth supermanifold equipped  with an odd vector field
$Q$ satisfying the integrability condition $[Q,Q]=0$. Every such
$Q$ is called a \textit{homological vector field}. Equivalently,
one can think of a $Q$-manifold as a smooth supermanifold whose
structure sheaf of supercommutative algebras of functions is
endowed with the differential $Q$. The action of $Q$ is naturally
extended from $C^\infty(M)$ to the whole tensor algebra of $M$.
Hereafter by the tensor algebra $\mathcal{T}(M)$ of a
supermanifold $M$ we mean the space of smooth tensor fields on $M$
endowed with the usual tensor operations: the tensor product,
contraction and permutation of indices. The Lie derivative $L_Q$
along $Q$ respects the tensor operations and makes
$\mathcal{T}(M)$ into a differential tensor algebra. We define the
group of $Q$-cohomologies (with tensor coefficients) as the quotient
$H_Q(M)=\mathrm{Ker}(L_Q)/\mathrm{Im}(L_Q)$. The tensor operations
in $\mathcal{T}(M)$ induce those in $H_Q(M)$; hence, we can speak
of the tensor algebra of $Q$-cohomology.

Let us mention the examples of $Q$-manifolds, which are important
in mathematics and physics.

\begin{ex}
 An odd tangent bundle $ \Pi TN$ of an ordinary manifold $N$.
This is obtained by applying the parity reversing functor $\Pi$ to
the tangent space of $N$. The algebra of smooth functions
$C^\infty(\Pi TN)$ is naturally isomorphic to the exterior algebra
of differential forms on $N$ and the role of $Q$ is played by the
exterior differential.
\end{ex}

\begin{ex}\label{LA}
 Replacing the tangent bundle of $N$ by a general Lie algebroid
$E \rightarrow N$, we come to the $Q$-manifold $\Pi E$.  The
differential algebra of smooth functions on $\Pi E$ is modeled on
$\Gamma(\wedge^\bullet E^\ast)$, the exterior  algebra of
$E$-forms, with the differential $Q$ being  the Lie algebroid
differential $d_E: \Gamma (\wedge^n E^\ast)\rightarrow \Gamma
(\wedge^{n+1} E^\ast)$. For a more detailed  discussion of the
relationship between Lie algebroids and homological vector fields
see \cite{Vaintrob}.
\end{ex}

\begin{ex}
Any $L_\infty$-algebra can be thought of as a formal $Q$-manifold
with the homological vector field $Q$ vanishing at the origin
\cite{AKSZ}, \cite{Kon1}. In a particular case of Lie algebras we
have a $Q$-manifold $\Pi \mathcal{G}$, where $\mathcal{G}$ is a
Lie the algebra and $Q$ is given by the Chevalley-Eilenberg
differential on $\wedge^\bullet \mathcal{G}^\ast \simeq
C^\infty(\Pi\mathcal{G})$. One can also view $\Pi \mathcal{G}$ as
a linear $Q$-manifold coming from the Lie algebroid
$\mathcal{G}\rightarrow \{\ast\}$ over a single point set.
\end{ex}

\begin{ex}
In the theory of gauge systems, the homological vector field
generates the BRST symmetry \cite{HT}, \cite{S}. To any
variational gauge system one associates either a symplectic or an
anti-symplectic manifold (according to which formalism, Lagrangian
BV or Hamiltonian BFV, is used) together with an odd,
selfcommuting, (anti-)Hamiltonian vector field $Q$. The
corresponding (anti-)symplectic two-form, being $Q$-invariant,
defines an element of the $Q$-cohomology group. More generally,
the quantization problem for non-variational gauge dynamics leads
naturally to flat $S_\infty$- or $P_\infty$-algebras whose first
structure map is given by  a classical BRST differential
\cite{CaFe}, \cite{KLS}, \cite{LS1}, \cite{LS2}.

\end{ex}

It should be noted that in many interesting cases, including the
examples above, a $Q$-manifold  carries an  additional
$\mathbb{Z}$-grading relative to which  $Q$ has degree 1. In this
case, $H_Q(M)$ is not just a differential group but a cochain
complex. Though important in particular applications, the
$\mathbb{Z}$-grading is of little significance  for our subsequent
considerations and we do not address  it here.

A morphism of $Q$-manifolds is a smooth map $\varphi:
M_1\rightarrow M_2$ inducing a homomorphism $\varphi^\ast :
C^\infty(M_2)\rightarrow C^\infty(M_1)$ of differential algebras,
i.e.,  the usual chain property holds: $Q_1\circ
\varphi^\ast=\varphi^\ast \circ Q_2$. In this case, the
homological vector fields $Q_1$ and $Q_2$ are said to be
$\varphi$-related. As a composition of two morphisms is apparently
a $Q$-morphism, we have a well-defined category of $Q$-manifolds.

In this paper, we define and study the invariants of $Q$-manifolds
called \textit{characteristic classes}\cite{LS}, \cite{LMS}. These take values in the
$Q$-cohomology and can be obtained by pulling back the
$Q$-cohomology classes of some formal $Q$-manifold of infinite
dimension (a classifying $Q$-space). The constructions of the
classifying space and the corresponding characteristic  map will
be presented in Secs. 3 and 4; for now, we would like to provide a
less formal introduction to the notion of the characteristic
classes in terms of natural covariants associated to a
$Q$-manifold with connection. Consider a $Q$-manifold $M$ endowed
with a symmetric affine connection $\nabla$ and let $R$ be the
curvature of $\nabla$. The tensor algebra $\mathcal{A}\subset
{\mathcal{T}}(M)$ of all local covariants associated to $Q$ and
$\nabla$ is generated by the repeated covariant derivatives of $Q$
and $R$.\footnote{In the context of classical differential
geometry, this statement is known as the second reduction theorem
\cite[p.165]{Schouten}, and the elements of $\mathcal{A}$ are
called \textit{differential concomitants} of $\nabla$ and $Q$, see
also \cite{JM}.} The algebra $\mathcal{A}$ is obviously invariant
under the action of $L_Q$ and thus it is a differential subalgebra
of $\mathcal{T}(M)$. We say that a $Q$-closed local covariant
$\mathcal{C}\in \mathcal{A}$ is a \textit{universal cocycle} if
the closedness equation $L_Q \mathcal{C}=0$ follows from the
integrability condition $[Q,Q]=0$ regardless of any specificity of
$Q$, $\nabla$ and $M$. In other words, the universal cocycles are
$Q$-invariant tensor polynomials in $\nabla^nQ$ and $\nabla^m R$
that can be attributed to \textit{any} $Q$-manifold with
connection.

The \textit{characteristic classes} of $Q$-manifolds can now be
defined as the elements of $H_Q(M)$ represented by universal
cocycles. It can be shown (Theorem \ref{independing thm}) that the
$Q$-cohomology classes of universal cocycles do not depend on the
choice of symmetric  connection and hence they are invariants of
the $Q$-manifold as such.

The simplest examples of the universal cocycles are the tensor
powers of the homological vector field $Q^{\otimes n}$. Obviously,
these cocycles exhaust all the universal cocycles that do not
involve the connection. A less trivial example of a universal
cocycle is obtained by taking the complete contraction of the
$2n$-form representing the Pontryagin characters of the tangent
bundle $TM$ with the contravariant tensor $Q^{\otimes 2n}$.  The
result is the sequence of  $Q$-invariant functions
$P_n=\mathrm{Str}(R^n_{QQ})$, where $R_{QQ}=[\nabla_Q,\nabla_Q]$
is a (1,1)-tensor defining an endomorphism of $TM$ and the
exponent $n$ means  the $n$th power of the endomorphism.

The cocycles $Q^{\otimes n}$ and $P_n$ are members  of two
complementary sets of universal cocycles: intrinsic and vanishing.
A universal cocycle is called \textit{vanishing} if it vanishes
identically upon setting the curvature $R$ to zero. From the
viewpoint of the $Q$-structure the most interesting are intrinsic
cocycles. The intrinsic cocycles survive on flat $Q$-manifolds,
that is why their cohomology classes are closely related to the
structure of the homological vector field rather than the topology
of $M$. This motivates us to introduce the notion of
\textit{intrinsic characteristic classes}. To give their formal
definition we recall that the tensor algebra $\mathcal{A}$ of
local covariants associated to $Q$ and $\nabla$ is constructed
from two infinite sequences of tensors $\{\nabla^nQ\}$ and
$\{\nabla^m R\}$. Let $\mathcal{R}$ denote the ideal of
$\mathcal{A}$ generated by $\{\nabla^m R\}$. Since $L_Q
\mathcal{R}\subset \mathcal{R}$, we have the short exact sequence
of complexes
\begin{equation*}
\xymatrix{0
\ar[r]&{\mathcal{R}}\ar[r]^-{i}&{\mathcal{A}}\ar[r]^-{p} &
{\mathcal{A}}/{\mathcal{R}}\ar[r]&0}
\end{equation*}
giving rise to the exact triangle in cohomology
\begin{equation}\label{ExTr}
\xymatrix{H(\mathcal{R})\ar[rr]^{i_\ast}& & H(\mathcal{A}) \ar[dl]^{p_\ast}\\
&H(\mathcal{A}/\mathcal{R})\ar[ul]^{\partial}& }
\end{equation}

Here $H(\mathcal{A})$ is the tensor algebra of characteristic
classes and $\partial$ is the connecting homomorphism. By
definition, the space of intrinsic characteristic classes is
identified with the subspace $\mathrm{Im}\,p_\ast =\mathrm{Ker}\,
\partial\subset H(\mathcal{A}/\mathcal{R})$. Geometrically, one can view the group
$H(\mathcal{A}/\mathcal{R})$  as the space of characteristic
classes of a flat $Q$-manifold. The universal cocycles of a flat
$Q$-manifold are built out of the $(n,1)$-tensors $\nabla^n Q$,
symmetric in lower indices. If $\partial$ is nonzero, not any
characteristic class can be extended from flat to arbitrary
$Q$-manifolds and the obstruction to extendability is controlled
by the elements of
$\mathrm{Im}\,\partial=\mathrm{Ker}\,i_\ast\subset
H(\mathcal{R})$. In Sec. 6, we show that the space
$\mathrm{Im}\,\partial$ is nonempty and spanned by the functions
$P_n$.

The main result of this paper is a complete classification of the
intrinsic characteristic classes. In the case of flat
$Q$-manifolds this is done in Sec. 5, where we construct a
multiplicative (w.r.t. the tensor product) basis in
$H(\mathcal{A})$. The universal cocycles representing the basis
elements are assembled into three infinite series $A$, $B$, and
$C$. The elements of $A$-series are represented by odd functions
on $M$, while the elements of $B$- and $C$-series are represented
by tensor fields of types $(n,1)$ and $(n,0)$, respectively, one
for each $n\in \mathbb{N}$.  The surprising thing is that one can
always  choose the basis cocycle to be tensor polynomials in
$\nabla Q$ and $\nabla^2 Q$ alone, i.e., no derivatives higher
than two of the homological vector field are needed to define all
the characteristic classes. This resembles the situation with the
characteristic classes of framed foliations or, more generally,
with the Gelfand-Fuks cohomology \cite{Fu1}, and this is more than
just an analogy. In fact, with the concept of classifying space,
we show that the enumeration problem for all independent
characteristic classes of flat $Q$-manifolds amounts to
computation of the stable cohomology of the Lie algebra of formal
vector fields with tensor coefficients, the problem that was
actually solved by D.B. Fuks \cite{Fu2}. The stable cohomology
groups under consideration are also known as {\textit{graph
cohomology}} groups, because they can be calculated via certain
complex of finite graphs. The graph complexes were introduced by
M. Kontsevich in the beginning of the nineties \cite{K1},
\cite{K2} and since then they have appeared in various contexts of
differential geometry and topology as well as in the topological
quantum field theory. It was stressed in \cite{Kon} that various
invariants (characteristic classes) of differential geometric
structures can be defined as the image of the graph cohomology in
the $Q$-cohomology of an appropriate $Q$-manifold (perhaps, with
an additional  structure). So the $Q$-manifolds provide a general
framework for most of the known constructions of characteristic
classes (Lie algebroids, vector bundles, foliations, complex
structures, knots, strongly homotopical algebras, rational
homotopy types, etc.). The further development of this framework
was one of our motives for writing this paper.

In Sec. 6 we extend the $A$-, $B$-, and $C$-series of
characteristic classes from flat to arbitrary $Q$-manifolds. The
extension is quite straightforward for the characteristic classes
of series $B$ and $C$, but it is up against some topological
obstructions for $A$-series. Roughly, only half of $A$-series'
characteristic classes can be defined for general $Q$-manifolds,
and the definition involves a special choice of the symmetric
affine connection. The obstructions for extendability of the other
half are explicitly identified with the aforementioned set of
$Q$-invariant functions $\{P_n\}$.

In Sec. 7 we focus on the exterior algebra of local covariants
with values in forms. The algebra has the structure of bicomplex
with respect to the exterior differential $d$ and the Lie
derivative $L_Q$. Under assumption that the Pontryagin classes of
$TM$ vanish, we prove that for any characteristic class
represented by a differential form there exists a $d$-exact
representative.

In Sec. 8 the general  construction is illustrated with some
examples from quantum field theory and theory of foliations.
Namely, we interpret the first term of $A$-series (the modular
class) and the second term of $C$-series as lower-order  anomalies
of gauge symmetries in the BRST quantization approach. We briefly
discuss a relationship between characteristic classes of
$Q$-manifolds, Lie algebroids, and (singular) foliations and give
an example of a regular foliation with nontrivial modular class.

Some auxiliary results are proved in Appendices A, B, and C.

\subsection*{Terminology and notation.} We use the standard language
and notation of supermanifold theory \cite{L}, \cite{L1}. Our
tendency, however, is to omit the prefix ``super'' whenever
possible. So the terms like manifolds, vector bundles, smooth
functions etc., will usually mean the corresponding notions of
supergeometry. Given a smooth supermanifold $M$ with $\dim M=p|q$,
we set $|\dim M|=p+q$. We let $\mathcal{T}(M)$ denote the algebra
of tensor fields, $\mathfrak{X}(M)$ the Lie algebra of vector
fields, $\Omega(M)$ the exterior algebra of differential forms,
and $\mathfrak{A}(M)$ the associative algebra of endomorphisms of
$TM$. The algebra $\mathfrak{A}(M)$, being a $\mathbb{Z}_2$-graded
algebra over $C^\infty(M)$, possesses a natural trace
$\mathrm{Str}: \frak{A}(M)\rightarrow C^\infty(M)$. Sometimes it
will be convenient to treat the  space $\frak{X}(M)$ as a right
$\mathfrak{A}(M)$-module and the space $C^{\infty}(M)$ as a left
$\mathfrak{X}(M)$-module. All the partial or covariant derivatives
are assumed to act from the left.

\subsection*{Acknowledgments} We wish to thank  R. Fernandes,  S. Merkulov and D. Roytenberg for useful discussions.
The work was partially supported by the RFBR grant no
09-02-00723-a, by the grant from Russian Federation President
Programme of Support for Leading Scientific Schools no
871.2008.02, by the State Contract no 02.740.11.0238 from Russian
Federal Agency for Science and Innovation, and also by Russian
Federal Agency of Education under the State Contract no P1337. EAM appreciates  the financial support from Dynasty Foundation.

\section{$Q$-vector bundles}

We begin with a collection of definitions and simple facts
concerning the concept of equivariant vector bundles \cite{Segal}.

Let $G$ be a Lie group (possibly zero or infinite dimensional). A
\textit{$G$-manifold } is a smooth manifold $M$ together with a
group homomorphism $G\rightarrow \mathrm{Diff}(M)$. In other
words, the Lie group $G$ acts smoothly on $M$. A
$G$-\textit{equivariant map} between $G$-manifold $M'$ and $M''$
is a smooth map $f: M'\rightarrow M''$ such that $f(gx)=gf(x) $
for all $g\in G$ and $x\in M'$. In what follows we will refer to
the $G$-equivariant maps as $G$-maps.

Analogously, consider a vector bundle $E$ over a smooth manifold
$M$ and let $\mathrm{Aut}(E)$ denote the group of bundle
automorphisms, i.e., the group of fiberwise linear diffeomorphisms
$f: E\rightarrow E$ mapping fibers to fibers\footnote{It is
usually assumed that $f|_M=id_M$. In our definition of the group
$\mathrm{Aut}(E)$ we admit a nontrivial action of $f$ on the base
$M$.}. A $G$-structure on $E$ is determined by a homomorphism
$G\rightarrow \mathrm{Aut}(E)$. Given a $G$-structure, we refer to
$E$ as a \textit{$G$-equivariant vector bundle} or just a
$G$-bundle for short. Note that both the total space $E$ and the
base $M$ of a $G$-bundle are $G$-manifolds and the canonical
projection $p: E\rightarrow M$ is a $G$-map.

The $G$-bundles  form a category (for a given $G$) whose morphisms
are $G$-equivariant bundle homomorphisms. Upon restricting to the
$G$-bundles over a fixed base manifold, we get a subcategory for
which all the usual operations on vector bundles are naturally
defined: the direct sum, tensor product, and dualization. The
parity reversion  functor $E\mapsto \Pi E$, being compatible with
bundle automorphisms, is one more natural operation in the
category of the $G$-vector bundles.

An even (odd) section $s:M\rightarrow (\Pi)E$ of a $G$-vector
bundle is called equivariant if it is a $G$-map. The equivariant
sections form a vector subspace $\Gamma^G(E)\subset \Gamma(E)$ in
the space of all smooth sections. They can also be viewed as fixed
points of the natural action of $G$ on $\Gamma(E)$.

$G$-vector bundles are of frequent occurrence. Here are some
examples.

\begin{ex}
Any tangent bundle $TM$ can be regarded as a
$\mathrm{Diff}(M)$-equivariant vector bundle and so are all the
associated tensor bundles. If $M$ is a $G$-manifold, then $TM$
carries the canonical $G$-structure induced by the homomorphisms
$G\rightarrow \mathrm{Diff}(M)\rightarrow \mathrm{Aut}(TM)$.
\end{ex}

\begin{ex}
Let $M$ be a $G$-manifold and  $\sigma: G\rightarrow GL(V)$ be a
representation of $G$ in a vector space $V$. Then $E=M\times V$ is
a trivial $G$-bundle; here $G$ acts on $E$ by
$g(x,v)=(gx,\sigma(g)v)$.
\end{ex}

\begin{ex}
If $E\rightarrow M$ is a vector bundle, then the $n$-fold tensor
product $E^{\otimes n}$ is an $S_n$-bundle, where $S_n$ is the
symmetric group permuting the factors of the product and $M$ is
regarded as a trivial $S_n$-manifold.
\end{ex}

In this paper, we mostly consider equivariant vector bundles
associated to the Lie group $\mathbb{R}^{0|1}$. If $\pi:
E\rightarrow M$ is such a vector bundle, then the infinitesimal
action of $\mathbb{R}^{0|1}$ on $E$ and $M$ is generated by some
homological vector fields $Q_E$ and $Q_M$, respectively; in so
doing, $Q_M$ appears to be a unique vector field $\pi$-related to
$Q_E$. For this reason, we will refer in sequel to
$\mathbb{R}^{0|1}$-equivariant vector bundles as $Q$-bundles or
vector bundles endowed with a $Q$-structure. Let
$\mathcal{T}(E)=\oplus_{n,m} \mathcal{T}^{n,m}(E)$ denote the
algebra of the $E$-tensor fields on $M$; by definition,
$\mathcal{T}^{n,m}(E)$ is the $C^{\infty}(M)$-module of sections
of $(E^\ast)^{\otimes n}\otimes E^{\otimes m}$. Then the action of
$Q_E$ on $E$, being fiberwise linear, endows  $\mathcal{T}(E)$
with the structure of a differential tensor algebra. The
differential $\delta: \mathcal{T}(E)\rightarrow \mathcal{T}(E)$ is
completely determined by its action on local coordinate functions
$\{x^i\}$ on $U\subset M$ and a frame of sections $\{s_a\}$ in
$E|_U$:
\begin{equation*}\label{}
    \delta x^i=Q_M^i(x)\,,\qquad \delta s_a=-(-1)^{\epsilon_a}\Lambda_a^b(x ) s_b\,.
\end{equation*}
Here $Q_M=\pi_\ast(Q_E)$ is the aforementioned  homological vector
field on the base $M$, and the odd matrix $\Lambda$ is called the
\textit{twisting element}. It follows from the relation
$\delta^2=0$ that in each trivializing chart the twisting element
obeys the Maurer-Cartan equation
\begin{equation}\label{MC}
    Q_M\Lambda =\Lambda^2 \,.
\end{equation}
If  $\{y^a\}$ are the fiber coordinates on $E$ dual to the frame
sections $\{s_a\}$, then the homological vector field $Q_E$ can be
locally written as
\begin{equation}\label{QE}
Q_E=Q_M^i(x)\frac{\partial}{\partial
x^i}+y^a\Lambda_a^b(x)\frac{\partial}{\partial y^b}\,.
\end{equation}

Writing $C^\infty_{lin}(E)$ for the space of smooth functions on
$E$ that are linear in the fiber coordinates, we can invariantly
characterize $Q_E$ as a homological vector field whose action
preserves the subspace $C_{lin}^\infty(E)\subset C^{\infty}(E)$
(and, as a consequence, the subalgebra $C^\infty(M)\subset
C^\infty(E)$).

Since $Q_E$ is odd, its action is always integrable to the action
of $\mathbb{R}^{0|1}$ and so there is a one-to-one correspondence
between the $\mathbb{R}^{0|1}$-equivariant vector bundles and the
vector bundles endowed with the action of the linear homological
vector field (\ref{QE}). From this perspective, a morphism of
$Q$-vector bundles is just a fiberwise linear map $\phi:
E_1\rightarrow E_2$ such that $Q_2=\phi_\ast(Q_1)$. In what
follows, it will be  convenient to refer to a $Q$-vector bundle as
a pair $(E,\delta)$, where $E$ is a vector bundle and $\delta$ is
a differential on the algebra of $E$-tensor fields. $Q$-invariant
$E$-tensors (i.e., $\mathbb{R}^{0|1}$-equivariant sections of the
associated tensor bundle) are by definition cocycles of the
differential $\delta$. Clearly, the corresponding group of
$\delta$-cohomology, defined by
$H(E,\delta)=\mathrm{Ker}\delta/\mathrm{Im}\delta$, inherits the
structure of tensor algebra, $H(E,\delta)=\bigoplus
H^{n,m}(E,\delta)$.

Let us give some examples of $Q$-vector bundles.

\begin{ex}
If $E$ is a vector bundle, then $(E,0)$ is a $Q$-vector bundle
with trivial differential.
\end{ex}

\begin{ex}
The tangent bundle $TM$ of a $Q$-manifold has a canonical
$Q$-structure defined by the Lie derivative $L_Q$ along $Q$.
\end{ex}

\begin{ex}
Let $E$ be a vector bundle over a $Q$-manifold and suppose $E$ to
admit a flat connection $\nabla$. Then $\nabla_Q^2=0$, and we have
the $Q$-vector bundle $(E,\nabla_Q)$.
\end{ex}

Any morphism $\phi: E_1\rightarrow E_2$ of $Q$-vector bundles
induces a homomorphism on sections\footnote{We simply identify
$\Gamma(E^\ast)$ with $C_{lin}^{\infty}(E)$, then $\phi_\ast$ is
given by  the pullback of $\phi$.}
\begin{equation}\label{hom}
    \phi_\ast: \Gamma(E^\ast_2)\rightarrow \Gamma(E^\ast_1)\,,
\end{equation}
where $E^\ast_{1,2}$ are the $Q$-vector bundles dual to $E_{1,2}$.
The homomorphism $\phi_\ast$ in its turn gives rise to a
homomorphism of the cohomology groups
$\mathrm{Ker}\delta/\mathrm{Im}\delta$ of the differential
$\delta$ on the space of sections $\Gamma(E^\ast_{1,2})$.
Generally there is no natural way to extend (\ref{hom}) to the
full algebras of $E$-tensor fields except when $\phi$ is a
fiberwise isomorphism. In this last case, we have a unique
homomorphism $\widetilde{\phi}_\ast: \Gamma(E_2)\rightarrow
\Gamma(E_1)$ such that
\begin{equation*}\label{}
    \langle \phi_\ast(u),\widetilde{\phi}_\ast(v)\rangle_1=\langle
    u,v\rangle_2\circ \phi|_{M_1}\,, \qquad \forall  u\in \Gamma(E^{\ast}_2)\,,\quad
    \forall v\in \Gamma(E_2)\,.
\end{equation*}
Here the triangle brackets $\langle\cdot , \cdot \rangle $ stand
for pairing between  the spaces $ \Gamma(E)$ and $\Gamma(E^\ast)$.
The pair $\varphi=(\phi_\ast,\widetilde{\phi}_\ast)$ defines a
homomorphism $\varphi: \mathcal{T}(E_2)\rightarrow
\mathcal{T}(E_1)$ of differential tensor algebras. Thus, we get a
homomorphism of $\delta$-cohomology groups
\begin{equation*}\label{}
    \varphi_\ast: H(E_2,\delta_2)\rightarrow H(E_1,\delta_1)\,.
\end{equation*}
We emphasize  that the last homomorphism can be defined only when
$\mathrm{rank} E_1=\mathrm{rank} E_2$ and $\phi: E_1\rightarrow
E_2$ is a fiberwise isomorphism.

Let us now specify the constructions above to the tangent bundle
of a $Q$-manifold. In this case, we have a canonical $Q$-structure
on $TM$ identified with the operator of Lie derivative
$\delta=L_Q$. The differential tensor algebra of $M$ was denoted
by $\mathcal{T}(M)$ and the $Q$-cohomology group was denoted by
$H_Q(M)$ (see the previous section). Upon choosing a symmetric
affine connection $\nabla$ on $M$, we can define the algebra of
local covariants $\mathcal{A}\subset \mathcal{T}(M)$. As a tensor
algebra $\mathcal{A}$ is generated by  two sequences of tensor
fields $\{\nabla^n Q\}$ and $\{\nabla^n R\}$,  $R$ being the
curvature of $\nabla$. Since $\delta \mathcal{A}\subset
\mathcal{A}$, $\mathcal{A}$ is a differential subalgebra of
$\mathcal{T}(M)$ with the cohomology group $H(\mathcal{A})$. The
natural inclusion
$$
i: \mathcal{A}\rightarrow \mathcal{T}(M)
$$
induces the homomorphism in cohomology
$$
i_\ast: H(\mathcal{A})\rightarrow H_Q(M)\,.
$$

In the previous section, we gave a preliminary definition of the
characteristic classes of $Q$-manifolds as elements of
$\mathrm{Im}\, i_\ast$ that are represented by the so-called
universal cocycles. The adjective ``universal'' implies  that the
$\delta$-closedness condition  is satisfied by virtue of the
integrability condition $[Q,Q]=0$ alone. With this interim
definition, we can readily show the independence of the
characteristic classes of the choice of $\nabla$.

\begin{thm}\label{independing thm} The characteristic classes of a
$Q$-manifold  do not depend on the choice of symmetric connection
and hence they are invariants of the $Q$-manifold itself.
\end{thm}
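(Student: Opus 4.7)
The plan is to use an affine interpolation of connections and to produce a chain homotopy via the product $Q$-manifold $M \times \mathbb{R}$, in direct analogy with the de Rham argument for the connection-independence of Chern--Weil classes. Let $\nabla_0$ and $\nabla_1$ be two symmetric affine connections on $M$. Since the space of symmetric connections is affine, their difference $S := \nabla_1 - \nabla_0$ is a symmetric $(2,1)$-tensor field, and $\nabla_t := \nabla_0 + tS$, $t \in [0,1]$, is a smooth path of symmetric connections joining them. For a universal cocycle $\mathcal{C}$, set $\mathcal{C}_t := \mathcal{C}(\nabla_t, Q) \in \mathcal{A}$; the universality of $\mathcal{C}$ gives $L_Q \mathcal{C}_t = 0$ for every $t$. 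If a smooth family $\Psi_t \in \mathcal{T}(M)$ can be constructed satisfying
\begin{equation*}
\frac{d}{dt} \mathcal{C}_t = L_Q \Psi_t,
\end{equation*}
then integrating over $t \in [0,1]$ yields $\mathcal{C}_1 - \mathcal{C}_0 = L_Q \int_0^1 \Psi_t\,dt$, proving that the two representatives define the same class in $H_Q(M)$.

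To construct $\Psi_t$ I would pass to the product $Q$-manifold $\widetilde{M} = M \times \mathbb{R}$, equipped with the trivially extended homological vector field $\widetilde{Q} = Q$ (having no $\partial_t$-component) and with the symmetric affine connection $\widetilde{\nabla}$ whose Christoffel symbols read $\widetilde{\Gamma}^k_{ij}(x,t) = \Gamma^k_{ij}(x) + tS^k_{ij}(x)$ in the $M$-directions and vanish on every index involving $t$. The restriction of $\widetilde{\nabla}$ to the slice $M \times \{t\}$ is then $\nabla_t$, and the crucial consequence of the vanishing of the mixed Christoffels is that $\widetilde{\nabla}_{\partial_t}\phi = \partial_t \phi$ for every tensor $\phi$ on $\widetilde{M}$. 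Applying the same universal polynomial produces $\widetilde{\mathcal{C}} := \mathcal{C}(\widetilde{\nabla},\widetilde{Q}) \in \mathcal{T}(\widetilde{M})$, which is $L_{\widetilde{Q}}$-closed by universality. Using the product splitting $T\widetilde{M} = TM \oplus T\mathbb{R}$, I would decompose $\widetilde{\mathcal{C}}$ according to the number of lower indices lying in the $T\mathbb{R}$ factor: the zero-$t$-index stratum is exactly the $t$-family $\mathcal{C}_t$ on $M$, while the one-$t$-index stratum, after stripping the single $dt$ factor, furnishes the candidate $\Psi_t$. Extracting the one-$t$-index part of $L_{\widetilde{Q}}\widetilde{\mathcal{C}} = 0$ and using the commutation $[\widetilde{Q},\partial_t] = 0$ then delivers the desired chain-homotopy formula, in exact parallel with the identity $\partial_t \alpha = d_M \beta$ derived from $d(\alpha + dt \wedge \beta) = 0$ in the classical de Rham setting.

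The main obstacle will be the algebraic bookkeeping of this decomposition: general tensor fields on $\widetilde{M}$ have no canonical restriction to a slice, so one must verify from the product splitting that the extracted $\Psi_t$ is genuinely a smooth family of tensor fields on $M$, and that all contributions involving more than one $t$-index drop out of the chain-homotopy identity. Both properties follow from the rigidity of the product structure together with the fact that $\widetilde{\nabla}_{\partial_t}$ acts as an ordinary partial derivative, which keeps the $t$-index stratification invariant under both $L_{\widetilde{Q}}$ and under the recursive computation of $\widetilde{\nabla}^n \widetilde{Q}$ and $\widetilde{\nabla}^m \widetilde{R}$. Once the chain homotopy is in place, the construction applies uniformly to every universal cocycle and thereby establishes connection-independence for the entire subspace $\mathrm{Im}\,i_\ast \subset H_Q(M)$ of characteristic classes.
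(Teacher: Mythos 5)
Your overall strategy --- interpolate $\nabla_t=\nabla_0+tS$, seek a homotopy $\Psi_t$ with $\partial_t\mathcal{C}_t=L_Q\Psi_t$, and integrate --- is exactly the paper's, but the mechanism you propose for producing $\Psi_t$ does not work, and the failure is precisely at the point where you declare the $t$-index stratification to be ``invariant under $L_{\widetilde Q}$.'' In the de Rham model the identity $\partial_t\alpha=d_M\beta$ comes out of $d(\alpha+dt\wedge\beta)=0$ only because the differential on the product contains the extra term $dt\wedge\partial_t$, which \emph{raises} the $dt$-degree and thereby couples adjacent strata. Your $\widetilde Q=Q$ has no $\partial_t$-component, so $L_{\widetilde Q}(dt)=d(i_Q\,dt)=0$ and $L_{\widetilde Q}$ preserves each $t$-index stratum separately. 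Consequently $L_{\widetilde Q}\widetilde{\mathcal{C}}=0$ decomposes into the statements that each stratum is individually $\delta$-closed: the zero-stratum part gives $\delta\mathcal{C}_t=0$ (already known) and the one-stratum part gives $\delta\Psi_t=0$, not $\partial_t\mathcal{C}_t=\delta\Psi_t$. The commutation $[\widetilde Q,\partial_t]=0$ only tells you that $\partial_t\mathcal{C}_t$ is $\delta$-closed, which again you already knew; it does not make it $\delta$-exact. So your construction proves nothing beyond closedness of $\mathcal{C}_1-\mathcal{C}_0$.

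The repair is the device the paper actually uses: you need a homological vector field on the product that \emph{does} contain a $\partial_t$-term. The naive candidate $Q+\partial_t$ is inadmissible because it is not homogeneous of odd parity. The paper therefore takes $\widetilde M=M\times\mathbb{R}^{1|1}$ with one even coordinate $t$ and one odd coordinate $\theta$, and sets $\widetilde Q=Q+\theta\,\partial_t$; this is odd, satisfies $[\widetilde Q,\widetilde Q]=0$ (since $\theta^2=0$ and $Q$ is independent of $t,\theta$), and its Lie derivative now couples the $\theta$-independent and $\theta$-linear parts of the projected cocycle $\pi'\mathcal{C}_{\widetilde\nabla}[\widetilde Q]=\mathcal{C}_{\nabla^t}[Q]+\theta\Psi$, yielding exactly $\partial_t\mathcal{C}_{\nabla^t}[Q]=\delta\Psi$. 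With that substitution (and the adapted connection $\widetilde\nabla$ restricting to $\nabla^t$ on the slices, as you already set up), the rest of your argument goes through.
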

\begin{proof} Let $\mathcal{C}_{\nabla_0}[Q]$ and
$\mathcal{C}_{\nabla_1}[Q]$ be two universal cocycles that differ
only by the choice of the connection. Consider the direct product
of $M$ and the linear superspace $\mathbb{R}^{1|1}$ with one even
coordinate $t$ and one odd coordinate $\theta$. The product
structure of $\widetilde{M}=M\times \mathbb{R}^{1|1}$ induces the
decomposition of the linear space of tensor fields:
\begin{equation*}
\mathcal{T}(\widetilde{M})=\mathcal{T}^{\ '}(\widetilde{M})
\oplus\mathcal{T}^{\ ''}(\widetilde{M})\,,
\end{equation*}
where $\mathcal{T}^{\ '}(\widetilde{M})$ is the space of sections
of the vector bundle
\begin{equation*}
T^{\bullet, \bullet}M\times\mathbb{R}^{1|1}\longrightarrow
\widetilde{M}\,.
\end{equation*}
Simply stated, the elements of $\mathcal{T}^{\ '}(\widetilde{M})$
are the smooth families of tensor fields on $M$ parameterized by
``points'' of $\mathbb{R}^{1|1}$. So, we have two natural
projections
\begin{equation}\label{projectors}
\pi{'}:\mathcal{T}(\widetilde{M})\longrightarrow\mathcal{T}^{\
'}(\widetilde{M}), \qquad
\pi{''}:\mathcal{T}(\widetilde{M})\longrightarrow\mathcal{T}^{\
''}(\widetilde{M}).
\end{equation}

Equip the supermanifold $\widetilde{M}=M\times \mathbb{R}^{1|1}$
with the homological vector field  $ \widetilde{Q}=Q +\theta
\partial_t $ and an adapted connection $\widetilde{\nabla}$. The latter is completely specified
by the covariant derivatives:
\begin{equation*}
\widetilde{\nabla}_{\frac{\partial}{\partial t}}V = \partial_t
V\,,\quad \widetilde{\nabla}_{\frac{\partial}{\partial\theta}}V =
{\partial_\theta V}\,,\quad
\widetilde{\nabla}_XV=\nabla^t_X(\pi'V)+X(\pi{''}V)\,,
\end{equation*}
where $V\in \mathfrak{X}(\widetilde{M})$,  $X\in \mathfrak{X}(M)$
and  $\nabla^t=t\nabla_1+(1-t)\nabla_0$ is the one-parameter
family of connections on $M$. Clearly, the operator
$\widetilde{\delta}=\mathcal{L}_{\widetilde{Q}}$ commutes with the
projectors (\ref{projectors}):
\begin{equation}\label{proj-delta}
\widetilde{\delta} \ \pi ' = \pi ' \widetilde{\delta}\,, \qquad
\widetilde{\delta} \ \pi'' = \pi'' \widetilde{\delta}\,.
\end{equation}

Consider now the universal cocycle
$\mathcal{C}_{\widetilde{\nabla}}[\widetilde{Q}]$. Due to the
specific structure of $\widetilde{Q}$ and $\widetilde{\nabla}$ we
have
\begin{equation}\label{proj-C}
\pi'(\mathcal{C}_{\widetilde{\nabla}}[\widetilde{Q}]) =
\mathcal{C}_{{\nabla^t}}[Q]+\theta \Psi
\end{equation}
for some $\Psi \in \mathcal{T}^{\ '}(\widetilde{M})$ obeying
$\partial_\theta \Psi=0$. Since
$\widetilde{\delta}\mathcal{C}_{\widetilde{\nabla}}[\widetilde{Q}]=0$,
it follows from (\ref{proj-delta}) and (\ref{proj-C}) that
\begin{equation*}
\widetilde{\delta} \ \pi'
\mathcal{C}_{\widetilde{\nabla}}[\widetilde{Q}] =
\widetilde{\delta} (\mathcal{C}_{{\nabla^t}}[Q]+\theta \Psi)= 0\,.
\end{equation*}
The last equation is equivalent to the following ones:
\begin{equation*}\label{}
    \delta (\mathcal{C}_{{\nabla^t}}[Q])=0\,,\qquad \partial_t
    \mathcal{C}_{\nabla^t}[Q]=\delta \Psi\,.
\end{equation*}
Integrating the second equation by $t$ from $0$ to $1$, we get
\begin{equation*}\label{}
\mathcal{C}_{\nabla_1}[Q] - \mathcal{C}_{\nabla_0}[Q]=\delta
\int_0^1 dt \Psi\,.
\end{equation*}
Thus, the $\delta$-cohomology class of the universal cocycle
$\mathcal{C}_{\nabla}[Q]$ does not depend on the choice of
symmetric connection.
\end{proof}

\section{The classifying $Q$-space}

Let $V$ be a finite-dimensional superspace with coordinates
${y^i}$. Denote by $L_0(V)$ the Lie algebra of formal vector
fields on $V$ vanishing at the origin. The generic element of
$L_0(V)$ reads
\begin{equation*}\label{v}
v=\sum_{n=1}^\infty y^{i_n}\cdots y^{i_1}v_{i_1\cdots
i_n}^j\frac{\partial}{\partial y^j}\,.
\end{equation*}
One can regard the expansion coefficients $v_{i_1\cdots i_n}^j\in
\mathbb{R}$ as coordinates in the infinite-dimensional superspace
$L_0(V)=L^0(V)\oplus L^1(V)$.

As usual, we can associate to  $L_0(V)$ the linear manifold
$\mathbb{M}=\Pi L_0(V)$ with coordinates $\{c_{i_1\cdots
i_n}^j\}$. By definition, $\epsilon(c_{i_1\cdots
i_n}^j)=\epsilon(v_{i_1\cdots i_n}^j)+1$. The Lie algebra
structure on ${L}_0(V)$ is then encoded by the homological vector
field
\begin{equation}\label{QQ}
\mathbb{Q}= \sum_{n=1}^{\infty}
\sum_{l=1}^{n}\binom{n}{l}(-1)^{\epsilon_{i_1}+\ldots+\epsilon_{i_l}}c_{i_{1}\ldots
i_{l}}^{m} c_{m i_{l+1}\ldots i_{n}}^{j}\frac{\partial}{\partial
c_{i_{1}\ldots i_{n}}^{j}}
\end{equation}
on $\mathbb{M}$. Besides, $\mathbb{M}$ is provided with the
natural action of $GL(V)$. Since $\mathbb{Q}$ is obviously
invariant under the $GL(V)$-transformations, one can think of it
as an equivariant section of $GL(V)$-vector bundle $T\mathbb{M}$.

Taking ${V}$ as typical fiber, consider the trivial vector bundle
$\mathbb{E}=\mathbb{M}\times V$ with the diagonal action of
$GL(V)$. We can assign $\mathbb{E}$ with a $Q$-structure starting
with the homological vector field $\mathbb{Q}$ on the base. To
this end, we need to specify a twisting element $\Lambda\in
\mathcal{T}^{1,1}(\mathbb{E})$ satisfying the Maurer-Cartan
equation (\ref{MC}). Using the natural frame
$v_i=\partial/\partial y^i$ in $V$, we set
\begin{equation}\label{twist}
\delta v_i= -(-1)^{\epsilon_i}c^j_iv_j\,.
\end{equation}
Since the action of the homological vector field commutes with the
general linear transformations, we can regard $\mathbb{E}$ as an
$\mathbb{R}^{0|1}\times GL(V)$-equivariant vector bundle. For
reasons clarified below we refer to $(\mathbb{E}, \delta)$ as a
\textit{classifying $Q$-space}.

Associated to $\mathbb{E}$ is the differential algebra
$\mathcal{T}(\mathbb{E})=\bigoplus \mathcal{T}^{n,m}(\mathbb{E})$
of $\mathbb{E}$-tensor fields. Denote by
$\mathcal{T}(\mathbb{E})^{\mathrm{inv}}\subset
\mathcal{T}(\mathbb{E})$ the tensor subalgebra of
$GL(V)$-equivariant sections of $T^{\bullet,\bullet}\mathbb{E}$
or, what is the same, $GL(V)$-invariant $\mathbb{E}$-tensors. The
subalgebra $\mathcal{T}(\mathbb{E})^{\mathrm{inv}}$ is also
invariant under the action of the differential $\delta$. Therefore
we can speak about the differential tensor algebra
$(\mathcal{T}(\mathbb{E})^{\mathrm{inv}}, \delta)$  and the
corresponding algebra of $\delta$-cohomology
$H(\mathbb{E})^{\mathrm{inv}}$. It should be emphasized that we
treat $\mathbb{M}$ as a formal manifold, so the components of
$\mathbb{E}$-tensors are given by formal power series in the
coordinates $\{c_{i_1\cdots i_n}^j\}$. As an example, consider the
following sequence of $GL(V)$-invariant tensor fields with linear
dependence of coordinates:
\begin{equation}\label{basis}
    C_n= dy^{i_n}\otimes \cdots \otimes dy^{i_1}c_{i_1\cdots i_n}^j\frac{\partial}{\partial
    y^j} \;\in\; \mathcal{T}^{n,1}(\mathbb{E})^{\mathrm{inv}}\,.
\end{equation}

It follows from the first main theorem of invariant theory
\cite{Weyl}, \cite{Fu1} that $\mathbb{E}$-tensors (\ref{basis})
constitute a multiplicative basis of
$\mathcal{T}(\mathbb{E})^{\mathrm{inv}}$ so that any
$GL(V)$-invariant tensor is made up algebraically of $\{C_n\}$.
The space $\mathcal{T}(\mathbb{E})^{\mathrm{inv}}$ is naturally
graded by the subspaces $\mathcal{T}_r(\mathbb{E})^{\mathrm{inv}}$
consisting of the homogeneous tensor polynomials of degree $r$ in
$\{C_n\}$. This grading makes the space of $GL(V)$-invariant
tensor fields into the cochain complex: $$\delta:
\mathcal{T}_r(\mathbb{E})^{\mathrm{inv}}\rightarrow
\mathcal{T}_{r+1}(\mathbb{E})^{\mathrm{inv}}\,.$$

Besides, we have an increasing filtration of
$\mathcal{T}(\mathbb{E})^{\mathrm{inv}}$ by the sequence of
subcomplexes
\begin{equation}\label{Tfilt}
   0\subset  F_1\mathcal{T}(\mathbb{E})^{\mathrm{inv}}\subset
    F_2\mathcal{T}(\mathbb{E})^{\mathrm{inv}}\subset\cdots\subset
    F_\infty \mathcal{T}(\mathbb{E})^{\mathrm{inv}}=\mathcal{T}(\mathbb{E})^{\mathrm{inv}} \,,
\end{equation}
where the $\mathbb{E}$-tensors from  $F_n
\mathcal{T}(\mathbb{E})^{\mathrm{inv}}$ are generated by $C_1,...,
C_n$.

Observe that the tensor algebra
$\mathcal{T}(\mathbb{E})^{\mathrm{inv}}$ is not freely  generated
by $\{C_n\}$ and it is not hard to write some tensor polynomials
in $C$'s that vanish identically for some values of  $\dim V$.
Denote the space of all such polynomials  by $I$. Then, the second
main theorem of invariant theory \cite{Weyl}, \cite{Fu1} ensures
that $I\cap F_k \mathcal{T}_r(\mathbb{E})^{\mathrm{inv}}=0$ for
$|\dim V| \gg k,r$. Informally speaking, the algebra
$\mathcal{T}(\mathbb{E})^{\mathrm{inv}}$ becomes free as the
dimension of $V$ goes to infinity. We will return to this point in
Sec. 5.

\section{The characteristic map}

Let $M$ be a flat $Q$-manifold, i.e., a smooth manifold endowed
with a homological vector field $Q$ and a flat symmetric
connection $\partial$. We can assume without loss of generality
that $M$ is simply connected (otherwise replace $(M,Q)$ by its
universal covering $Q$-manifold $(\widetilde{M}, \widetilde{Q})$,
where $\widetilde{Q}$ is the lift of $Q$ with respect to the
covering map\footnote{To define the universal covering of a
supermanifold $M$, we identify $M$ with the total space of an odd
vector bundle $E\rightarrow M_B$, where $M_B$ is the  body of $M$.
If $\widetilde{M}_B$ is a universal covering of $M_B$ and $p:
\widetilde{M}_B\rightarrow M_B$ is the corresponding projection,
then the universal covering of $M$ is, by definition, the
supermanifold $\widetilde{M}$ associated to the total space of the
pullback bundle $p_* (E)$.} $p: \widetilde{M}\rightarrow M$).
Under these assumptions $M$ is a parallelizable manifold and the
Lie derivative $\delta = L_Q$ makes $TM$ into a trivial $Q$-vector
bundle endowed with a flat connection.

In this section, we use the data above to construct a
characteristic map of $TM$ to the classifying $Q$-space
$\mathbb{E}=\mathbb{M}\times V$, where $V$ is the typical fiber of
$TM$. The construction  is not completely canonical and depends on
the choice of a trivialization $\varphi: TM\rightarrow M\times V$.
Nonetheless, we show that the pullback of the characteristic map
gives rise to a well-defined homomorphism in the cohomology
$H(\mathbb{E})^{\mathrm{inv}}\rightarrow H_Q(M)$ that depends
neither on the flat connection nor on the choice of
trivialization. The construction goes as follows.

Let  $\varphi: TM \rightarrow M\times V$ be a  trivialization and
$\varphi_\ast :  \Gamma(M\times V)\rightarrow  \mathfrak{X}(M)$ is
the induced homomorphism on sections. Given a flat symmetric
connection $\partial$, we say that the trivialization $\varphi$ is
\textit{compatible} with $\partial$, if the pullback of any
constant section  $v\in \Gamma(M\times V)$ is a covariantly
constant section of $TM$, i.e.,  $\partial_X \varphi_\ast(v)=0$
for all $ X\in \mathfrak{X}(M)$. Clearly, the pullback of constant
sections defines a global frame in $TM$, which is completely
determined by its value at any single point $p\in M$. (The frame
in $T_qM$ results from the parallel transport of a frame in $T_pM$
along any path joining $p$ to $q$.) Since any two frames in $T_pM$
are related to each other by an invertible linear transformation,
we can identify the set of all the compatible trivializations with
the group $GL(V)$.

The isomorphism $\varphi$ is naturally prolonged to the
isomorphism of associated tensor bundles
\begin{equation*}
\varphi: T^{n,m}M\rightarrow M\times V^{n,m}\,.
\end{equation*}
Here $V^{n,m}=V^{\otimes n}\otimes (V^\ast)^{\otimes m}$ is the
standard $GL(V)$-module.   Given an even (odd) section $s:
M\rightarrow (\Pi)T^{n,m}M$, one can define the so-called Gauss
map $M\rightarrow (\Pi) V^{n,m}$ through the composition of maps
\begin{equation*}
\xymatrix@1{{M}\ar[r]^-s&{(\Pi)T^{n,m}M}\ar[r]^-{\varphi}&
{M\times (\Pi)V^{n,m}}\ar[r]^-{p}& {(\Pi)V^{n,m}}}\,,
\end{equation*}
with $p$ being the projection onto the second factor.   Given the
homological vector field $Q$ and  the flat symmetric connection
$\partial$, we can build  the sequence of tensors $\partial^n Q\in
{\mathcal{T}}^{n,1}(M)$. Taken together the tensor fields
$\{\partial^nQ\}_{n=1}^\infty$ define the Gauss map
\begin{equation}\label{Gauss}
{\mathbb{G}}: M \rightarrow  \mathbb{M}
\end{equation}
of the flat $Q$-manifold $M$ to the base $\mathbb{M}$ of the
classifying $Q$-space $\mathbb{E}$. Note that the Gauss map
$\mathbb{G}$ depends on the trivialization $\varphi$, even though
we do not indicate this explicitly. Let $\{x^i\}$ be  a local
coordinate system on $M$ adapted to the flat connection $\partial$
in the sense that the Christoffel symbols of $\partial$ vanish and
let $V$ spans the coordinate vector fields $\{\partial/\partial
x^i\}$. Then the coordinate expression of the map (\ref{Gauss}) is
\begin{equation}\label{Gauss-in-coord}
c^j_{i_1\cdots i_n}=\partial_{i_1}\cdots
\partial_{i_n}Q^j(x)\,,\qquad n=1,2,...\,.
\end{equation}
Further, we can trivially extend the Gauss map ${\mathbb{G}}$  to
the bundle map
\begin{equation}\label{Charmap}
 \widehat{{\mathbb{G}}}: TM \rightarrow \mathbb{E}
\end{equation}
by setting
\begin{equation}\label{Charmap1}
\widehat{{\mathbb{G}}}: \xymatrix@1{TM\ar[r]^-{\varphi}&M\times
V\ar[r]^-{{\mathbb{G}}\times \mathrm{id}}& \mathbb{M}\times
V}=\mathbb{E}\,.
\end{equation}
Again, the map $\widehat{{\mathbb{G}}}$ depends on the chosen
trivialization $\varphi$. Since $\widehat{{\mathbb{G}}}$ is a
fiberwise isomorphism, it gives rise to the pullback map on
sections, which then extends to the  homomorphism of the full
tensor algebras
\begin{equation}\label{Charmap2}
\widehat{{\mathbb{G}}}_\ast: \mathcal{T}(\mathbb{E})\rightarrow
\mathcal{T}(M)\,.
\end{equation}

\begin{thm}\label{Th2}
The map $\widehat{{\mathbb{G}}}$ defined by Eq.(\ref{Charmap1}) is
a morphism of $Q$-vector bundles.
\end{thm}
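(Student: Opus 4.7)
The plan is to unpack the definition of a $Q$-vector bundle morphism from Section~2 and verify the two resulting compatibilities separately. A $Q$-morphism is a fiberwise linear map whose push-forward sends one homological vector field to the other, and since $\widehat{\mathbb{G}}=(\mathbb{G}\times\mathrm{id})\circ\varphi$ is manifestly fiberwise linear, the task reduces to showing that it intertwines the canonical homological vector field on $TM$ (read in the compatible flat trivialization $\varphi$) with the homological vector field of $\mathbb{E}$. By the explicit coordinate expression~(\ref{QE}), this splits into two pieces: the induced base map $\mathbb{G}\colon M\to\mathbb{M}$ must satisfy $\mathbb{G}_*Q=\mathbb{Q}$, and the pull-back of the twisting element of $\mathbb{E}$ must reproduce the twisting element of $TM$.

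For the base map, work in a chart $\{x^i\}$ adapted to $\partial$ and test the identity on the generators $c^j_{i_1\cdots i_n}$ of $C^\infty(\mathbb{M})$, which pull back to $\partial_{i_1}\cdots\partial_{i_n}Q^j$ according to~(\ref{Gauss-in-coord}). Comparing
\begin{equation*}
Q\bigl(\partial_{i_1}\cdots\partial_{i_n}Q^j\bigr)=Q^k\partial_k\partial_{i_1}\cdots\partial_{i_n}Q^j
\end{equation*}
with $\mathbb{G}^*(\mathbb{Q}\,c^j_{i_1\cdots i_n})$, obtained by substituting~(\ref{Gauss-in-coord}) into~(\ref{QQ}), the required equality is precisely the $n$-fold graded-Leibniz differentiation of the integrability condition $Q^k\partial_kQ^j=0$ (equivalent to $[Q,Q]=0$ applied to coordinates): the binomial coefficient $\binom{n}{l}$ in~(\ref{QQ}) counts the ways of distributing $l$ derivatives onto $Q^k$, and the sign $(-1)^{\epsilon_{i_1}+\cdots+\epsilon_{i_l}}$ comes from commuting these derivatives past the odd factor $Q^k$.

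For the fiber part, compute $L_Q(\partial/\partial x^i)=[Q,\partial_i]=-(-1)^{\epsilon_i}(\partial_iQ^j)\,\partial/\partial x^j$, where the terms containing $\partial_k\partial_i Q^{\bullet}$ cancel thanks to the graded symmetry of second partial derivatives. Thus the twisting element of the canonical $Q$-structure on $TM$, expressed in the frame $\{\partial/\partial x^i\}$ supplied by the compatible trivialization $\varphi$, has components $\partial_iQ^j$. By~(\ref{twist}) the twisting element on $\mathbb{E}$ has components $c^j_i$, and these pull back along $\mathbb{G}$ to precisely $\partial_iQ^j$. Combined with $\mathbb{G}_*Q=\mathbb{Q}$, this completes the intertwining. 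The main technical difficulty is the sign-and-combinatorics bookkeeping in the second paragraph: one has to verify that~(\ref{QQ}) is exactly the Chevalley--Eilenberg-type formula for the Lie algebra $L_0(V)$ of formal vector fields, which is ultimately the motivation for its definition.
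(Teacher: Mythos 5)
Your proposal is correct and follows essentially the same route as the paper: the paper also verifies the chain property $\delta\circ\widehat{\mathbb{G}}_\ast=\widehat{\mathbb{G}}_\ast\circ\delta$ on the generators, namely on the coordinate functions $c^j_{i_1\cdots i_n}$ (where it reduces to the identity (\ref{binom}), i.e.\ the $n$-fold differentiated integrability condition $\partial_{i_1}\cdots\partial_{i_n}Q^2=0$) and on the frame sections $\partial/\partial y^i$ (where both sides give $-(-1)^{\epsilon_i}\partial_iQ^j\,\partial/\partial x^j$, matching your twisting-element computation). Your split into a base-map condition and a fiber (twisting-element) condition is just a repackaging of the same two generator checks.
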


\begin{proof} We only need to show that the homomorphism (\ref{Charmap2})
obeys the chain property
\begin{equation}\label{ChainPr}
\delta \circ\widehat{{\mathbb{G}}}_\ast =
\widehat{{\mathbb{G}}}_\ast \circ\delta\,.
\end{equation}
In view of the Leibniz rule, it is enough to check the last
operatorial identity on the coordinate functions on $\mathbb{M}$
and the  frame sections  of $\mathbb{E}$.  This can be done
directly by making use of the  coordinate description of
$\widehat{{\mathbb{G}}}$ and $\delta$ given by Eqs. (\ref{QQ}),
(\ref{twist}), (\ref{Gauss-in-coord}). Applying (\ref{ChainPr}) to
the coordinate functions $c_{i_1\cdots i_n}^j\in
C^{\infty}(\mathbb{M})$, we get
\begin{equation}\label{binom}
Q^m\partial_m \partial_{i_1}\cdots\partial_{i_n}Q^j=
\sum_{l=1}^{n} \binom{n}{l}
(-1)^{\epsilon_{i_1}+\cdots+\epsilon_{i_l}+1}
\partial_{(i_{1}}\cdots \partial_{i_{l}}Q^{m} \partial_{m}\partial_{i_{l+1}}\cdots
\partial_{i_{n})}Q^{j},
\end{equation}
where the parentheses around indices denote symmetrization in the
graded sense. But the last equality is just the differential
consequence  of the integrability condition for the homological
vector field:
\begin{equation*}\label{}
    \partial_{i_1}\cdots\partial_{i_n}Q^2=0\,.
\end{equation*}

If now $\{y^i\}$ are linear coordinates in $V$ such that
$\widehat{{\mathbb{G}}}_\ast(\partial/\partial
y^i)=\partial/\partial x^i$, then
$$
    (\delta \circ \widehat{\mathbb{G}}_\ast)\left(\frac{\partial}{\partial
    y^i}\right)=-(-1)^{\epsilon_i}\frac{\partial Q^j}{\partial x^i}\frac{\partial}{\partial
    x^j},
$$
and the same expression results from the right hand side of
(\ref{ChainPr}):
\begin{equation*}
\begin{array}{l}
\displaystyle
(\widehat{\mathbb{G}}_\ast\circ\delta)\left(\frac{\partial}{\partial
    y^i}\right)=\widehat{\mathbb{G}}_\ast\left((-1)^{1+\epsilon_i}c^j_i\frac{\partial}{\partial
    y^j}\right)\\[4mm]
\displaystyle
=\widehat{\mathbb{G}}_\ast((-1)^{1+\epsilon_i}c^j_i)\cdot\widehat{\mathbb{G}}_\ast\left(\frac{\partial}{\partial
y^j}\right)
    =(-1)^{1+\epsilon_i}\frac{\partial Q^j}{\partial x^i}\frac{\partial}{\partial
    x^j}\,.
\end{array}
\end{equation*}
\end{proof}
Of course, the proof above is rather technical and involves a
local coordinate consideration. A more ``conceptual"  proof of
Theorem \ref{Th2} is given in Appendix A.

Given a flat $Q$-manifold $M$, we call (\ref{Charmap}) a
\textit{characteristic map}.

 To simplify our notation we will denote
the restriction of $\widehat{{\mathbb{G}}}_\ast$ to
$\mathcal{T}(\mathbb{E})^{\mathrm{inv}}\subset
\mathcal{T}(\mathbb{E})$ by
\begin{equation}\label{chi}
\chi:\mathcal{T}(\mathbb{E})^{\mathrm{inv}}\rightarrow
\mathcal{T}(M)\,.
\end{equation}
 As a  differential tensor algebra,
$\mathcal{T}(\mathbb{E})^{\mathrm{inv}}$ is generated by the
linear $GL(V)$-invariant tensor fields (\ref{basis}). The
homomorphism $\chi$ assigns to each  generator $C_n\subset
\mathcal{T}^{n,1}(\mathbb{E})^{\mathrm{inv}}$, $n>0$,  the tensor
field
\begin{equation}\label{cov}
\chi (C_n)=\partial^n Q\in \mathcal{T}^{n,1}(M)
\end{equation}
given by the covariant derivatives of the homological vector
field. Supplementing the tensors (\ref{cov}) by the homological
vector field itself, we get the full set of generators for the
differential tensor algebra of local covariants
$\mathcal{A}\subset \mathcal{T}(M)$ associated to a flat
$Q$-manifold (see Sec. 1). The homomorphism (\ref{chi}) induces a
well-defined homomorphism in cohomology
\begin{equation}\label{chii}
    \chi_\ast :
H(\mathbb{E})^{\mathrm{inv}}\rightarrow H_Q(M)\,.
\end{equation}

\begin{thm}
The homomorphism  (\ref{chii}) is independent of a flat connection
and a compatible trivialization.
\end{thm}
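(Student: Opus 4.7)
The plan mirrors that of Theorem \ref{independing thm}, split into two parts: a tautological argument for the trivialization and a homotopy argument for the flat connection.

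First, fix $\partial$. Any two compatible trivializations $\varphi_0,\varphi_1$ differ by a constant element of $GL(V)$, since each is determined by its value at one point with the rest fixed by parallel transport. The induced change of the Gauss map $\mathbb{G}\colon M\to\mathbb{M}$ is the natural $GL(V)$-action on $\mathbb{M}$. Because every $c\in\mathcal{T}(\mathbb{E})^{\mathrm{inv}}$ is by definition $GL(V)$-invariant, the equality $\chi_{\varphi_1}(c)=\chi_{\varphi_0}(c)$ holds already on the chain level, with no cohomology argument required.

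Next, let $(\partial_0,\varphi_0)$ and $(\partial_1,\varphi_1)$ be two choices with the $\partial_i$ flat and symmetric and the $\varphi_i$ compatible. Following the proof of Theorem \ref{independing thm}, pass to $\widetilde{M}=M\times\mathbb{R}^{1|1}$ with $\widetilde{Q}=Q+\theta\,\partial_t$, and equip $\widetilde{M}$ with a flat symmetric connection $\widetilde{\partial}$ and a compatible trivialization $\widetilde{\varphi}$ that restrict to $(\partial_0,\varphi_0)$ at $t=\theta=0$ and to $(\partial_1,\varphi_1)$ at $t=1,\theta=0$. By Theorem \ref{Th2}, the associated Gauss map yields a chain map $\widetilde{\chi}\colon\mathcal{T}(\mathbb{E})^{\mathrm{inv}}\to\mathcal{T}(\widetilde{M})$ intertwining $\delta$ with $\widetilde{\delta}=L_{\widetilde{Q}}$. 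For a $\delta$-cocycle $c$, applying the projector $\pi'$ of (\ref{projectors}) and using its commutation with $\widetilde{\delta}$ recorded in (\ref{proj-delta}) produces a decomposition $\pi'\widetilde{\chi}(c)=\chi_t(c)+\theta\Psi$ with $\partial_\theta\Psi=0$, where $\chi_t$ denotes the characteristic map built from the $t$-slice of $(\widetilde{\partial},\widetilde{\varphi})$. The identity $\widetilde{\delta}\widetilde{\chi}(c)=0$ then yields $\partial_t\chi_t(c)=L_Q\Psi$, and integrating in $t$ gives $\chi_1(c)-\chi_0(c)=L_Q\int_0^1\Psi\,dt$, i.e.\ $[\chi_0(c)]=[\chi_1(c)]$ in $H_Q(M)$.

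The main technical obstacle is the construction of $(\widetilde{\partial},\widetilde{\varphi})$ on $\widetilde{M}$. The parallel frame of a flat symmetric connection is necessarily involutive, since torsion-freeness forces $[e_i,e_j]=0$ on parallel vector fields; consequently a naive convex combination of the two parallel frames is not involutive and does not yield a flat symmetric connection. The way out is local: on charts admitting adapted coordinates for both $\partial_0$ and $\partial_1$, interpolate between the two coordinate maps and declare $\widetilde{\partial}$ to have vanishing Christoffel symbols in the interpolated coordinates combined with the flat structure on $\mathbb{R}^{1|1}$. A partition of unity patches these local models, and the local primitives $\Psi$ glue consistently because the $GL(V)$-invariance of $c$ kills the ambiguity in the local coordinate choices.
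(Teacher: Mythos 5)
Your first half --- independence of the compatible trivialization --- is exactly the paper's argument: the compatible trivializations for a fixed flat connection form a single $GL(V)$-orbit, this action translates into the standard $GL(V)$-action on $\mathbb{E}$, and $GL(V)$-invariance of the elements of $\mathcal{T}(\mathbb{E})^{\mathrm{inv}}$ makes the images agree already on the chain level. No complaints there.

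The connection half is where the gap is. The paper does not build a flat interpolation at all: it observes that $\chi_\ast(A)$ is a universal cocycle and invokes Theorem \ref{independing thm} wholesale, with the (generally non-flat) linear family $\nabla^t$. You are right to sense a tension here --- the cocycles produced by $\chi$ are guaranteed $\delta$-closed only in the flat category, so a path of connections that leaves that category is not innocent --- but the repair you propose does not close it. A partition-of-unity patching of locally flat models is again a pointwise convex combination of connections, which is precisely the operation you correctly rejected in the global case for destroying flatness; localizing does nothing to restore it on chart overlaps. The ``interpolation between the two coordinate maps'' need not be an immersion at intermediate times, so it need not define adapted coordinates for any connection at all. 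And the claim that the local primitives $\Psi$ glue ``because $GL(V)$-invariance kills the ambiguity'' is an assertion, not an argument: the ambiguity in a local primitive on an overlap is by a $\delta$-cocycle, which has nothing to do with the $GL(V)$-action. As written, the final paragraph does not produce the data $(\widetilde{\partial},\widetilde{\varphi})$ needed to run the integration, so connection-independence is not established. The economical route is the paper's: treat $\chi(A)$ as a universal cocycle and cite Theorem \ref{independing thm} directly; if one insists on justifying closedness of $\chi(A)$ along the non-flat interpolation, that is exactly the extension problem of Sec.~6, resolved for the $B$- and $C$-generators by the curvature-corrected representatives and for the $A$-generators by Theorem \ref{A-P} and the homotopy argument of Proposition \ref{6.2}.
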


\begin{proof} The homomorphism $\chi_\ast$ takes a $\delta$-cocycle $A\in
\mathcal{T}(\mathbb{E})^{\mathrm{inv}}$ to the universal
$\delta$-cocycle $\chi_\ast(A)\in \mathcal{A}$ and the
independence of a flat connection follows from Theorem
\ref{independing thm}.

If now $M\times V$ and $M\times V'$ are two compatible
trivializations of $TM$, then $V'=g(V)$ for some linear
isomorphism $g$. Thus the group $GL(V)$ acts transitively on the
set of all trivializations compatible with a given flat
connection. This action then translates to the natural action of
$GL(V)$ on $\mathbb{E}$ and, by definition, leaves invariant  the
tensor fields of $\mathcal{T}(\mathbb{E})^{\mathrm{inv}}$.
\end{proof}

\begin{rem}
Actually, the homomorphism (\ref{chi})  makes sense for arbitrary
flat $Q$-manifolds, not necessarily simply connected: It just
defines a way of constructing local covariants from the elementary
ones $\{\partial^nQ\}$. The chain property $\chi\circ
\delta=\delta\circ\chi$ is apparently a local condition following
from the identity (\ref{binom}). However, for a simply connected
$Q$-manifold $M$ we have a nice geometric interpretation for
(\ref{chi}), (\ref{chii}) as homomorphisms induced by a morphism
of $Q$-vector bundles.
\end{rem}

To give the final definition of the characteristic classes we need
some facts concerning the structure of the tensor algebra
$\mathcal{A}$. Let $\mathcal{A}'\subset \mathcal{A}$ denote the
subalgebra generated by the homological vector field $Q$ (as a
linear space, $\mathcal{A}'$ is spanned by the tensor powers
$\{Q^{\otimes n}\}$) and let $\mathcal{A}''\subset \mathcal{A}$ be
the subalgebra generated by the elementary covariants $\partial^n
Q$ with $n>0$.

\begin{prop} With the definitions above,

    (a) $\mathcal{A}''=\mathrm{Im}\,\chi$;

    (b) both $\mathcal{A}'$ and $\mathcal{A}''$ are
    differential subalgebras of $\mathcal{A}$;

    (c) $\mathcal{A}=\mathcal{A}'\otimes \mathcal{A}''$, and hence $H(\mathcal{A})= H(\mathcal{A}')\otimes
    H(\mathcal{A}'')$.

\end{prop}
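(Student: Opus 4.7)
For part (a), since $\chi\colon\mathcal{T}(\mathbb{E})^{\mathrm{inv}}\to\mathcal{T}(M)$ is a homomorphism of tensor algebras, its image is the tensor subalgebra of $\mathcal{T}(M)$ generated by $\{\chi(C_n)\}_{n\ge 1}=\{\partial^nQ\}_{n\ge 1}$. The first main theorem of invariant theory, already invoked to produce the generating set (\ref{basis}), guarantees that the $C_n$'s generate $\mathcal{T}(\mathbb{E})^{\mathrm{inv}}$, so $\mathrm{Im}\,\chi$ coincides with the algebra $\mathcal{A}''$ by its very definition.

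For part (b), I would observe that $\mathcal{A}'$ actually lies entirely inside $\ker L_Q$: we have $L_QQ=[Q,Q]=0$ by integrability, and since $L_Q$ is a graded derivation of the tensor product commuting with contraction and index permutation, every tensor word built from $Q$ alone is $L_Q$-closed. Stability of $\mathcal{A}''=\mathrm{Im}\,\chi$ under $L_Q$ is then immediate from (a) and the chain property $\chi\circ\delta=L_Q\circ\chi$ of Theorem~\ref{Th2}, applied to the $\delta$-stable subalgebra $\mathcal{T}(\mathbb{E})^{\mathrm{inv}}$.

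Part (c) is the substantive step. The heart of the argument is the following reduction lemma: any contraction of the upper index of a $Q$-factor against a lower index of some $\partial^n Q$-factor lies in $\mathcal{A}''$. Indeed, differentiating the integrability identity,
\begin{equation*}
\partial_{j_1}\cdots\partial_{j_{n-1}}(Q^i\partial_iQ^k)=0,
\end{equation*}
and expanding by the graded Leibniz rule, the unique term that retains a bare $Q$-factor, namely $Q^i\partial_i\partial_{j_1}\cdots\partial_{j_{n-1}}Q^k$, gets expressed as a signed sum of products $(\partial^aQ)(\partial^bQ)$ with $a,b\ge 1$, all of whose factors lie in $\mathcal{A}''$. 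Iterating on the number of $Q$-factors still participating in mixed contractions (each step strictly decreases it), every element of $\mathcal{A}$ is rewritten as a linear combination of pure tensor products $Q^{\otimes k}\otimes T$ with $k\ge 0$ and $T\in\mathcal{A}''$. This yields surjectivity of the multiplication map $\mu\colon\mathcal{A}'\otimes\mathcal{A}''\to\mathcal{A}$. Injectivity I would verify pointwise: a relation $\sum_i Q^{\otimes k_i}\otimes T_i=0$ separates by tensor type into independent constraints for each fixed $k$, and at a point $p$ where $Q(p)\ne 0$ the injectivity of tensoring with a nonzero vector forces $\sum_{k_i=k}T_i(p)=0$; since all $\partial^nQ$ also vanish on the interior of $\{Q=0\}$, continuity propagates this to all of $M$. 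Once $\mu$ is an isomorphism of differential tensor algebras over $\mathbb{R}$, the algebraic Künneth formula delivers $H(\mathcal{A})=H(\mathcal{A}')\otimes H(\mathcal{A}'')$.

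The principal obstacle is the iterative reduction in part (c): carefully tracking graded signs, binomial coefficients, and symmetrizations through repeated use of the Leibniz rule, and verifying that no new bare $Q$-factors reappear along the way so that the induction actually terminates in finitely many steps.
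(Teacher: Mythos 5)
Your proposal follows essentially the same route as the paper: part (a) is definitional, part (b) uses $L_QQ=0$ plus the chain property of $\chi$, and the key step in part (c) is exactly the paper's observation that the differentiated integrability identity (Eq.~(\ref{binom})) converts any contraction of $Q$ against $\partial^nQ$ into an element of $\mathcal{A}''$, after which the K\"unneth formula finishes the argument. Your extra pointwise injectivity check is a supplement the paper omits (it simply asserts uniqueness of the representation $u\otimes w$, which is best read at the level of universal covariants rather than of tensor fields on a particular $M$, where evaluation of an odd vector field at a point is delicate), but this does not change the substance or the approach.
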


\begin{proof} The equality $(a)$ is obvious and
 the subalgebra $\mathcal{A}''\subset \mathcal{A}$ is closed
under the action of $\delta$ as the homomorphic image of the
differential tensor algebra
$\mathcal{T}(\mathbb{E})^{\mathrm{inv}}$. It is also clear that
$\delta \mathcal{A}'=0$ and $(b)$ is proved.

Next we note that  the elements of $\mathcal{A}$ are obtained by
applying the tensor operations to the generators
$\{\partial^nQ\}$; in so doing, we can ignore the elements
involving contractions of the homological vector field $Q$ with
$\partial^n Q$. Indeed, as is seen from Eq. (\ref{binom}), any
such contraction is equal to an element of $\mathcal{A}''$ and we
are lead to conclude that each element of $\mathcal{A}$ is
uniquely represented by a linear combination of tensor products
$u\otimes w$, where $u\in \mathcal{A}'$ and $w\in \mathcal{A}''$.
Applying the K\"unneth formula to the tensor product of complexes
$\mathcal{A}'\otimes \mathcal{A}''$ completes the proof of $(c)$.
\end{proof}

As is seen, the  contribution of the homological vector field
\textit{per se} to the algebra of local covariants as well as its
cohomology can trivially be factor out, so that  the most
interesting universal cocycles (i.e. ones involving derivatives of
the homological vector field) are centered in $\mathcal{A}''$.
This motivates us to identify the characteristic classes of flat
$Q$-manifolds with the group  $H(\mathcal{A}'')$ rather than the
whole group $H(\mathcal{A})$ as it was done in the Introduction.
Then we have the following

\begin{defn}
The characteristic classes of a flat $Q$-manifold are the
$Q$-cohomology classes belonging to the image of the homomorphism
(\ref{chii}).
\end{defn}

\section{Stable characteristic classes and graph complexes}\label{StGr}

The characteristic classes will carry a valuable piece of
information about the structure of $Q$-manifolds provided that the
$\delta$-cohomology groups $H^{n,m}_r(\mathbb{E})^{\mathrm{inv}}$
of the classifying space are simultaneously wide enough  and
effectively computable. The next logical step is thus  to find an
explicit description for the equivariant cohomology of
$\mathbb{E}$. Unfortunately, the computation of the groups
$H^{n,m}_r(\mathbb{E})^{\mathrm{inv}}$ for arbitrary $m,n,r$ and
$\dim V$ appears to be a hard problem yet to be solved. The
problem, however,  becomes much simpler in the so-called
\textit{stable range of dimensions}, where by stability we mean
$|\dim V|\gg r,n$. For an even vector space  $V$ the corresponding
stable cohomologies were computed by Fuks \cite{Fu2} (see also
\cite{FF}, for evaluation of the lower bound of stable
dimensions). It turns out that the method of \cite{Fu2} applies
well to an arbitrary superspace $V$, not necessarily even. For
completeness sake, below we re-expose  Fuks' results  for general
superspaces in a form convenient for our subsequent discussion.

The first step of our computation consists in the reinterpretation
of the $GL(V)$-equivariant cohomology of $\mathbb{E}$ as the
cohomology of a certain graph complex. We consider the graphs
satisfying the following special properties:
\begin{itemize}
    \item[(a)] each edge is equipped with a direction;
    \item[(b)] each vertex has exactly one outgoing and at least one
    incoming edge;
    \item[(c)]  we admit outgoing and incoming \textit{legs},
    i.e., edges bounded by a vertex from one side and
    having a ``free end'' on the other;
    \item[(d)] the vertices, the incoming
    and outgoing legs are numbered (numbering each of these three
    sets
    we define a \textit{decoration} of a graph).
\end{itemize}
The graphs need not be connected and loops are allowed.

The  \textit{genus} of a graph $\Gamma$ is the first Betty number
of its geometric realization as a one-dimensional cell complex (with
one extra 0-cell added for each leg).  Accordingly, the zero Betty
number counts the number of connected components of $\Gamma$. Two
graphs are considered to be equivalent if they are isomorphic as
cell complexes and the corresponding isomorphism respects both the
decoration and orientation of edges.

It is easy to see that the restriction (b) imposed on the vertices
of our graphs implies that the genus of each \textit{connected}
graph is either 0 or 1. We refer to the graphs of these two groups
as \textit{tree} and \textit{cyclic}, respectively, see
Fig.\ref{graphs}.  Notice that each tree graph has the only
outgoing leg, while a cyclic graph has none.

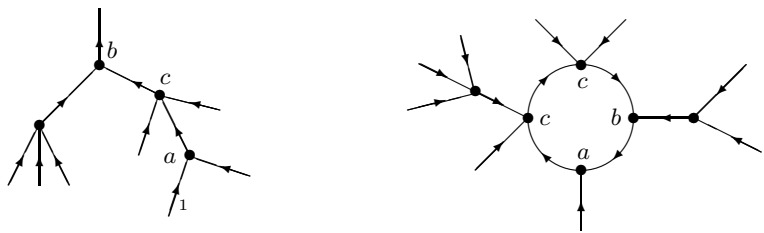
\begin{figure}[t]
\unitlength 1mm % = 2.85pt
\begin{picture}(60,31)(15,-14)
%\put(-8,-8){\line(1,1){7.5}}
\put(4,-8){\line(-1,2){4}}\put(-4.1,-8){\line(1,2){4}}\put(0,-8){\line(0,1){7.5}}
%\put(-8,-8){\vector(1,1){3.75}}
\put(4,-8){\vector(-1,2){2}}\put(-4.1,-8){\vector(1,2){2}}\put(0,-8){\vector(0,1){4}}
\put(0,0){\circle*{1.5}}

\put(8,8){\circle*{1.5}}
\put(8,8){\line(0,1){7.5}}\put(0,0){\line(1,1){7.5}}\put(16,4){\line(-2,1){7.5}}
\put(8,8){\vector(0,1){3.75}}\put(0,0){\vector(1,1){3.75}}\put(16,4){\vector(-2,1){3.75}}

\put(16,4){\circle*{1.5}}
\put(13.2,-4){\line(1,3){2.8}}\put(20,-4){\line(-1,2){3.75}}\put(24,2){\line(-4,1){7.5}}
\put(13.2,-4){\vector(1,3){1.4}}\put(20,-4){\vector(-1,2){2}}\put(24,2){\vector(-4,1){4}}

\put(20,-4){\circle*{1.5}}
\put(17.2,-12){\line(1,3){2.8}}\put(28,-6.8){\line(-3,1){8}}
\put(17.2,-12){\vector(1,3){1.4}}\put(28,-6.8){\vector(-3,1){4}}
\put(18.5,-11){\scriptsize$_1$}

\put(9,9){{\scriptsize$b$}} \put(16,5.5){{\scriptsize$c$}}
\put(16.5,-5){{\scriptsize$a$}}
%%________________________________________________________________________________________________
\put(65,1){\circle*{1.5}} \put(67.25,6.15){\vector(1,1){.5}}
\put(58,-6){\line(1,1){7}}\put(58,4.5){\line(2,-1){7}}
\put(58,-6){\vector(1,1){3.75}}\put(58,4.5){\vector(2,-1){3.75}}
\put(58,4.5){\circle*{1.5}}
\put(50.5,8.25){\line(2,-1){7}}\put(49,2){\line(4,1){9}}\put(56,12){\line(1,-4){2}}
\put(50.5,8.25){\vector(2,-1){3.75}}\put(49,2){\vector(4,1){5}}\put(56,12){\vector(1,-4){1}}

\put(72,-6){\circle*{1.5}} \put(76.85,6,1){\vector(1,-1){.5}}
\put(72,-14){\line(0,1){7.5}} \put(72,-14){\vector(0,1){4}}

\put(72,8){\circle*{1.5}} \put(76.75,-4.15){\vector(-1,-1){.5}}
\put(66,14){\line(1,-1){6.25}}\put(78,14){\line(-1,-1){6.25}}
\put(66,14){\vector(1,-1){3.5}}\put(78,14){\vector(-1,-1){3.5}}

\put(79,1){\circle*{1.5}} \put(67.25,-4.15){\vector(-1,1){.5}}
\put(87,1){\line(-1,0){7.5}} \put(87,1){\vector(-1,0){4.5}}
\put(87,1){\circle*{1.5}}
\put(94,8){\line(-1,-1){7.5}}\put(96,-3.5){\line(-2,1){9.5}}
\put(94,8){\vector(-1,-1){4}}\put(96,-3.5){\vector(-2,1){4}}

\put(72,1){\circle{20}}

\put(66.5,0){{\scriptsize$c$}} \put(71.5,5){{\scriptsize$c$}}
\put(71.5,-4.5){{\scriptsize$a$}} \put(76,0){{\scriptsize$b$}}
\end{picture}
\caption{\protect\small{Some typical examples of tree and cyclic
graphs. The labels on the vertices are from the proof of Theorem
\ref{FT}}.}\label{graphs}
\end{figure}

The algebra $\mathcal{T}(\mathbb{E})^{\mathrm{inv}}$ admits a very
helpful visualization. Namely, to each generator
\begin{equation}\label{gen}
C_n=dy^{i_n}\otimes dy^{i_{n-1}}\otimes\cdots\otimes
dy^{i_1}c_{i_1\cdots i_n}^j\frac{\partial}{\partial y^j}
\end{equation}
of ${\mathcal{T}}(\mathbb{{E}})^{{\mathrm{inv}}}$ we associate a
one-vertex planar graph
\begin{equation}\label{corolla}
\begin{split}
\unitlength 1mm % = 2.85pt
\begin{picture}(60,20)(-20,-3)
\put(-10,6){$\gamma_n$ =}
\put(0,0){\line(1,1){7.5}}\put(4,0){\line(1,2){3.75}}\put(16,0){\line(-1,1){7.5}}
\put(0,0){\vector(1,1){3.75}}\put(4,0){\vector(1,2){1.75}}\put(16,0){\vector(-1,1){3.75}}
\put(8,8){\circle*{1.5}} \put(8,8){\line(0,1){7.5}}
\put(8,8){\vector(0,1){3.75}} \put(7,0){{\scriptsize$\ldots$}}
\put(-1.5,-2){{\scriptsize$_1$}} \put(2.5,-2){{\scriptsize$_2$}}
\put(16.3,-2){{\scriptsize$_n$}}
\end{picture}
\end{split}
\end{equation}
called \textit{corolla}, whose incoming legs $1,2,...,n$ symbolize
the covariant indices $i_1,...,i_n$, while  the unique outgoing
leg corresponds to the contravariant index $j$. A linear basis in
$\mathcal{T}(\mathbb{E})^{\mathrm{inv}}$ is obtained from the
generators (\ref{gen}) by means of the tensor operations. These
have obvious graphical counterparts. The tensor product of $k$
generators
\begin{equation}\label{t-prod}
C=C_{n_1}\otimes C_{n_2}\otimes \cdots \otimes C_{n_k}
\end{equation}
is represented by the \textit{ordered} disjoint union
\begin{equation*}
\Gamma=\gamma_{n_1}\sqcup \gamma_{n_2}\sqcup\cdots
\sqcup\gamma_{n_k}
\end{equation*}
 of the corresponding corollas.  The
lexicographical  ordering in writing the disjoint union assigns
the unique vertex of $\gamma_{n_i}\subset \Gamma$ with the number
$i$, and the numberings of the incoming and outgoing legs of
$\gamma_{n_i}$ are shifted by $n_1+\cdots + n_{i-1}$ and $i-1$,
respectively. In such a way  the graph $\Gamma$ gets a natural
decoration.

To describe a contraction of the tensor (\ref{t-prod}), say, one
corresponding to a covariant index $i$ and a contravariant index
$j$, we are just gluing the free end of the outgoing leg $j$ of
$\Gamma$ with the free end of the incoming leg $i$ to produce an
oriented edge joining a pair of vertices; in so doing, the
uncontracted legs are renumbered by consecutive numbers according
to their order in $\Gamma$.

Finally, a permutation of indices in (\ref{t-prod}) results in
exchange of numbers among the corresponding legs.

It is quite clear that any of the graphs can be produced from the
corollas (\ref{corolla}) with the help of elementary operations
above and to any such graph $\Gamma$ one can assign a unique
tensor  $C\in \mathcal{T}(\mathbb{E})^{\mathrm{inv}}$. The
correspondence $\Gamma \mapsto C$ defines a map, denoted by $R$,
from the set of all graphs obeying  ($a$)-($d$) to the set of
$GL(V)$-invariant $\mathbb{E}$-tensors.

\begin{ex}\label{ex} By way of illustration, let us apply the correspondence map $R$ to the graph $\Gamma$ of the form
\begin{equation}\label{ex1}
\begin{split}
\unitlength 1mm % = 2.85pt
\begin{picture}(60,25)(-20,-2)
\put(0,0.5){\line(1,1){7.5}}\put(8,0.5){\line(0,1){7.5}}\put(16,0.5){\line(-1,1){7.5}}
\put(0,0.5){\vector(1,1){3.75}}\put(8,0.5){\vector(0,1){3.5}}\put(16,0.5){\vector(-1,1){3.75}}
\put(8,8.2){\circle*{1.5}} \put(8.5,8.8){\line(1,1){3.75}}
\put(8.5,8.8){\vector(1,1){2}} \put(24,0.5){\line(-1,1){12}}
\put(24,0.5){\vector(-1,1){6}} \put(12,12){\circle*{1.5}}
\put(12,12){\line(0,1){9}} \put(12,12){\vector(0,1){4.5}}
\put(5,9){{\scriptsize$1$}} \put(13,13.5){{\scriptsize$2$}}
\put(-1.5,-2){{\scriptsize$_1$}} \put(7.6,-2){{\scriptsize$_2$}}
\put(15.8,-2){{\scriptsize$_3$}} \put(24,-2){{\scriptsize$_4$}}
\end{picture}
\end{split}
\end{equation}
To write down the  tensor $R(\Gamma)\in
\mathcal{T}(\mathbb{E})^{\mathrm{inv}}$  we first cut the graph
into the ordered disjoint union of two corollas $\gamma_3\sqcup
\gamma_2$; the order is defined by the numeration of the vertices.
To each corolla $\gamma_n$, $n=2,3$, we associate a generator $C_n
\in \mathcal{T}^{n,1}(\mathbb{E})^{\mathrm{inv}}$ such that the
map $R$ takes $\gamma_3\sqcup\gamma_2$ to the tensor product
$C_3\otimes C_2$. Then we glue the cut edge down to produce the
original graph $\Gamma$, graphically

\unitlength 1mm % = 2.85pt
\begin{picture}(60,28)(-0,-5)
\put(0,0.5){\line(1,1){7.5}}\put(8,0.5){\line(0,1){7.5}}\put(16,0.5){\line(-1,1){7.5}}
\put(0,0.5){\vector(1,1){3.75}}\put(8,0.5){\vector(0,1){3.5}}\put(16,0.5){\vector(-1,1){3.75}}

\put(8,8.2){\circle*{1.5}}
\put(8.5,8.8){\line(1,1){3.75}}\put(24,0.5){\line(-1,1){12}}
\put(8.5,8.8){\vector(1,1){2}}\put(24,0.5){\vector(-1,1){6}}

\put(12,12){\circle*{1.5}} \put(12,12){\line(0,1){9}}
\put(12,12){\vector(0,1){4.5}}

\put(5,9){{\scriptsize$1$}} \put(13,13.5){{\scriptsize$2$}}

\put(-1.5,-2){{\scriptsize$_1$}}\put(7.4,-2){{\scriptsize$_2$}}\put(16,-2){{\scriptsize$_3$}}
\put(24,-2){{\scriptsize$_4$}}

\put(26,8){{$=$}}
%%________________________________________________________________________________________________
\put(32,0.5){\line(1,1){7.5}}\put(40,0.5){\line(0,1){8}}\put(48,0.5){\line(-1,1){7.5}}
\put(32,0.5){\vector(1,1){3.75}}\put(40,0.5){\vector(0,1){4}}\put(48,0.5){\vector(-1,1){3.75}}
\put(40,8){\circle*{1.5}}\put(37,9){\scriptsize$1$}
\put(40,8){\line(0,1){7.5}} \put(40,8){\vector(0,1){3.75}}

\put(30.5,-2){{\scriptsize$_1$}}
\put(39.4,-2){{\scriptsize$_2$}}\put(48,-2){{\scriptsize$_3$}}
\qbezier[40](39.9,15.5)(43,23)(48,8)
\qbezier[25](48,8)(51,2)(54,5.5)

\put(54,5.5){\line(1,1){7.5}}\put(70,5.5){\line(-1,1){7.5}}
\put(54,5.5){\vector(1,1){3.75}}\put(70,5.5){\vector(-1,1){3.75}}
\put(62,13){\circle*{1.5}}\put(63,14){\scriptsize$2$}
\put(62,13){\line(0,1){7.5}} \put(62,13){\vector(0,1){3.75}}

\put(70,3){{\scriptsize$_4$}}

\put(72,8){{$=$}}
%%________________________________________________________________________________________________
\put(78,0.5){\line(1,1){7.5}}\put(86,0.5){\line(0,1){8}}\put(94,0.5){\line(-1,1){7.5}}
\put(78,0.5){\vector(1,1){3.75}}\put(86,0.5){\vector(0,1){4}}\put(94,0.5){\vector(-1,1){3.75}}
\put(86,8){\circle*{1.5}}\put(83,9){\scriptsize$1$}
\put(86,8){\line(0,1){7.5}} \put(86,8){\vector(0,1){3.75}}

\put(76.5,-2){{\scriptsize$_1$}}
\put(85.4,-2){{\scriptsize$_2$}}\put(94.2,-2){{\scriptsize$_3$}}
\qbezier[40](85.9,15.5)(89,23)(94,8)
\qbezier[17](94,8)(96,0)(100,0)
\qbezier[40](100,0)(117.5,0)(116,5.5)

\put(100,5.5){\line(1,1){7.5}}\put(116,5.5){\line(-1,1){7.5}}
\put(100,5.5){\vector(1,1){3.75}}\put(116,5.5){\vector(-1,1){3.75}}
\put(108,13){\circle*{1.5}}
\put(109,14){\scriptsize$2$}\put(108,13){\line(0,1){7.5}}
\put(108,13){\vector(0,1){3.75}}

\put(98.5,3){{\scriptsize$_4$}}\put(120,3){,}
\end{picture}

\noindent contracting simultaneously the corresponding indices of
$C_3\otimes C_2$. The tensors $\{C_n\}$ being symmetric in lower
indices, it does not matter which incoming leg of $\gamma_2$ is
glued to the outgoing leg of $\gamma_3$.  Finally, we rearrange
the indices according to the order of legs. The result is given by

\begin{equation*}\label{}
\begin{array}{c}
A=(dy^k\otimes dy^j\otimes dy^i c_{ijk}^n
\underset{\rule{0.15mm}{1.5mm}\rule
{13mm}{0.15mm}\rule{0.15mm}{1.5mm}\rule{2mm}{0mm}}{\partial_n)\otimes
(dy^l}\otimes dy^s c_{sl}^m\partial_m
)\\[4mm]
=dy^s\otimes dy^k\otimes dy^j\otimes dy^i a_{ijks}^m
\partial_m\,,
\end{array}
\end{equation*}
where
\begin{equation*}\label{}
    a^m_{ijks}=(-1)^{\epsilon_s}c_{ijk}^nc_{ns}^m\,.
\end{equation*}
Our convention for the pairing of vectors and covectors is
\begin{equation*}\label{}
   \underset{\rule{0.15mm}{1.5mm}\rule
{10mm}{0.15mm}\rule{0.15mm}{1.5mm}\rule{2mm}{0mm}}
{\partial_i\otimes dy^j}:=\langle\partial_i , dy^j
    \rangle=\delta_i^j\,,
\end{equation*}
 and we follow the usual sign rule: if two object of parities
$\epsilon_1$ and $\epsilon_2$ are interchanged then the factor
$(-1)^{\epsilon_1\epsilon_2}$ is inserted.

Observe that applying the correspondence map  $R$ to the graph
(\ref{ex1}) with altered order of vertices (but the same order of
legs) yields
\begin{equation*}\label{}
(dy^l\otimes \underset{\rule{0.15mm}{1.5mm}\rule
{57mm}{0.15mm}\rule{0.15mm}{1.5mm}\rule{2mm}{0mm}}{ dy^s
c_{sl}^m\partial_m)\otimes(dy^k\otimes dy^j\otimes dy^i c_{ijk}^n
\partial_n})=-A\,.
\end{equation*}

Clearly, using the cut-and-paste algorithm above, one can
unambiguously reconstruct a tensor field by a graph with an
arbitrary set of vertices.
\end{ex}

Let us now introduce  the real vector space
$\widetilde{\mathcal{G}}$ spanned by all the graphs satisfying the
conditions ($a$)-($d$) above. The tensor operations and the
correspondence map $R$ extend to the space
$\widetilde{{\mathcal{G}}}$ by linearity. The space $\widetilde{\mathcal{G}}$,
however, is not what we actually need, since the homomorphism $R:
\widetilde{\mathcal{G}}\rightarrow
\mathcal{T}(\mathbb{E})^{\mathrm{inv}}$ is far from being a
bijection. For example, if $\Gamma'$ is  the graph obtained from a
graph  $\Gamma$ by transposition  of two labels on vertices, then
$$
R(\Gamma')= - R(\Gamma)\,.
$$
The minus sign is due to the fact that the elementary covariants
$\{C_n\}$ depicted by the corollas $\{\gamma_n\}$ are Grassman odd
(c.f. Example \ref{ex}). This motivates us to introduce the
quotient space $\mathcal{G}=\widetilde{\mathcal{{G}}}/\sim$ with
respect to the equivalence relation $\Gamma'\sim -\Gamma$. The
tensor operations or, more precisely, their graphical
representation pass through the quotient making $\mathcal{G}$ into
an \textit{abstract tensor algebra}\footnote{Unfortunately, there
is no commonly accepted name for this object.  A more expressive
term would be desirable. As we will see in a moment $\mathcal{G}$
is not just an algebra but a cochain complex, whose coboundary
operator respects the tensor operations. So, an appropriate name
might be the \textit{differential algebra of graphs} (DAG). }
freely generated by the corollas (\ref{corolla}). Concerning the
general concept of an ``abstract tensor calculus'' we refer the
reader to the recent papers \cite{DM}, \cite{Markl}. As a linear
space, $\mathcal{G}$ splits into the direct sum of finite-dimensional subspaces:
\begin{equation*}\label{}
    \mathcal{G}= \bigoplus_{n,m,k}\mathcal{G}^{n,m}_k\,,
\end{equation*}
where the superscripts  $n$ and $m$ refer to the number of
incoming and outgoing legs of a graph, while the subscript $k$
indicates the number of the vertices. In addition to this
trigrading the algebra $\mathcal{G}$ possesses an increasing
filtration
\begin{equation*}\label{}
    0\subset F_1\mathcal{G}\subset F_2\mathcal{G}\subset \cdots
    \subset F_\infty \mathcal{G}=\mathcal{G}
\end{equation*}
mimicking the filtration (\ref{Tfilt}) of
$\mathcal{T}(\mathbb{E})^{\mathrm{inv}}$. By definition, the
subalgebra $F_n\mathcal{G}\subset \mathcal{G}$ is generated by the
corollas (\ref{corolla}) with at most $n$ incoming legs.

The map $R$ induces the map from $\mathcal{G}$ to
$\mathcal{T}(\mathbb{E})^{\mathrm{inv}}$, which we will denote by
the same letter $R$. Passing to the quotient
$\mathcal{G}=\widetilde{\mathcal{G}}/\sim$ does not kill the
kernel of $R$ completely. For example, if $\dim V=1|0$, then

\unitlength 1mm % = 2.85pt
\begin{picture}(60,10)(-28,-3)
\put(11,0){$R \Big($}  \put(17,-1){\scriptsize$_1$}
\put(41,-1){\scriptsize$_2$} \put(17,1){\line(1,0){7.5}}
\put(17,1){\vector(1,0){4.5}} \put(25,1){\circle*{1.5}}
\put(23.5,3){{\scriptsize$1$}} \qbezier(25,1)(30,6)(35,1)
\put(30,3.5){\vector(1,0){0.5}} \qbezier(25,1)(30,-4)(35,1)
\put(30,-1.5){\vector(-1,0){0.5}} \put(35,1){\circle*{1.5}}
\put(35,3){{\scriptsize$2$}} \put(35,1){\line(1,0){7.5}}
\put(42,1){\vector(-1,0){4}} \put(43,0){$\Big )= 0\,.$}
\end{picture}

\noindent The last equality, however,  could not take place if the
space $V$ were big enough. More precisely, the second main theorem
of invariant theory \cite{Weyl}, \cite{Fu1} ensures  that the map
\begin{equation}\label{R}
    R: F_r\mathcal{G}^{n,m}\rightarrow
    F_r\mathcal{T}^{n,m}(\mathbb{E})^{\mathrm{inv}}
\end{equation}
is an isomorphism of vector spaces provided that $|\dim V|\gg
r,n$.

Thus, in the stable range of dimensions the homomorphism $R$
allows one to replace the tensor algebra
$\mathcal{T}(\mathbb{E})^{\mathrm{inv}}$ by the graph algebra
$\mathcal{G}$. Then, the pullback of the differential $\delta$ in
$\mathcal{T}(\mathbb{E})^{\mathrm{inv}}$ via the isomorphism
(\ref{R}) gives $\mathcal{G}$ the structure of cochain complex
with respect to the coboundary operator
\begin{equation*}\label{}
\partial=R^{-1}
\delta R: \mathcal{G}^{n,m}_k\rightarrow
\mathcal{G}^{n,m}_{k+1}\,,
\end{equation*} in so doing, the stable cohomology groups of
$\mathcal{T}(\mathbb{E})^{\mathrm{inv}}$ appear to be isomorphic
to the graph cohomology groups $H(\mathcal{G})=\mathrm{Ker}
\partial/ \mathrm{Im} \partial$. Since $\partial$ differentiates the tensor product and
commutes with the contraction it is sufficient to describe its
action on corollas. Translating formulas (\ref{Qmorf}) and
(\ref{Qmorf1}) into the graph language, we get

\begin{equation}\label{diff1}
\begin{split}
\unitlength 1mm % = 2.85pt
\begin{picture}(60,10)(-6,-3)
\put(-9,0){$\partial\left(\rule{0mm}{3mm}\right.$}
\put(10,0){$\left.\rule{0mm}{3mm}\right) =$}
\put(3,1){\line(1,0){7.5}} \put(3,1){\vector(1,0){5}}
\put(3,1){\circle*{1.5}} \put(-4,1){\line(1,0){7.5}}
\put(-4,1){\vector(1,0){4}} \put(2,3){{\scriptsize$m$}}
%_____________________________________________________________________________________________________
\put(19,1){\line(1,0){7.5}} \put(19,1){\vector(1,0){4.5}}
\put(27,1){\circle*{1.5}} \put(27,1){\line(1,0){7.5}}
\put(27,1){\vector(1,0){5}} \put(25.5,3){{\scriptsize$m$}}
\put(35,1){\line(1,0){7.5}} \put(35,1){\vector(1,0){5}}
\put(35,1){\circle*{1.5}} \put(33,3){{\scriptsize$m\!+\!1$}}
\end{picture}
\end{split}
\end{equation}

\noindent
and

\begin{equation}\label{diff2}
\begin{split}
\unitlength 1mm % = 2.85pt
\begin{picture}(60,30)(5,-10)
\put(-9,6){$\partial\left(\rule{0mm}{10mm}\right.$}
\put(17,6){$\left.\rule{0mm}{10mm}\right) = \qquad $
{\large$\sum$}} \put(28,1.75){\scriptsize$J'\sqcup J''=J$}
\put(25,-2.5){\scriptsize$|J'|>0, |J''|>1$}
\put(0,0){\line(1,1){7.5}}\put(4,0){\line(1,2){3.75}}\put(16,0){\line(-1,1){7.5}}
\put(0,0){\vector(1,1){3.75}}\put(4,0){\vector(1,2){1.75}}\put(16,0){\vector(-1,1){3.75}}
\put(8,8){\circle*{1.5}} \put(8,8){\line(0,1){7.5}}
\put(8,8){\vector(0,1){3.75}} \put(7,0){{\scriptsize$\ldots$}}
\put(9,8){{\scriptsize$m$}}
\put(-1,-0.5){$\underbrace{\rule{18mm}{0mm}}_{J}$}
%_____________________________________________________________________________________________________
\put(47,5){\line(1,1){7.5}}\put(51,5){\line(1,2){3.75}}\put(63,5){\line(-1,1){7.5}}
\put(47,5){\vector(1,1){3.75}}\put(51,5){\vector(1,2){1.75}}\put(63,5){\vector(-1,1){3.75}}
\put(55,13){\circle*{1.5}} \put(55,13){\line(0,1){7.5}}
\put(55,13){\vector(0,1){3.75}} \put(54,5){{\scriptsize$\ldots$}}
\put(56,13){{\scriptsize$m$}}
\put(46,4.5){$\underbrace{\rule{12mm}{0mm}}_{J'}$}
\put(55,-3){\line(1,1){7.5}}\put(59,-3){\line(1,2){3.75}}\put(71,-3){\line(-1,1){7.5}}
\put(55,-3){\vector(1,1){3.75}}\put(59,-3){\vector(1,2){1.75}}\put(71,-3){\vector(-1,1){3.75}}
\put(63,5){\circle*{1.5}} \put(62,-3){{\scriptsize$\ldots$}}
\put(64,5){{\scriptsize$m+1$}}
\put(54,-3.5){$\underbrace{\rule{18mm}{0mm}}_{J''}$}
\end{picture}
\end{split}
\end{equation}

\noindent for $|J|>1$. A decoration on $\Gamma\in \mathcal{G}$
induces a decoration on each summand of $\partial\Gamma$  as
follows: splitting of the $m$th vertex produces a pair of new
vertices that are labelled by $m$ and $m+1$, the labels less than
$m$ remain intact while the labels greater than $m$ increase by
1.

A remarkable property of the differential $\partial$ is that it
neither  permutes  the connected components of a decorated graph
nor changes their number.  More precisely, the complex
$\mathcal{G}^{n,m}$ is decomposed into a direct sum of its
subcomplexes $\mathcal{G}^{n,m}_{A_1,...,A_k;B_1,...,B_k}$, where
$\{A_1,...,A_k\}$ and $\{B_1,...,B_k\}$ are partitions of the sets
$\{1,...,n\}$ and $\{1,...,m\}$, and
$\mathcal{G}^{n,m}_{A_1,...,A_k;B_1,...,B_k}$ is generated by
graphs with $k$ connected components, the $l$th of which contains
incoming and outgoing legs labelled by the elements of $A_l$ and
$B_l$, respectively. Notice that the sets $\{A_1,...,A_k\}$ and
$\{B_1,...,B_k\}$ are defined up to simultaneous permutations of
$A_i$ with $A_j$ and $B_i$ with $B_j$, and some of the sets
$A_1,...,A_k$, $B_1,...,B_k$ may be empty. Let
$\bar{\mathcal{G}}=\bigoplus\bar{\mathcal{G}}^{n,m}$ denote the
subcomplex of \textit{connected} graphs. Then it is clear that
\begin{equation*}\label{}
\mathcal{G}^{n,m}_{A_1,...,A_k;B_1,...,B_k}\simeq
\bigotimes_{l=1}^k \bar{\mathcal{G}}^{|A_l|,|B_l|}\,,
\end{equation*}
and by the K\"unneth formula the computation of the graph
cohomology boils down  to the computation of the group
$H(\bar{\mathcal{G}})$.

The complex $\bar{\mathcal{G}}$ in its turn splits into the direct
sum of three subcomplexes:
\begin{equation*}\label{dsum}
\bar{\mathcal{G}}=\bar{\mathcal{G}}'\oplus
\bar{\mathcal{G}}''\oplus \bar{\mathcal{G}}'''\,.
\end{equation*}
Here  $\bar{\mathcal{G}}'$ is spanned by the bivalent graphs build
from $\gamma_1$. The subalgebra $\bar{\mathcal{G}}''$ is generated
by the corollas $\gamma_n$ with $n>1$, that is the valency of each
vertex is at least three.  Finally,  $\bar{\mathcal{G}}'''$ spans
the graphs with at least one bivalent vertex and at least one
vertex of valency greater than two.  The invariance  of each of
these three subspaces under the action of the coboundary operator
is obvious.

The following statement can be viewed as a special case of Losik's
lemma \cite{Losik} (see also \cite[Theorem 2.2.8]{Fu1}).

\begin{prop}
$H(\bar{\mathcal{G}}''')=0$,   hence
$H(\bar{\mathcal{G}})=H(\bar{\mathcal{G}}')\oplus
H(\bar{\mathcal{G}}'')$.
\end{prop}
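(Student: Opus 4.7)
The plan is to prove $H(\bar{\mathcal{G}}''')=0$ by exhibiting an explicit contracting chain homotopy on $\bar{\mathcal{G}}'''$. The geometric intuition is as follows: a connected graph carrying both a bivalent vertex and a higher-valent vertex must harbour at least one bivalent vertex adjacent to a higher-valent one (possibly through a chain of bivalent vertices), and such a bivalent vertex is precisely the sort of object that can be removed by a homotopy dual to the insertions carried out by \eqref{diff1}--\eqref{diff2}.

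I would begin by decomposing every $\Gamma\in\bar{\mathcal{G}}'''$ as its \textit{reduced graph} $\bar\Gamma$ (obtained by collapsing each maximal chain of bivalent vertices to a single edge or leg) together with multiplicities $k_e\ge 0$ recording the number of bivalent vertices on each edge or leg $e$ of $\bar\Gamma$. Membership in $\bar{\mathcal{G}}'''$ translates into $\bar\Gamma\ne\varnothing$ and $\sum_e k_e\ge 1$. Using the total order on labels I would then fix a deterministic rule designating a distinguished bivalent vertex $v^*(\Gamma)$ adjacent to some higher-valent vertex; a convenient choice is the bivalent vertex nearest to the initial endpoint of the minimally labelled edge of $\bar\Gamma$ with $k_e\ge 1$. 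The homotopy is declared by $h(\Gamma)=\pm\,\Gamma/v^*(\Gamma)$, the decorated graph obtained by deleting $v^*$ and merging its two incident edges, with sign fixed by the parity of the induced relabelling.

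The heart of the argument is to check $\partial h+h\partial=\mathrm{id}$. To manage the combinatorics I would introduce a filtration of $\bar{\mathcal{G}}'''$ by the number of higher-valent vertices of the underlying reduced graph; on the associated graded the reduced graph $\bar\Gamma$ is frozen, and only the operations preserving the skeleton survive, namely splitting an existing bivalent vertex together with the ``edge-end'' piece of splitting a higher-valent vertex with $|J'|=1$. The resulting graded complex factorises, via K\"unneth, as a tensor product over the edges and legs of $\bar\Gamma$ of one-edge complexes counting bivalent insertions on a single edge; each of these one-edge complexes admits a direct contracting homotopy which literally removes the bivalent vertex closest to a chosen endpoint. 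Acyclicity of the associated graded together with a standard spectral-sequence comparison then forces $H(\bar{\mathcal{G}}''')=0$.

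The principal obstacles I foresee are twofold: tracking the graph-orientation signs, dictated by the Grassmann parity of the corollas as in Example \ref{ex}, and ensuring that the marker $v^*(\Gamma)$ behaves coherently under $\partial$. The filtration above was chosen precisely to absorb the second issue, since on the $E_1$-page the $|J'|\ge 2$ splittings, which can create a new bivalent vertex of lower rank than the former $v^*$, are filtered out. Matching signs in the one-edge contracting homotopy is then a routine check analogous to the standard verification that the bar complex with a degenerate simplex is contractible.
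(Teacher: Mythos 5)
Your strategy is essentially the one the paper uses: filter $\bar{\mathcal{G}}'''$ by the number of vertices of the reduced (bivalent\-/vertex\--free) skeleton, observe that the induced differential on the associated graded only shuffles the bivalent vertices sitting on the branches, and kill the resulting complex branch by branch with a homotopy that deletes a distinguished bivalent vertex. Two details in your write\--up need correcting, though neither is fatal. First, the $|J'|=1$ term of the splitting rule (\ref{diff2}) does \emph{not} survive on the associated graded: the upper vertex it creates carries $|J'|$ incoming legs \emph{plus} one incoming internal edge plus one outgoing edge, hence is trivalent, so every splitting of a vertex of valency $\geq 3$ produces two vertices of valency $\geq 3$ and strictly increases the number of skeleton vertices; it is therefore filtered away entirely. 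The $E_0$-differential consists of bivalent splittings only, which (after normalizing decorations) send a branch of odd length $n$ to the branch of length $n+1$ and annihilate branches of even length --- this sign cancellation, Eq.~(\ref{dg}) of the paper, is the one computation you cannot defer. Second, the one\--edge complexes you tensor over are not contractible: each has one\--dimensional cohomology concentrated at zero insertions, so what K\"unneth actually gives is that the cohomology of $E_0$ over a fixed skeleton is spanned by the graphs with \emph{no} bivalent vertices at all, and these lie in $\bar{\mathcal{G}}''$ rather than $\bar{\mathcal{G}}'''$; you must say this explicitly, or equivalently restrict to the reduced per\--edge complexes with at least one insertion, on which your deletion homotopy does satisfy $d_0h+hd_0=\mathrm{id}$. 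With those repairs the finite filtration (bounded by the total number of vertices in each $\mathcal{G}^{n,m}_k$) yields $H(\bar{\mathcal{G}}''')=0$ exactly as you intend. The paper packages the same $E_0$-page computation slightly differently, through an operator $\Delta=hd_0+d_0h$ acting as multiplication by the number of branches of nonzero length, which is positive on $\bar{\mathcal{G}}'''$ and hence invertible; your K\"unneth formulation and the paper's Euler\--type operator are interchangeable here.
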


\begin{proof}
Observe that for each graph $\Gamma\in \bar{\mathcal{G}}'''$ there
exists a unique, up to decoration, graph  $ \widetilde{\Gamma}\in
\bar{\mathcal{G}}''$ such that $\Gamma$ is obtained from
$\widetilde{\Gamma}$ by putting bivalent vertices on the edges of
the latter. The edges of $\widetilde{\Gamma}$, equipped with
bivalent vertices, will be called \textit{branches}. The
\textit{length} of a branch is just the number of bivalent
vertices inserted. Denote by $\gamma_n$ the branch of length $n$
with the special decoration
\begin{equation*}\label{}
\begin{split}
\unitlength 1mm % = 2.85pt
\begin{picture}(0,10)(10,-3)
\put(-13,0){$\gamma_n=\quad$} \put(-4,1){\line(1,0){17.5}}
\put(-4,1){\vector(1,0){12.5}}\put(-4,1){\vector(1,0){4}}
\put(11,1){\circle*{1.5}} \put(3,1){\circle*{1.5}}
\put(14.5,0.75){$\ldots$} \put(22.5,1){\circle*{1.5}}
\put(2.5,3.5){{\scriptsize$_1$}}
\put(10.5,3.5){{\scriptsize$_{2}$}} \put(20.5,1){\line(1,0){8.5}}
\put(21,1){\vector(1,0){6}} \put(22,3.5){{\scriptsize$_{n}$}}
\end{picture}
\end{split}
\end{equation*}
and consider a decreasing filtration
\begin{equation}\label{filtration}
    \bar{\mathcal{G}}'''= F_1\bar{\mathcal{G}}''' \supset F_2\bar{\mathcal{G}}''' \supset\cdots\supset
    F_\infty\bar{\mathcal{G}}'''=0\,.
\end{equation}
By definition, the graph $\Gamma$ belongs to
$F_n\bar{\mathcal{G}}'''$, if the underlying graph
${\widetilde{\Gamma}}\in \bar{\mathcal{G}}''$ contains at least
$n$ vertices. Clearly, the differential $\partial$ preserves the
filtration. Associated to the filtration (\ref{filtration}) is the
first quadrant spectral sequence $\{E_r,d_r\}$ with
$E_0^{p,q}=F_p\bar{\mathcal{G}}'''_{p+q}$. Geometrically, the
zeroth differential $d_0: E_0\rightarrow E_0$ acts only on the
bivalent vertices of $\Gamma$ according to the general rule
(\ref{diff1}), lengthening the branches of odd length by $1$ and
annihilating the branches of even length. For instance
\begin{equation}\label{dg}
    d_0\gamma_n=\left\{%
\begin{array}{ll}
    \gamma_{n+1}, & \hbox{for $n$ odd;} \\
    0, & \hbox{otherwise.} \\
\end{array}%
\right.
\end{equation}
We claim that $E_1=\mathrm{Ker}d_0/\mathrm{Im}d_0=0$. Indeed,
consider the operator $h: E_0\rightarrow E_0$, successively acting
on $gamma_n$ by the rule
\begin{equation}\label{hg}
    h\gamma_n=\left\{%
\begin{array}{ll}
    \gamma_{n+1}, & \hbox{for $n$ even;} \\
    0, & \hbox{otherwise.} \\
\end{array}%
\right.
\end{equation}
In so doing, the labels on the other vertices that do not belong
to $\gamma_n\subset \Gamma$ increase by 1. The action of $h$ and
$d_0$ on an arbitrary  decorated branch is easily reconstructed
from (\ref{hg}) and (\ref{dg}), if one notes that the change of
decoration can always be  compensated by an overall sign factor.

 Writing $\Delta=h d_0+d_0h$, we see that
$\Delta(\Gamma)=n\Gamma$, where $n>0$ is the number of branches of
nonzero length. The operator $\Delta$ being invertible, we can
take $\Delta^{-1}h: E_0\rightarrow E_0$ to be a contracting
homotopy. Thus the complex $(E_0,d_0)$ is acyclic and so is
$\bar{\mathcal{G}}'''$.
\end{proof}

The complex of bivalent graphs splits as
\begin{equation*}\label{}
    \bar{\mathcal{G}}'=\bar{\mathcal{G}}^{0,0}\oplus
    \bar{\mathcal{G}}^{1,1}\,,
\end{equation*}
where the subcomplexes  $\bar{\mathcal{G}}^{1,1}$ and
$\bar{\mathcal{G}}^{0,0}$ span, respectively,  the straight line
and polygon graphs.

\begin{prop} The complex $\bar{\mathcal{G}}^{1,1}$ is acyclic
and
$$
H_n(\bar{\mathcal{G}}^{0,0})=\left\{%
\begin{array}{ll}
    \mathbb{R}, & \hbox{if $n=2m-1$;} \\[1mm]
    0, & \hbox{if $n=2m$.} \\
\end{array}%
\right.
$$
The corresponding nontrivial cocycles are given by the polygons

\begin{equation}\label{A-graph}
\begin{split}
\unitlength 1mm % = 2.85pt
\begin{picture}(100,25)(-45,-12)
    \put(-2,2){\circle*{1.5}}
\put(-5,1){{\scriptsize$4$}} \put(-2,2){\line(2,3){5.5}}
\put(-2,2){\vector(2,3){3}}
    \put(3.75,10.5){\circle*{1.5}}
\put(1,10.5){{\scriptsize$5$}} \put(3.5,10.5){\line(1,0){3.5}}
\put(3.5,10.5){\vector(1,0){3.5}}
\put(7.5,10.5){{\scriptsize$\ldots$}} \put(12,10.5){\line(1,0){3}}
    \put(14.75,10.5){\circle*{1.5}}
\put(16.5,10){{\scriptsize$2m$}}
\put(14.75,10.5){\line(2,-3){5.5}}
\put(14.75,10.5){\vector(2,-3){3}}
    \put(20.25,2){\circle*{1.5}}
\put(22,1){{\scriptsize$2m\!-\!1$}}
\put(20.25,2){\line(-1,-4){2.2}}
\put(20.25,2){\vector(-1,-4){1.3}}
    \put(18,-7){\circle*{1.5}}
\put(19.5,-9){{\scriptsize$1$}} \put(18,-7){\line(-2,-1){9}}
\put(18,-7){\vector(-2,-1){5}}
    \put(9,-11.5){\circle*{1.5}}
\put(10.5,-14){{\scriptsize$2$}} \put(9,-11.5){\line(-2,1){9}}
\put(9,-11.5){\vector(-2,1){5}}
    \put(0,-7){\circle*{1.5}}
\put(-3,-9){{\scriptsize$3$}} \put(0,-7){\line(-1,4){2.2}}
\put(0,-7){\vector(-1,4){1.3}}
\end{picture}
\end{split}
\end{equation}
with an odd number of vertices.
\end{prop}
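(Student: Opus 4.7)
The claim decouples into two independent assertions, the acyclicity of the subcomplex $\bar{\mathcal{G}}^{1,1}$ of straight-line graphs and the explicit computation of $H(\bar{\mathcal{G}}^{0,0})$, and I would treat them separately.

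For $\bar{\mathcal{G}}^{1,1}$, observe that a straight-line graph $L_n$ with $n$ bivalent vertices, one incoming leg on the left and one outgoing leg on the right, is structurally identical to a decorated branch of length $n$ in the sense of the previous proof. Splitting any bivalent vertex via (\ref{diff1}) produces a line of length $n+1$, and the $n$ possible splittings carry the same alternating signs that produced $d_0 \gamma_n = \gamma_{n+1}$ for odd $n$ and $0$ for even $n$. Hence $\partial L_n = L_{n+1}$ (in canonical decoration) for odd $n$ and $\partial L_n = 0$ for even $n$. Setting $h L_{2k} = L_{2k-1}$ and $h L_{2k-1} = 0$, a direct verification yields $\partial h + h \partial = \mathrm{id}$, which gives acyclicity of $\bar{\mathcal{G}}^{1,1}$.

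For $\bar{\mathcal{G}}^{0,0}$, the generators are decorated $n$-polygons. Geometrically an $n$-polygon is a cyclic tensor of $n$ copies of the Grassmann-odd corolla $\gamma_1$, taken modulo the sign equivalence $\Gamma' \sim -\Gamma$. The differential inserts an extra $\gamma_1$ into any of the $n$ edges; on the canonical rotationally invariant polygon $P_n$ the $n$ contributions coincide up to the graded cyclic shift sign $(-1)^{n-1}$, so they sum to $0$ when $n$ is odd and to a nonzero polygon (with non-canonical decoration) when $n$ is even. In particular $P_{2m-1}$ is a cocycle, and $P_{2m}$ is not. To finish the computation I would show that every even-degree cocycle is a coboundary and that every odd-degree cocycle is cohomologous to a scalar multiple of the canonical $P_{2m-1}$. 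A clean way is to identify $\bar{\mathcal{G}}^{0,0}$ with the standard cyclic complex on the one-dimensional odd generator $\gamma_1$, whose cohomology is classically $\mathbb{R}$ in odd degrees and $0$ in even degrees; alternatively one can filter by the stabilizer of the decoration under the dihedral action and run a spectral sequence in which the non-canonical decorations are killed at the $E_1$-page, leaving only the canonical $P_{2m-1}$ in odd degrees.

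The main obstacle is the sign bookkeeping for decorated polygons. The interplay between cyclic rotation and the Grassmann-odd parity of $\gamma_1$ is delicate: it is precisely this interplay that forces $\partial$ to vanish on odd-polygon cocycles and survive on even ones, and it also underlies the dimension count in cohomology. The line-graph case avoids this difficulty because a line has no cyclic symmetry, so the argument reduces to the previous proof's branch analysis without modification.
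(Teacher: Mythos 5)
Your treatment of $\bar{\mathcal{G}}^{1,1}$ is fine: each space $\bar{\mathcal{G}}^{1,1}_n$ is one-dimensional (a line has no nontrivial automorphisms, so no decoration collapses), $\partial L_n=L_{n+1}$ for $n$ odd and $0$ for $n$ even by the same alternating-sign count as in (\ref{dg}), and your $h$ satisfies $\partial h+h\partial=\mathrm{id}$. The paper offers no argument to compare with (its proof is declared straightforward), so only correctness is at issue, and this half is correct.

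The polygon half has a genuine gap: you never establish the fact on which the whole computation rests, namely that $\bar{\mathcal{G}}^{0,0}_{2m}=0$. A rotation of the $n$-gon by one step is a graph automorphism preserving edge orientations, so it identifies the canonical decoration with its cyclic shift; since the shift is an $n$-cycle of the (Grassmann-odd) vertex labels, it carries the sign $(-1)^{n-1}$ in $\mathcal{G}=\widetilde{\mathcal{G}}/\sim$, whence $P_n=(-1)^{n-1}P_n$ and $P_{2m}=-P_{2m}=0$ (this is the graph avatar of $\mathrm{Str}(c^{2m})=0$ for an odd matrix $c$, and of the fact that in the cyclic space $\widehat{W}$ of Theorem \ref{A-P} only odd powers of $a$ survive). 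Once this is noted the proposition is immediate: $\partial$ raises the vertex number by one, so it maps every nonzero group (odd $n$) into a zero group (even $n+1$) and vanishes identically on $\bar{\mathcal{G}}^{0,0}$, giving $H_n=\bar{\mathcal{G}}^{0,0}_n$. Your sketch instead asserts that the $n$ splittings of $P_n$ sum to ``a nonzero polygon'' when $n$ is even, i.e.\ $\partial P_{2m}\propto P_{2m+1}\neq 0$; taken literally this would make the odd polygons coboundaries and yield $H_{2m+1}=0$, contradicting the statement you are proving. The subsequent plan (``show every odd cocycle is cohomologous to a multiple of $P_{2m-1}$'' via a dihedral spectral sequence) is therefore aimed at a problem that does not exist, and the ``sign bookkeeping'' you flag as the main obstacle is exactly the point your argument gets wrong. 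Your passing remark about the cyclic complex on one odd generator is the right idea, but it must be used to kill the even-degree groups, not deferred.
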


\begin{proof}
The proof is straightforward.
\end{proof}

We now turn to computation of the group $H(\bar{\mathcal{G}}'')$.
The complex $\bar{\mathcal{G}}''$ breaks up into the direct sum of
various subcomplexes
\begin{equation*}\label{}
    \bar{\mathcal{G}}''=\left({\bigoplus}_{n=2}^\infty
    (\bar{\mathcal{G}}'')^{n,1}\right)\oplus \left({\bigoplus}_{n=1}^\infty
    (\bar{\mathcal{G}}'')^{n,0}\right)\,.
\end{equation*}
Here the spaces $(\bar{\mathcal{G}}'')^{n,1}$ and
$(\bar{\mathcal{G}}'')^{n,0}$ are spanned, respectively, by the
tree and cyclic graphs with $n$ incoming legs and no bivalent
vertices (see Fig. \ref{graphs}). To ease the notation, we will
write
\begin{equation}\label{TC}
\mathcal{T}^n_m=(\bar{\mathcal{G}}'')^{n+1,1}_m\qquad
\mbox{and}\qquad
\mathcal{C}^n_m=(\bar{\mathcal{G}}'')^{n,0}_m\,,
\end{equation}
where the lower index $m$ points to the number of vertices of a
graph, while the upper index $n$ counts the difference between its
incoming and outgoing legs. Since the valency of  each vertex is
assumed to be greater than 2, it is easy to see that the following
inequality holds for any graph from $\mathcal{T}^n_m$ or
$\mathcal{C}^n_m$:
\begin{equation}\label{ineq}
    n =\sum_{v\in V}(n_{v}-2)\geq m\,,
\end{equation}
 $n_v$ being the valency of a vertex $v\in V$. So, for $m>n$ the spaces (\ref{TC})
are assumed to be zero, and for any $n\in \mathbb{N}$ the graph
complexes $\mathcal{T}^n=\{\mathcal{T}^n_m\}$ and
$\mathcal{C}^n=\{\mathcal{C}_m^n\}$ are given by finite sequences
of finite-dimensional vector spaces.

\begin{thm}[D.B. Fuks \cite{Fu2}]\label{FT}

$$\dim H_m(\mathcal{T}^n) =\dim H_m(\mathcal{C}^n)=\left\{%
\begin{array}{ll}
    (n-1)!\,, &if\, \hbox{n=m\,;} \\[1mm]
    0, & \hbox{otherwise\,.} \\
\end{array}%
\right.    $$
\end{thm}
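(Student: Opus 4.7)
The plan is to split the theorem into two parts: vanishing of $H_m$ for $m < n$, and computation of $H_n$.

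By inequality (\ref{ineq}) both complexes are finite-dimensional and supported in degrees $1 \le m \le n$, with top-degree graphs being precisely the trivalent ones. Since $\mathcal{T}^n_{n+1} = \mathcal{C}^n_{n+1} = 0$, the differential vanishes automatically on the top component, so $H_n(\mathcal{T}^n) = \mathcal{T}^n_n/\partial\mathcal{T}^n_{n-1}$ (and similarly for $\mathcal{C}^n$).

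\textit{Vanishing below top.} My plan is to construct an explicit chain homotopy via edge contraction. Define $h:\mathcal{T}^n_m \to \mathcal{T}^n_{m-1}$ by summing, with signs, over internal edges $e$ of $\Gamma$ the contracted graph $\Gamma/e$ (well-defined since the merged vertex has valency at least $4$, hence $\ge 3$). A standard bar--cobar type computation shows $\partial h + h\partial = N(\Gamma)\cdot \mathrm{id}$ on any $\Gamma$ strictly below top degree, where $N(\Gamma) > 0$ is a positive combinatorial counting factor (a weighted sum over internal edges and splitting options that is strictly positive precisely because not all vertices are trivalent). Rescaling $h$ by $1/N(\Gamma)$ yields a genuine contracting homotopy in degrees $m < n$, so $H_m(\mathcal{T}^n) = 0$ for $m < n$. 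The same construction works for $\mathcal{C}^n$, with $h$ contracting any internal edge (including edges on the loop) as long as the merged vertex retains valency at least $3$.

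\textit{Top cohomology.} At $m = n$ every vertex is trivalent. A graph $\Gamma \in \mathcal{T}^n_{n-1}$ has exactly one tetravalent vertex (the rest trivalent); applying (\ref{diff2}) to it produces precisely the three-term Jacobi (IHX) relation at that vertex. Hence $H_n(\mathcal{T}^n)$ is the space of decorated trivalent rooted trees modulo Jacobi --- the classical combinatorial presentation of the Lie operad. Using the Grassmann-odd vertex-swap relation $\Gamma' \sim -\Gamma$ together with a ``right-combed'' normal form, one exhibits an $(n-1)!$-dimensional basis, giving $\dim H_n(\mathcal{T}^n) = (n-1)!$. For $\mathcal{C}^n$, cutting the unique loop at each of its edges and summing with appropriate signs defines a chain map $\mathcal{C}^n \to \mathcal{T}^n$ that descends to an isomorphism on top cohomology; this yields $\dim H_n(\mathcal{C}^n) = (n-1)!$ as well.

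The main obstacle is the top-degree sign bookkeeping: explicitly exhibiting the $(n-1)!$-dimensional basis modulo Jacobi (and not a larger multiple) and establishing the tree--cyclic isomorphism at the chain level, both of which require careful tracking of the signs arising from the Grassmann-odd corollas and the equivalence $\Gamma' \sim -\Gamma$. The edge-contraction homotopy, while conceptually natural, also demands delicate sign conventions to ensure $\partial h + h\partial$ really acts as a positive scalar.
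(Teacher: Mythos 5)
Your overall strategy (kill everything below top degree with a contracting homotopy built from edge contraction, then present the top cohomology by trivalent graphs modulo a Jacobi relation) is reasonable in outline, but its central step fails as stated. The operator $h$ that sums over all contractions $\Gamma/e$ is just the differential of the dual complex, and $\partial h+h\partial$ is the combinatorial Laplacian; it is \emph{not} a scalar multiple of the identity on graphs below top degree, and the obstruction is not one of signs. Take $\Gamma\in\mathcal{T}^3_2$ consisting of a $4$-valent vertex $v$ with incoming legs $1,2,3$ feeding a trivalent vertex $w$ that carries leg $4$ and the outgoing leg. Then $h\Gamma$ is the $5$-valent corolla, and $\partial h\Gamma$ is a sum over the ten admissible splittings of $\{1,2,3,4\}$, of which only $J'=\{4\}$, $J''=\{1,2,3\}$ reproduces $\Gamma$; on the other side, $h\partial\Gamma$ reproduces $\Gamma$ three times (contracting the newly created edge in each of the three splittings of $v$) plus three further graphs (contracting the old edge $v\to w$). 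After the pairwise cancellations between the two sums, the six splittings of the corolla with $4\in J''$ survive as genuinely distinct, nonzero off-diagonal terms. So $\partial h+h\partial$ is not diagonal in the graph basis, and since over $\mathbb{R}$ the kernel of the Laplacian computes the homology, proving its invertibility below top degree is as hard as the theorem itself; a one-shot global homotopy cannot deliver it. The paper (following Fuks) avoids exactly this by passing to the dual complex and filtering by the length of the cycle (resp.\ of the path from the first incoming leg to the root), so that the homotopy only has to invert the associated graded differential $d^0=\partial^\ast_{nc}$, which collapses non-cyclic (resp.\ non-spine) edges only; there the cross-terms do cancel and the homotopy acting on the $c$-type vertices works.

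The top-degree half of your argument is also incomplete. That the combed caterpillar trees and the decorated polygons \emph{span} the top cohomology does follow from repeatedly applying the Jacobi-type move (\ref{LieOp}), as in the Remark following the theorem; but their \emph{linear independence} — i.e.\ that $\partial$ of the $(n-1)$-vertex graphs imposes no further relations — is the substantive half of the count, and you assert it rather than prove it. In the paper this independence comes ``for dimensional reasons'' from the degeneration of the spectral sequence, i.e.\ it is a consequence of precisely the vanishing statement your homotopy was supposed to supply, so the two halves of your plan cannot be decoupled in the way you propose. Likewise, the loop-cutting map $\mathcal{C}^n\to\mathcal{T}^n$ is not obviously a chain map and is nowhere needed: the paper treats the cyclic case directly, filtering by cycle length and counting the $n!$ decorations of the $n$-gon modulo cyclic rotation to get $(n-1)!$. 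I would replace the global homotopy by the filtration/spectral-sequence argument; your description of the top cohomology then becomes a corollary rather than an independent computation.
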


\begin{proof} Here we reproduce the original Fuks' proof.
Another proof can be found in
\cite[Sec.4]{Merkulov}.

Consider first the case of cyclic graphs with $n$ legs. A typical
cyclic graph is depicted in Fig. \ref{graphs}.  It is given by a
collection of legs and trees  attached to the vertices of the
cycle. Following Fuks, we introduce the complex dual to the
complex $\mathcal{C}^n$. Fixing basis in ${\mathcal{C}^n}$ allows
one to identify ${\mathcal{C}^n}$ with its dual space
$(\mathcal{C}^n)^\ast$. Then the differential $\partial^\ast:
{\mathcal{C}}^n_k \rightarrow {\mathcal{C}}^n_{k-1}$ in the dual
complex has an extremely simple description. The graph
$\partial^\ast\Gamma$ is given by a signed sum of decorated graphs
which can be obtained from $\Gamma$ by collapsing a single edge;
in so doing, the loops remain intact. To specify the signs and
decorations we can assume that the initial vertex of a collapsed
edge $e$ is labelled by 1 and the terminal vertex is labelled by
2. (The general case reduces to that by altering the order of
vertices with appropriate sign factors.) Then, the vertex which
results from collapsing $e$ is numbered by 1 and the labels of all
other vertices are reduced by 1. The labels of the legs remain the
same.

Since  the groups $\{{\mathcal{C}}^n_m\}$ are just
finite-dimensional vector spaces,
$$H_m({\mathcal{C}}^{n},
\partial) \simeq H_m({\mathcal{C}}^{n},\partial^\ast )\,,$$
and we may  focus upon  computation of the
$\partial^\ast$-co\-ho\-mo\-logy.

We filter $(\mathcal{C}^n,\partial^\ast)$ by the length of the
cycle, i.e., let $F_p\mathcal{C}^{n}$ be the subcomplex of
$\mathcal{C}^{n}$ spanned by graphs with at most $p$ cyclic
vertices (or edges). In view of (\ref{ineq}) the filtration is
finite:
$$
0 =F_0\mathcal{C}^{n}\subset F_1\mathcal{C}^n\subset\cdots \subset
F_{n} \mathcal{C}^{n}=\mathcal{C}^{n}\,.
$$
Let $\{E^r,d^r\}$ be the spectral sequence associated to the
filtration. Notice that $E_{p,q}^0=F_p\mathcal{C}_{p+q}^n=0$ for
$q<0$ and we have a first quadrant spectral sequence. The
coboundary operator $\partial^\ast: \mathcal{C}^n_{k}\rightarrow
\mathcal{C}^n_{k-1}$ is given by the sum
$\partial^\ast=\partial^\ast_c+\partial^\ast_{nc}$, where the
operator $\partial^\ast_c$ collapses only cyclic edges, and the
operator $\partial^{\ast}_{nc}$ collapses only non-cyclic edges.
By definition $d^0=\partial^\ast_{nc}$.

To compute $E^1=H(E^0,d^0)$, we arrange the cyclic vertices into
three groups labelled by the letters $a$, $b$, and $c$. The
vertices of type $a$ are trivalent vertices incident to one leg
and two edges, the vertices of type $b$ are  trivalent vertices
incident to three edges, and the other cyclic vertices (of valency
greater than 3) belong to the type $c$ (see Fig. \ref{graphs}).
With this partition we define a homotopy
$h:\mathcal{C}^n_k\rightarrow \mathcal{C}^n_{k+1}$. The operator
$h$ acts as a differentiation on the $c$-type vertices.
Graphically,
\begin{equation*}
\begin{split}
\unitlength 1mm % = 2.85pt
\begin{picture}(60,23)(-10,-3)
\put(-9,6){$h\left(\rule{0mm}{10mm}\right.$}
\put(17,6){$\left.\rule{0mm}{10mm}\right) =$}
\put(2,5){\line(2,1){5.8}}\put(14,5){\line(-2,1){5.8}}
\put(2,5){\vector(2,1){4}}\put(11,6.5){\vector(2,-1){0.5}}
\put(8,8){\circle*{1.5}} \qbezier[25](2,5)(-4,0)(8,-1)
\qbezier[25](8,-1)(20,0)(14,5) \put(8,12.25){\circle{8}}
\put(5.5,11.5){{\scriptsize$tree$}} \put(7.5,5){{\scriptsize$c$}}
%%________________________________________________________________________________________________
\put(30,3){\line(2,1){5.5}}\put(42,3){\line(-2,1){5.5}}\put(36,6){\line(0,1){5}}
\put(30,3){\vector(2,1){4}} \put(39,4.5){\vector(2,-1){0.5}}
\put(36,10.5){\vector(0,-1){3}}
\put(36,6){\circle*{1.5}}\put(36,11){\circle*{1.5}}
\qbezier[25](30,3)(24,-2)(36,-3) \qbezier[25](36,-3)(48,-2)(42,3)
\put(36,15.25){\circle{8}} \put(33.5,14.5){{\scriptsize$tree$}}
\put(35.5,3){{\scriptsize$c$}}
\end{picture}
\end{split}
\end{equation*}
A straightforward consideration shows that the homotopy $h$
connects $0$ with the endomorphism that multiplies each cyclic
graph by the total number of cyclic vertices of types $b$ and $c$.
Thus the subspace of graphs with at least one vertex of  $b$ or
$c$ type is acyclic.

On the other hand, the graphs with only $a$-type vertices have the
form
\begin{equation}\label{C-graph}
\begin{split}
\unitlength 1mm % = 2.85pt
\begin{picture}(50,22)(-15,-11)
    \put(0,0){\circle*{1.5}}
\put(-3,1){{\scriptsize$2$}} \put(0,0){\line(1,2){3.75}}
\put(0,0){\vector(1,2){2.5}} \put(0,-6.5){\line(0,1){6.5}}
\put(0,-6.5){\vector(0,1){4}}
    \put(3.5,7){\circle*{1.5}}
\put(0.5,7.5){{\scriptsize$3$}} \put(3.5,7){\line(1,0){3.5}}
\put(3.5,7){\vector(1,0){3}} \put(7,7){{\scriptsize$\ldots$}}
\put(12,7){\line(1,0){3}} \put(3.5,0.5){\line(0,1){6.5}}
\put(3.5,0.5){\vector(0,1){4}}
    \put(14.5,7){\circle*{1.5}}
\put(15.5,8){{\scriptsize$n\!-\!1$}}
\put(14.5,7){\line(1,-2){3.75}} \put(14.5,7){\vector(1,-2){2.25}}
\put(14.5,0.5){\line(0,1){6.5}} \put(14.5,0.5){\vector(0,1){4}}
    \put(18,0){\circle*{1.5}}
\put(19,0){{\scriptsize$n$}} \put(18,0){\line(-2,-1){9}}
\put(18,0){\vector(-2,-1){5}} \put(18,-6.5){\line(0,1){6.5}}
\put(18,-6.5){\vector(0,1){4}}
    \put(9,-4.5){\circle*{1.5}}
\put(10,-7){{\scriptsize$1$}} \put(9,-4.5){\line(-2,1){9}}
\put(9,-4.5){\vector(-2,1){5}} \put(9,-11){\line(0,1){6.5}}
\put(9,-11){\vector(0,1){4}}
\end{picture}
\end{split}
\end{equation}
and differ from one another only by decoration. Being free of
non-cyclic edges, the graphs (\ref{C-graph}) are automatically
closed and represent nontrivial classes of $d^0$-cohomology group
$E^1_{n,0}$. The group $S_n$ acts on (\ref{C-graph}) by permuting
incoming legs. Since only the cyclic permutations lead to
isomorphic graphs, the number of different decorations or, what is
the same,  the dimension of the space $E^1_{n,0}$ is equal to
$(n-1)!$.

For reasons of dimension, the spectral sequence degenerates on the
first page and we get   $E^1_{p,0} \simeq H_p(\mathcal{C}^n)$.
This proves the theorem for the case of the cyclic graph complex
$\mathcal{C}^{n}$.

The case of tree graphs may be handled in much the same way, with
the only difference that the filtration of the complex
$\mathcal{T}^n$ is now defined by the length of the unique path
joining the first incoming leg of a connected tree graph to its
outgoing leg. This path plays the role of the cycle in the
previous consideration. The corresponding spectral sequence
degenerates from  $E^1$ yielding the only nontrivial group
$E^1_{n,0}$. The space $E^1_{n,0}$ is spanned by the trivalent
graphs of the form
\begin{equation}\label{B-graph}
\begin{split}
\unitlength 1mm % = 2.85pt
\begin{picture}(60,12)(-15,-9)
\put(-8,0){\vector(1,0){4.5}} \put(-8,0){\line(1,0){8}}
\put(-8,-1.5){{\scriptsize${}_1$}} \put(0,-8){\line(0,1){8}}
\put(0,-8){\vector(0,1){4.5}} \put(0,0){\circle*{1.5}}
\put(-1,1.5){{\scriptsize$1$}} \put(0,0){\line(1,0){8}}
\put(0,0){\vector(1,0){4.5}} \put(8,-8){\line(0,1){8}}
\put(8,-8){\vector(0,1){4.5}}
    \put(8,0){\circle*{1.5}}
\put(7,1.5){{\scriptsize$2$}} \put(8,0){\line(1,0){8}}
\put(8,0){\vector(1,0){4.5}} \put(16,-8){\line(0,1){8}}
\put(16,-8){\vector(0,1){4.5}}
    \put(16,0){\circle*{1.5}}
\put(15,1.5){{\scriptsize$3$}} \put(16,0){\line(1,0){4}}
\put(16,0){\vector(1,0){3}} \put(20.5,0){{\scriptsize$\ldots$}}
\put(26,0){\line(1,0){4}} \put(26,0){\vector(1,0){2.5}}
    \put(30,0){\circle*{1.5}}
\put(29,1.5){{\scriptsize$n$}} \put(30,-8){\line(0,1){8}}
\put(30,-8){\vector(0,1){4.5}} \put(30,0){\line(1,0){8}}
\put(30,0){\vector(1,0){4.5}}
\end{picture}
\end{split}
\end{equation}
By construction, the leftmost leg is labelled  by 1 and the labels
on the other $n-1$ incoming legs can be prescribed arbitrarily.
Hence $\dim E^1_{n,0}=(n-1)!$.

\end{proof}

\begin{rem}
We can also give an explicit description for basis cocycles whose
classes of $\partial$-cohomology generate the groups
$H_n(\mathcal{C}^n)$ and $H_n(\mathcal{T}^n)$. Relation
(\ref{ineq}) implies that the spaces $\mathcal{T}_n^n$ and
$\mathcal{C}_n^n$ are spanned by the trivalent graphs. Since
$\partial \gamma_2=0$, any trivalent graph is automatically
$\partial$-closed. To extract a basis of nontrivial cocycles
notice that the subspace of $\partial$-coboundaries is generated
by graphs $\partial \Gamma$, where $\Gamma$ has $n-2$ trivalent
vertices and one vertex of valency four. In accordance with
(\ref{diff2}), the action of the differential on the 4-valent
vertex reads
\begin{equation*}
\begin{split}
\unitlength 0.9mm
\begin{picture}(60,30)(18,-7)
\put(-11,6){$\frac12\partial\left(\rule{0mm}{10mm}\right.$}
\put(17,6){$\left.\rule{0mm}{10mm}\right) =$}
\put(0,0){\line(1,1){7.5}}\put(8,0){\line(0,1){7.5}}\put(16,0){\line(-1,1){7.5}}
\put(0,0){\vector(1,1){3.75}}\put(8,0){\vector(0,1){4}}\put(16,0){\vector(-1,1){3.75}}
\put(8,8){\circle*{1.5}} \put(8,8){\line(0,1){7.5}}
\put(8,8){\vector(0,1){3.75}}
\put(-1.5,-2){{\scriptsize$_1$}}\put(7.5,-2){{\scriptsize$_2$}}\put(16,-2){{\scriptsize$_3$}}
%_____________________________________________________________________________________________________
\put(30.5,4.75){\line(3,4){4.5}}\put(40,5){\line(-3,4){4.5}}
\put(30.5,4.75){\vector(3,4){3.5}}\put(40,5){\vector(-3,4){3.5}}
\put(35.5,11){\circle*{1.5}} \put(35.5,11){\line(0,1){7.5}}
\put(35.5,11){\vector(0,1){3.75}}
\put(35.25,-1){\line(3,4){4.5}}\put(44.75,-1){\line(-3,4){4.5}}
\put(35.25,-1){\vector(3,4){3.5}}\put(44.75,-1){\vector(-3,4){3.5}}
\put(40,5){\circle*{1.5}}
\put(28.5,3){{\scriptsize$_1$}}\put(33.5,-3){{\scriptsize$_2$}}\put(45,-3){{\scriptsize$_3$}}
\put(48,6){$+$}
%_____________________________________________________________________________________________________
\put(56.5,4.75){\line(3,4){4.5}}\put(66,5){\line(-3,4){4.5}}
\put(56.5,4.75){\vector(3,4){3.5}}\put(66,5){\vector(-3,4){3.5}}
\put(61.5,11){\circle*{1.5}} \put(61.5,11){\line(0,1){7.5}}
\put(61.5,11){\vector(0,1){3.75}}
\put(61.25,-1){\line(3,4){4.5}}\put(70.75,-1){\line(-3,4){4.5}}
\put(61.25,-1){\vector(3,4){3.5}}\put(70.75,-1){\vector(-3,4){3.5}}
\put(66,5){\circle*{1.5}}
\put(54.5,3){{\scriptsize$_3$}}\put(59.5,-3){{\scriptsize$_1$}}\put(71,-3){{\scriptsize$_2$}}
\put(74,6){$+$}
%_____________________________________________________________________________________________________
\put(82.5,4.75){\line(3,4){4.5}}\put(92,5){\line(-3,4){4.5}}
\put(82.5,4.75){\vector(3,4){3.5}}\put(92,5){\vector(-3,4){3.5}}
\put(87.5,11){\circle*{1.5}} \put(87.5,11){\line(0,1){7.5}}
\put(87.5,11){\vector(0,1){3.75}}
\put(87.25,-1){\line(3,4){4.5}}\put(96.75,-1){\line(-3,4){4.5}}
\put(87.25,-1){\vector(3,4){3.5}}\put(96.75,-1){\vector(-3,4){3.5}}
\put(92,5){\circle*{1.5}}
\put(80.5,3){{\scriptsize$_2$}}\put(85.5,-3){{\scriptsize$_3$}}\put(97,-3){{\scriptsize$_1$}}
\end{picture}
\end{split}
\end{equation*}
Thus adding a coboundary amounts to the following equivalence
transformation for a pair of nearby vertices of a trivalent graph:
\begin{equation}\label{LieOp}
\begin{split}
\unitlength   0.9mm
\begin{picture}(60,30)(35,-10)
\put(30.5,4.75){\line(3,4){4.5}}\put(40,5){\line(-3,4){4.5}}
\put(30.5,4.75){\vector(3,4){3.5}}\put(40,5){\vector(-3,4){3.5}}
\put(35.5,11){\circle*{1.5}} \put(35.5,11){\line(0,1){7.5}}
\put(35.5,11){\vector(0,1){3.75}}
\put(35.25,-1){\line(3,4){4.5}}\put(44.75,-1){\line(-3,4){4.5}}
\put(35.25,-1){\vector(3,4){3.5}}\put(44.75,-1){\vector(-3,4){3.5}}
\put(40,5){\circle*{1.5}}
\put(28.5,3){{\scriptsize$_1$}}\put(33.5,-3){{\scriptsize$_2$}}\put(45,-3){{\scriptsize$_3$}}
\put(48,6){$\sim$}
%_____________________________________________________________________________________________________
\put(56.5,4.75){\line(3,4){4.5}}\put(66,5){\line(-3,4){4.5}}
\put(56.5,4.75){\vector(3,4){3.5}}\put(66,5){\vector(-3,4){3}}
\put(61.5,11){\circle*{1.5}} \put(61.5,11){\line(0,1){7.5}}
\put(61.5,11){\vector(0,1){3.75}}
\put(61.25,-1){\line(3,4){4.5}}\put(70.75,-1){\line(-3,4){4.5}}
\put(61.25,-1){\vector(3,4){3.5}}\put(70.75,-1){\vector(-3,4){3.5}}
\put(66,5){\circle*{1.5}}
\put(54.5,3){{\scriptsize$_3$}}\put(59.5,-3){{\scriptsize$_1$}}\put(71,-3){{\scriptsize$_2$}}
\put(74,6){$+$}
%_____________________________________________________________________________________________________
\put(82.5,4.75){\line(3,4){4.5}}\put(92,5){\line(-3,4){4.5}}
\put(82.5,4.75){\vector(3,4){3.5}}\put(92,5){\vector(-3,4){3}}
\put(87.5,11){\circle*{1.5}} \put(87.5,11){\line(0,1){7.5}}
\put(87.5,11){\vector(0,1){3.75}}
\put(87.25,-1){\line(3,4){4.5}}\put(96.75,-1){\line(-3,4){4.5}}
\put(87.25,-1){\vector(3,4){3.5}}\put(96.75,-1){\vector(-3,4){3.5}}
\put(92,5){\circle*{1.5}}
\put(80.5,3){{\scriptsize$_2$}}\put(85.5,-3){{\scriptsize$_3$}}\put(97,-3){{\scriptsize$_1$}}
\end{picture}
\end{split}
\end{equation}
It is easy to verify that applying this transformation properly
time and again,  one can bring any connected trivalent graph to a
linear combinations of the graphs (\ref{C-graph}) or
(\ref{B-graph}). For dimensional reasons, the graphs
(\ref{C-graph}) and (\ref{B-graph}) must then span a basis in the
space of nontrivial $\partial$-cocycles.

Interestingly enough that the equivalence  relation (\ref{LieOp})
coincides in form  with the defining relation for the quadratic
operad of Lie algebras  $\mathcal{L}ie$. Our computations then
show that the  graph complex $\mathcal{G}_{tree}=\bigoplus_{n>1}
\mathcal{T}^n$, being an algebra under the grafting of trees,
realizes the minimal resolution of the Lie algebra operad, that
is, $\mathcal{L}ie_{\infty}=({\mathcal{G}}_{tree} ,
\partial)$. For more details on the operadic interpretation of the
graph cohomology, including the case of cyclic graph complex
$\mathcal{G}_{cycle}=\bigoplus_{n>0}\mathcal{C}^n$, we refer the
reader to  \cite{Merkulov}, \cite{MMSh}.
\end{rem}

Let us return to $Q$-manifolds.  We define the stable
characteristic classes of a flat $Q$-manifold as the image of the
stable cohomology classes of
$\mathcal{T}(\mathbb{E})^{\mathrm{inv}}$ under the homomorphism
(\ref{chi}).  Using the explicit formulas (\ref{Gauss-in-coord})
for the characteristic map together with the correspondence map
$R$, one can easily convert the basis cocycles of the graph
complex into the universal cocycles of a flat $Q$-manifold.

Given a homological vector field $Q$ and a flat symmetric
connection $\nabla$, we write  $\Lambda$ for the odd endomorphism
defined  by the rule $\Lambda(X)=\nabla_X Q$ for $\forall X\in
\frak{X}(M)$. The covariant derivative of $\Lambda$ yields a
(2,1)-tensor $\nabla\Lambda$. The assignment $X\mapsto
\nabla_X\Lambda$ allows us to identify $\nabla\Lambda$ with a
right $C^\infty(M)$-module homomorphism from $\frak{X}(M)$ to
$\frak{A}(M)$. Graphically, the tensors $\Lambda=\chi\circ
R(\gamma_1)$ and $\nabla\Lambda=\chi\circ R(\gamma_2)$ are
depicted by the bivalent and trivalent corollas, respectively.
They play the role of building blocks for constructing stable
universal cocycles associated with the graphs (\ref{A-graph}),
(\ref{C-graph}) and (\ref{B-graph}). Namely, the graph cocycles
(\ref{A-graph}), having the form of polygons with $2n-1$ vertices,
give the following sequence of $Q$-invariant functions:
\begin{equation}\label{A}
    A_n=\mathrm{Str}(\Lambda^{2n-1})\,.
\end{equation}
The invariance of $A_n$ follows immediately from  the identity
$\nabla_Q \Lambda=-\Lambda^2$ and the cyclic property of the
supertrace. The cocycles (\ref{B-graph}) give rise to the sequence
of $Q$-invariant tensors $B_n\in \mathcal{T}^{n+1,1}(M)$.
Identifying the $(n+1,1)$-tensors with homomorphisms from
$\frak{X}(M)^{\otimes n}$ to $\frak{A}(M)$, we can write
\begin{equation}\label{B}
    B_n(X_1,X_2,...,X_n)=\nabla_{X_1}\Lambda\nabla_{X_2}\Lambda\cdots
    \nabla_{X_n}\Lambda\,.
\end{equation}
Taking now the trace, we get the universal cocycles corresponding
to the cyclic graphs (\ref{C-graph}):
\begin{equation}\label{C}
    C_n(X_1,X_2,...,X_n)=\mathrm{Str}B_n(X_1,X_2,...,X_n)\,.
\end{equation}
By construction, $C_n\in \mathcal{T}^{n,0}(M)$. The $Q$-invariance
of the tensor fields $\{B_n\}$ and $\{C_n\}$ can be checked
directly. In fact, it is enough to check the invariance of
$B_1=\nabla \Lambda$ as the other cocycles are made of $B_1$ by
means of tensor operations. We have
\begin{equation}\label{LB}
\begin{array}{rcl}
    (-1)^{\epsilon(X)}(L_Q B_1)(X)&=&L_Q B_1(X)-B_1([Q,X])\\[3mm]
    &=&\nabla_QB_1(X)+[\Lambda,B_1(X)]-B_1([Q,X])\\[3mm]
    &=&(-1)^{\epsilon(X)}\nabla_XB_1(Q)+[\Lambda,B_1(X)]\\[3mm]
    &=&-(-1)^{\epsilon(X)}\nabla_X\Lambda^2-[\Lambda,\nabla_X\Lambda]=0\,.
    \end{array}
\end{equation}

We will refer to the $\delta$-cohomology classes of the cocycles
(\ref{A}), (\ref{B}) and (\ref{C}) as the characteristic classes
of $A$-, $B$-, and $C$-series. The results of this section can now
be summarized as follows.

\begin{thm}\label{flatChCl}
In the stable range of dimensions, the characteristic classes of a
flat $Q$-manifold are generated by the characteristic classes of
$A$-, $B$-, and $C$-series by means of tensor products and
permutations of indices.
\end{thm}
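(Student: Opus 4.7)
The plan is to assemble the results established throughout Section 5 into a single statement by following the chain of isomorphisms from $H_Q(M)$ all the way down to the elementary graph cohomology groups. First I would invoke the definition of characteristic classes of a flat $Q$-manifold as the image of $\chi_\ast \colon H(\mathbb{E})^{\mathrm{inv}} \to H_Q(M)$, so that the task reduces to producing a generating set for $H(\mathbb{E})^{\mathrm{inv}}$ in the stable range and chasing it through $\chi$. By the second main theorem of invariant theory the map $R \colon F_r \mathcal{G}^{n,m} \to F_r \mathcal{T}^{n,m}(\mathbb{E})^{\mathrm{inv}}$ is an isomorphism for $|\dim V|\gg r,n$, so in the stable range it suffices to describe $H(\mathcal{G})$.

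Next I would apply the decomposition results already proven. The K\"unneth formula reduces $H(\mathcal{G})$ to tensor products of $H(\bar{\mathcal{G}})$, and the splitting $\bar{\mathcal{G}} = \bar{\mathcal{G}}' \oplus \bar{\mathcal{G}}'' \oplus \bar{\mathcal{G}}'''$ combined with the vanishing $H(\bar{\mathcal{G}}''') = 0$ (Losik's lemma) and $H(\bar{\mathcal{G}}^{1,1}) = 0$ (acyclicity of straight lines) leaves only three sources of nontrivial classes: the polygon graphs in $\bar{\mathcal{G}}^{0,0}$, the tree graph complexes $\mathcal{T}^n$, and the cyclic graph complexes $\mathcal{C}^n$. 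The explicit basis cocycles identified in the preceding propositions and in Theorem \ref{FT} are, respectively, the odd polygons (\ref{A-graph}) with $2n-1$ vertices, the trivalent line graphs (\ref{B-graph}), and the trivalent cyclic graphs (\ref{C-graph}); together these classes multiplicatively generate $H(\mathcal{G})$ under disjoint union and index permutation.

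It then remains to identify the images of these basis cocycles under $\chi$. Here I would use the dictionary $\chi \circ R(\gamma_1) = \Lambda$ and $\chi \circ R(\gamma_2) = \nabla \Lambda$ together with the rule that concatenation of edges corresponds to composition of endomorphisms of $TM$ and that closing a tree into a cycle corresponds to taking the supertrace. Reading off the polygon (\ref{A-graph}) therefore gives $A_n = \mathrm{Str}(\Lambda^{2n-1})$, the trivalent tree (\ref{B-graph}) gives $B_n(X_1,\dots,X_n) = \nabla_{X_1}\Lambda \cdots \nabla_{X_n}\Lambda$, and its closure (\ref{C-graph}) gives $C_n = \mathrm{Str}\,B_n$, matching the explicit definitions (\ref{A}), (\ref{B}), (\ref{C}). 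Combined with the generation statement for $H(\mathcal{G})$ via tensor products and index permutations, this proves the theorem.

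I expect the main obstacle to be a purely bookkeeping one, namely verifying that the correspondence $R$ combined with the characteristic map $\chi$ really sends the particular representatives (\ref{A-graph}), (\ref{B-graph}), (\ref{C-graph}) to the formulas (\ref{A})--(\ref{C}) with the correct signs and symmetrizations; this is a direct computation using (\ref{Gauss-in-coord}) and (\ref{twist}), analogous to the pattern worked out in Example \ref{ex}. All the genuinely cohomological content has already been done in Theorem \ref{FT} and the preceding propositions, so the remaining argument is essentially a translation between the graph and covariant tensor languages.
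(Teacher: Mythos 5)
Your proposal is correct and follows essentially the same route as the paper: the theorem is stated there as a summary of Section 5, whose content is exactly the chain you describe — reduction to $H(\mathbb{E})^{\mathrm{inv}}$ via the definition of the characteristic map, passage to the graph complex via $R$ in the stable range, the K\"unneth and Losik reductions leaving only the polygon, tree, and cyclic graph cohomologies, and the translation of the basis cocycles (\ref{A-graph}), (\ref{B-graph}), (\ref{C-graph}) into the formulas (\ref{A})--(\ref{C}) using $\chi\circ R(\gamma_1)=\Lambda$ and $\chi\circ R(\gamma_2)=\nabla\Lambda$. No gaps; the bookkeeping step you flag is exactly the one the paper handles via Example \ref{ex} and the explicit cocycle constructions preceding the theorem.
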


\section{Intrinsic characteristic classes}
So far  we have dealt with construction and classification of the
characteristic classes associated to flat $Q$-manifolds. In this
section, we are going to extend the above consideration from the
flat to arbitrary $Q$-manifold. To begin with we note that the
straightforward substitution of an arbitrary, non-flat, symmetric
connection to formulas (\ref{A}-\ref{C}) fails to produce the
universal cocycles. Taking for example $A_1$ we get

\begin{equation*}\label{}
    \delta \mathrm{Str}(\Lambda)=\nabla_Q \mathrm{Str}(\nabla
    Q)=\frac12\mathrm{Str}(R_{QQ})\neq 0\,,
\end{equation*}
where
\begin{equation*}
R_{XY}=[\nabla_X,\nabla_Y]-\nabla_{[X,Y]} \in
\mathfrak{A}(M)\qquad \forall X,Y\in \frak{X}(M)
\end{equation*} is the curvature of
$\nabla$. We can then try to restore the $\delta$-closedness of
the tensors (\ref{A}-\ref{C}), where $\nabla$ is now an arbitrary
symmetric connection, by adding to them appropriate
curvature-dependent terms. Most easily this can be done for the
characteristic cocycles of $B-$ and $C-$series. Define the
$(2,1)$-tensor field $\mathrm{B}_1$ as a $C^\infty(M)$-linear
homomorphism from $\mathfrak{X}(M)$ to $\mathfrak{A}(M)$:
\begin{equation*}\label{}
    X\mapsto\mathrm{B}_1(X)=\nabla_X\Lambda-R_{XQ}\,.
\end{equation*}

A calculation similar to (\ref{LB}) shows that $\mathrm{B}_1$ is a
$Q$-invariant tensor for an arbitrary (i.e., not necessary flat)
connection \footnote{A simple way to convince oneself that $
\mathrm{B}_1$ is $\delta$-closed is to observe that
$\mathrm{B}_1=\delta\Gamma$, with $\Gamma=(\Gamma_{ij}^k)$ being
here the Christoffel symbols of the symmetric connection
$\nabla$.}. Therefore by making replacement $
B_1=\nabla\Lambda\mapsto {\mathrm{B}}_1 $ in (\ref{B}) and
(\ref{C}), we get two infinite series of universal cocycles
$\{\mathrm{B}_n\}$ and $\{\mathrm{C}_n\}$ that generalize $B$- and
$C$-series to arbitrary $Q$-manifolds. Theorem \ref{independing
thm} ensures that the $\delta$-cohomology classes of these
cocycles do not depend on the choice of symmetric connection, so
that we have two well-defined series of characteristic classes
$[\mathrm{B}_n]$, $[\mathrm{C}_n]\in H_Q(M)$. We call these
characteristic classes \textit{intrinsic} to stress the fact that
they may well be nontrivial even for a flat $Q$-manifold. In other
words, the intrinsic characteristic classes are intrinsically
related to the structure of the homological vector field rather
than to the topology of the underlying manifold. The latter may
prevent the existence of a flat connection and give rise to
universal cocycles that essentially involve the curvature and
vanish in the zero-curvature limit. These last cocycles are called
\textit{vanishing} ones. The typical examples of vanishing
universal cocycles are the following $Q$-invariant functions:

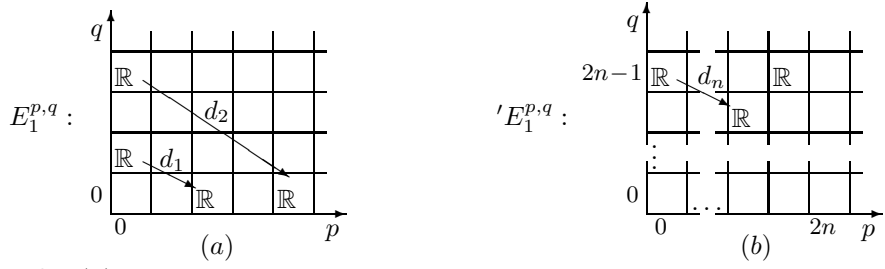
\begin{figure}[t]\label{SpectralSeq1}
\unitlength 0.54mm % = 2.85pt

\begin{picture}(60,65)(55,-5)
\multiput(0,0)(0,10){5}{\line(1,0){53}}   %% линовка вертикальная
\multiput(0,0)(10,0){6}{\line(0,1){45}}   %%         горизонтальная
\put(-25,22){\footnotesize{$E_{1}^{p, q}:$}}          %% іодправить здесь <===========================================
\put(1,11.5){\footnotesize{$\mathbb{R}$}}
\put(1,31.5){\footnotesize{$\mathbb{R}$}}
\put(21,1.5){\footnotesize{$\mathbb{R}$}}
\put(41,1.5){\footnotesize{$\mathbb{R}$}}
\put(8,13){\vector(2,-1){13}}  %%стрелки
\put(8,33){\vector(3,-2){36}}
\put(12,11.5){\footnotesize{$d_{1}$}} %% надписи к стрелкам
\put(23,23.5){\footnotesize{$d_{2}$}} \put(0,43){\vector(0,1){7}}
\put(-5,44){\footnotesize{$q$}} \put(53,0){\vector(1,0){5}}
\put(53,-5.5){\footnotesize{$p$}} \put(-5,3){\scriptsize{$0$}}
\put(1,-5){\scriptsize{$0$}} \put(22,-10){\footnotesize{$(a)$}}

\put(95,22){\footnotesize{$'E_{1}^{p, q}:$}}            %% іодправить здесь <==========================================
\multiput(132,0)(0,10){5}{\line(1,0){13}} \multiput(149,0)(0,10){5}{\line(1,0){36}}   %% линовка вертикальная
\multiput(132,0)(10,0){6}{\line(0,1){13}} \multiput(132,17)(10,0){6}{\line(0,1){28}}   %%         горизонтальная
\put(133,31.5){\footnotesize{$\mathbb{R}$}} %% надписи группv
\put(116,33){\scriptsize{$2n\!-\!1$}}
\put(127,3){\scriptsize{$0$}} \put(134,-5){\scriptsize{$0$}}
\put(172,-5){\scriptsize{$2n$}}
\put(153,21.5){\footnotesize{$\mathbb{R}$}}
\put(163,31.5){\footnotesize{$\mathbb{R}$}}
\put(143,1){\footnotesize{$\ldots$}}
\put(133,11){\footnotesize{$\vdots$}}
\put(139.5,33){\vector(2,-1){13}} %%стрелка
\put(144.5,32){\footnotesize{$d_{n}$}} %%подпись к стрелке
\put(132,45){\vector(0,1){7}} \put(127,46){\footnotesize{$q$}}
\put(185,0){\vector(1,0){5}} \put(185,-5.5){\footnotesize{$p$}}
\put(155,-10){\footnotesize{$(b)$}}
\end{picture}

\caption{\protect\small{$(a)$ The transgression from the proof of
Theorem \ref{A-P}: the nonzero fiber and base cells  correspond
to the one-dimensional spaces generated by the universal cocycles
$A_n$ and $P_n$, respectively. $(b)$  The first term of the
spectral sequence from the proof of Theorem \ref{dexrep}.}}
\end{figure}

\begin{equation}\label{Pvan}
    P_n=\mathrm{Str}(\mathrm{R}^n)\,,
\end{equation}
where $\mathrm{R}=R_{QQ}\in \frak{A}(M)$. One can also view the
curvature of $\nabla$ as a two-form $\mathbf{R}$ with values in
$\frak{A}(M)$ and define the function $P_n$ as the value of the
$2n$-form $\mathbf{P}_n=\mathrm{Str}(\mathbf{R}^n)$ on the
homological vector field\footnote{The homological vector field
being odd, the complete contraction of $\mathbf{P}_n$ with
$Q^{\otimes 2n}$ is not equal to zero identically.},
$P_n=\mathbf{P}_n(Q)$. Then the $\delta$-closedness of $P_n$
immediately  follows from the $d$-closedness of $\mathbf{P}_n$:
\begin{equation*}\label{}
    \delta P_n= (d\mathbf{P}_n)(Q)=0\,.
\end{equation*}
The de Rham classes of the $2n$-forms $\mathbf{P}_n$ are known as
the Pontryagin characters of the tangent bundle $TM$. For a deeper
discussion of the Chern-Weyl theory of characteristic classes in
the category of supervector bundles we refer to \cite{Q} and
\cite{KSt}. The latter paper contains also a very natural
generalization of the Chern-Weyl construction to the category of
$Q$-bundles.

 The next
theorem establishes a one-to-one correspondence between the
vanishing cocycles (\ref{Pvan}) and the universal cocycles  of
$A$-series (\ref{A}).

\begin{thm}[\cite{LS}]\label{A-P}
Let $M$  be a $Q$-manifold with symmetric connection. Then for
each $n\in \mathbb{N}$ there exists an invariant matrix polynomial
$\mathrm{A}_n(\Lambda,\mathrm{R})\in C^\infty(M)$ in $\Lambda$ and
$\mathrm{R}$ such that

\begin{equation}\label {A-ser}
\begin{array}{c}
    \mathrm{A}_n(\Lambda,0)=\mathrm{Str}(\Lambda^{2n-1}) \quad \mbox{and}
    \quad \delta \mathrm{A}_n(\Lambda,\mathrm{R})=\binom{2n-1}{n} \mathrm{Str}(\mathrm{R}^{n})\,.
    \end{array}
\end{equation}
\end{thm}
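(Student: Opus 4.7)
The statement is a Chern--Simons type transgression: $\mathrm{Str}(\mathrm{R}^n)$ plays the role of the top characteristic form and $\mathrm{Str}(\Lambda^{2n-1})$ its secondary class, with $\mathrm{A}_n(\Lambda,\mathrm{R})$ the universal polynomial interpolating between them. My plan is to imitate the classical Chern--Simons argument, treating $\Lambda$ as an odd ``connection'', $\mathrm{R}=R_{QQ}$ as its ``curvature'', and $\delta=L_Q$ as the ambient ``exterior differential''.

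The linchpin is the universal structural identity
\[
L_Q\Lambda \;=\; -\Lambda^2 + \tfrac{1}{2}\mathrm{R}
\]
in $\mathfrak{A}(M)$ (up to sign conventions), provable by direct computation from $\Lambda(X)=\nabla_XQ$, the definition of $R_{XY}$, and the torsion-freeness of $\nabla$. In the flat limit this degenerates to the super Maurer--Cartan equation $L_Q\Lambda=-\Lambda^2$ which, together with the cyclic invariance of $\mathrm{Str}$ and the vanishing $\mathrm{Str}(\Lambda^{2n})=0$ for odd $\Lambda$, gives the $\delta$-closedness of $\mathrm{Str}(\Lambda^{2n-1})$ recovered in Section~5. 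For a general connection the same computation yields $\delta\,\mathrm{Str}(\Lambda^{2n-1})=\tfrac{2n-1}{2}\,\mathrm{Str}(\mathrm{R}\,\Lambda^{2n-2})$, exhibiting the obstruction to $\delta$-closedness.

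To absorb this obstruction, I would iteratively apply $\Lambda^2=-L_Q\Lambda+\tfrac{1}{2}\mathrm{R}$ inside $\mathrm{Str}(\mathrm{R}^k\Lambda^{2(n-k)})$, at each stage using the Bianchi-type identity $L_Q\mathrm{R}=[\Lambda,\mathrm{R}]$ (deduced from the second Bianchi identity for $\nabla$ together with $[Q,Q]=0$) to peel off a $\delta$-exact part together with a piece carrying one more factor of $\mathrm{R}$. The cleanest bookkeeping is via the formal generating function
\[
F(s)\;=\;\mathrm{Str}\!\left(\Lambda\bigl(1-s\tfrac{1}{2}\mathrm{R}+s\Lambda^2\bigr)^{-1}\right),
\]
whose $\delta$-image can be computed in closed form using the identity above, yielding a geometric series in $\mathrm{R}$ modulo a $\delta$-exact remainder. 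Matching coefficients of $s^{n-1}$ then produces both $\mathrm{A}_n$ and the coefficient of $\mathrm{Str}(\mathrm{R}^n)$ on the right-hand side of \eqref{A-ser}, while setting $\mathrm{R}=0$ in $\mathrm{A}_n$ reproduces $\mathrm{Str}(\Lambda^{2n-1})$ by construction.

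The main obstacle is combinatorial: verifying that after the sign-sensitive cyclic rearrangements the coefficient in front of $\mathrm{Str}(\mathrm{R}^n)$ is exactly $\binom{2n-1}{n}$. Since the claim is a universal identity in the symbols $\Lambda$ and $\mathrm{R}$, one may pass to the classifying $Q$-space $\mathbb{E}$ of Section~3 where the verification reduces to a purely algebraic computation on the free generators. The appearance of the central binomial coefficient $\binom{2n-1}{n}=\tfrac{1}{2}\binom{2n}{n}$ is reminiscent of the expansion $(1-4u)^{-1/2}=\sum_n\binom{2n}{n}u^n$ and is, in the scalar proxy $\Lambda\mapsto x$, $\mathrm{R}\mapsto y$, exactly the combinatorics produced by $F(s)$ above; extracting this from the operator-valued version is the heart of the proof.
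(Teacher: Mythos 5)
Your reduction of the theorem to a universal identity in the two symbols $\Lambda$ and $\mathrm{R}$ subject to $\delta\Lambda=\Lambda^2+\tfrac12\mathrm{R}$ and $\delta\mathrm{R}=[\Lambda,\mathrm{R}]$ is exactly the reduction the paper makes, and your opening computation (closedness of $\mathrm{Str}(\Lambda^{2n-1})$ in the flat case, the obstruction $\mathrm{Str}(\mathrm{R}\Lambda^{2n-2})$ in general) is correct. But the heart of the theorem --- the \emph{existence} of correction terms closing up the iteration, together with the value $\binom{2n-1}{n}$ of the final coefficient --- is precisely the part you defer, and your proposed mechanism for it does not yet work. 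The generating function $F(s)=\mathrm{Str}\bigl(\Lambda(1-s\tfrac12\mathrm{R}+s\Lambda^2)^{-1}\bigr)$ is an ansatz, not a computation: because $\Lambda$ and $\mathrm{R}$ do not commute, expanding the resolvent produces arbitrary cyclic words in $\{\mathrm{R},\Lambda^2\}$ prefixed by $\Lambda$, and the claim that $\delta F(s)$ collapses to ``a geometric series in $\mathrm{R}$ modulo a $\delta$-exact remainder'' is unproven and far from obvious. That the answer is genuinely not a sum of pure powers $\mathrm{Str}(\mathrm{R}^k\Lambda^{2(n-k)-1})$ is visible already at $n=4$, where the correct $\mathrm{A}_4$ contains the mixed word $\tfrac74\,\mathrm{Str}(\mathrm{R}\Lambda\mathrm{R}\Lambda^2)$; so any honest bookkeeping must control an exponentially growing set of cyclic words, and nothing in your sketch shows the iteration terminates in a total derivative plus a multiple of $\mathrm{Str}(\mathrm{R}^n)$. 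A second, smaller misstep: you propose to verify the coefficient on the classifying $Q$-space $\mathbb{E}$ of Section~3, but that space is built from a \emph{flat} connection and contains no curvature generator, so it is not the right universal object; the correct one is the free DGA on two generators $a,b$ with $da=a^2+b$, $db=[a,b]$.

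For contrast, the paper closes exactly this gap by an existence argument that avoids the combinatorics: it forms the cyclic quotient $\widehat{W}=W/[W,W]$ of that free DGA, proves $H(\widehat{W})=0$ by filtering by word length, and then refilters by the number of $b$'s. In the resulting first-quadrant spectral sequence the only surviving $E_1$-entries are the fiber classes $a^{2n-1}$ and the base classes $b^n$; since the abutment vanishes, each $a^{2n-1}$ must transgress onto a nonzero multiple of $b^n$, which \emph{is} the statement that correction terms $c_1,\dots,c_{n-1}$ exist with $d(a^{2n-1}+c_1+\cdots+c_{n-1})=\alpha_n b^n$, $\alpha_n\neq 0$. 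The coefficient $\alpha_n=\binom{2n-1}{n}$ is then an isolated finite computation, and $\mathrm{A}_n$ is obtained by applying $\mathrm{Str}\circ h$ with $h(a)=\Lambda$, $h(b)=\tfrac12\mathrm{R}$. If you want to salvage your direct approach, you would either have to carry out the resolvent computation in the free cyclic algebra in full (tracking all mixed words), or supply an independent acyclicity argument for the space of cyclic words --- at which point you have reconstructed the paper's proof.
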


\begin{cor}
The vanishing cocycles $P_n=\mathrm{Str}(\mathrm{R}^n)$ are
trivial.
\end{cor}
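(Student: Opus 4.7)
The plan is to mimic the classical Chern-Simons transgression inside the algebra of universal covariants built from $\Lambda=\nabla Q$ (an odd endomorphism of $TM$) and $\mathrm{R}=R_{QQ}$ (an even endomorphism). The first step is to reduce the $\delta$-action on supertrace polynomials in $\Lambda$ and $\mathrm{R}$ to purely algebraic manipulations. A direct computation in local coordinates, using the definition $\Lambda(X)=\nabla_X Q$, torsion-freeness of $\nabla$, and the integrability $[Q,Q]=0$, yields the two structure equations
\begin{equation*}
\delta \Lambda \;=\; \Lambda^2 + \mathrm{R}, \qquad \delta \mathrm{R} \;=\; [\Lambda, \mathrm{R}],
\end{equation*}
in a suitable normalization of $\mathrm{R}$; the second equation is forced by $\delta^2=0$ once the first is known. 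Combined with cyclicity of $\mathrm{Str}$ and the elementary identity $\mathrm{Str}(\Lambda^{2k})=0$ (which follows from $|\Lambda|=1$), these convert any $\delta$-computation into a finite piece of algebra.

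Next, following the Chern-Simons recipe, introduce the rescaled curvature
\begin{equation*}
\mathrm{R}_t \;:=\; t\,\mathrm{R}+t(1-t)\,\Lambda^2,\qquad t\in[0,1],
\end{equation*}
arranged so that $\mathrm{R}_0=0$, $\mathrm{R}_1=\mathrm{R}$, and $\mathrm{R}_t=\delta(t\Lambda)-(t\Lambda)^2$. The super-algebraic analogue of the classical identity $d\,\mathrm{CS}_{2n-1}(A)=\mathrm{Str}(F^n)$ then reads
\begin{equation*}
\delta\int_0^1 \mathrm{Str}\bigl(\Lambda\,\mathrm{R}_t^{\,n-1}\bigr)\,dt \;=\; \tfrac{1}{n}\,\mathrm{Str}(\mathrm{R}^n).
\end{equation*}
The argument is structurally identical to the classical Chern-Simons derivation: applying $\delta$ under the integral and using cyclicity together with the structure equations, the integrand rearranges to $\tfrac{1}{n}\tfrac{d}{dt}\mathrm{Str}(\mathrm{R}_t^n)$, and the fundamental theorem of calculus delivers the right-hand side.

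Specialising to $\mathrm{R}=0$ gives $\mathrm{R}_t=t(1-t)\Lambda^2$, and an Euler beta-integral evaluates
\begin{equation*}
\int_0^1 \mathrm{Str}\bigl(\Lambda\,\mathrm{R}_t^{\,n-1}\bigr)\,dt\,\Big|_{\mathrm{R}=0}
\;=\; \int_0^1 t^{n-1}(1-t)^{n-1}\,dt\cdot\mathrm{Str}(\Lambda^{2n-1})
\;=\; \tfrac{(n-1)!^{\,2}}{(2n-1)!}\,\mathrm{Str}(\Lambda^{2n-1}).
\end{equation*}
Therefore, setting
\begin{equation*}
\mathrm{A}_n(\Lambda,\mathrm{R}) \;:=\; \tfrac{(2n-1)!}{(n-1)!^{\,2}}\int_0^1 \mathrm{Str}\bigl(\Lambda\,\mathrm{R}_t^{\,n-1}\bigr)\,dt
\end{equation*}
automatically enforces $\mathrm{A}_n(\Lambda,0)=\mathrm{Str}(\Lambda^{2n-1})$, while the transgression identity gives
\begin{equation*}
\delta\,\mathrm{A}_n \;=\; \tfrac{(2n-1)!}{n\,(n-1)!^{\,2}}\,\mathrm{Str}(\mathrm{R}^n) \;=\; \binom{2n-1}{n}\,\mathrm{Str}(\mathrm{R}^n),
\end{equation*}
exactly as claimed. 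A quick check at $n=2$ reproduces $\mathrm{A}_2=\mathrm{Str}(\Lambda^3)+3\,\mathrm{Str}(\Lambda\mathrm{R})$ with $\delta\mathrm{A}_2=3\,\mathrm{Str}(\mathrm{R}^2)$, matching $\binom{3}{2}=3$.

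The main obstacle is the derivation of the structure equations and the book-keeping in the transgression identity: the combinatorics is classical, but one must track super-signs carefully through the cyclic rearrangements of $\mathrm{Str}$. The corollary is then immediate: since $\binom{2n-1}{n}\neq 0$, the relation $P_n=\binom{2n-1}{n}^{-1}\delta\,\mathrm{A}_n$ displays $P_n$ as $\delta$-exact, hence trivial in $H(\mathcal{A})$.
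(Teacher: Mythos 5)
Your proposal is correct, and it reaches the corollary by a genuinely different route from the paper. The paper derives triviality of $P_n$ from Theorem \ref{A-P}, whose proof is a homological existence argument: it models the scalar covariants by the free DGA $W=\langle a,b\rangle$ with $da=a^2+b$, $db=[a,b]$, passes to the cyclic quotient $\widehat{W}$, proves $H(\widehat{W})=0$, filters by powers of $b$, and reads off from the resulting first-quadrant spectral sequence that the fiber classes $a^{2n-1}$ must transgress, $d(a^{2n-1}+c_1+\cdots+c_{n-1})=\alpha_n b^n$, with the coefficient $\alpha_n=\binom{2n-1}{n}$ determined only ``with a little work.'' You instead produce the primitive in closed form as a Chern--Simons transgression: with $\Lambda_t=t\Lambda$ and $\mathrm{R}_t=\delta\Lambda_t-\Lambda_t^2=t\mathrm{R}+t(1-t)\Lambda^2$, the identities $\delta\mathrm{R}_t=[\Lambda_t,\mathrm{R}_t]$ and $\mathrm{Str}(\Lambda\,\mathrm{R}_t^{n-1}\Lambda)=-\mathrm{Str}(\Lambda^2\mathrm{R}_t^{n-1})$ give
\begin{equation*}
\delta\,\mathrm{Str}\bigl(\Lambda\,\mathrm{R}_t^{\,n-1}\bigr)=\mathrm{Str}\bigl(\dot{\mathrm{R}}_t\,\mathrm{R}_t^{\,n-1}\bigr)=\tfrac1n\tfrac{d}{dt}\,\mathrm{Str}\bigl(\mathrm{R}_t^{\,n}\bigr)\,,
\end{equation*}
so integration over $[0,1]$ and the Beta integral $B(n,n)=(n-1)!^2/(2n-1)!$ deliver both normalization conditions of Theorem \ref{A-P} and the coefficient $\binom{2n-1}{n}$ simultaneously; I checked that your $\mathrm{A}_2$ and $\mathrm{A}_3$ reproduce the paper's explicit expressions once your rescaling $\mathrm{R}\mapsto\frac12\mathrm{R}$ (you use $\delta\Lambda=\Lambda^2+\mathrm{R}$ where the paper has $\delta\Lambda=\Lambda^2+\frac12\mathrm{R}$) is undone, and this rescaling is of course harmless for exactness of $P_n$. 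What each approach buys: yours is more elementary and fully explicit, giving a universal formula for every $\mathrm{A}_n$ at once with no spectral sequences; the paper's abstract computation of $H(\widehat{W})=0$ is not wasted effort, however, since the same cyclic complex is reused in Appendix B to show that $\mathcal{A}^{0,0}$ is acyclic and that the $P_n$ become \emph{nontrivial} when restricted to the ideal $\mathcal{R}$ of vanishing covariants -- facts your construction does not address but which the paper needs for the obstruction-theoretic interpretation of the $A$-series.
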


\begin{cor}
If $P_n=0$, then the function $\mathrm{A}_n(\Lambda, \mathrm{R})$
is a $\delta$-cocycle.
\end{cor}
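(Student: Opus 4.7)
The final statement is an immediate consequence of Theorem~\ref{A-P}: that theorem provides an invariant polynomial $\mathrm{A}_n$ with $\delta\mathrm{A}_n = \binom{2n-1}{n}\mathrm{Str}(\mathrm{R}^n) = \binom{2n-1}{n}P_n$, so the hypothesis $P_n=0$ forces $\delta\mathrm{A}_n=0$, i.e., $\mathrm{A}_n$ is a $\delta$-cocycle. My proof proposal therefore amounts to sketching how I would prove Theorem~\ref{A-P}, whose content is the nontrivial part; the corollary of interest then follows by direct substitution.

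The plan is a super-analog of the Chern--Simons transgression. First, I would derive the structural identity $\nabla_Q\Lambda+\Lambda^{2}=\tfrac{1}{2}\mathrm{R}$, which extends the flat-case relation $\nabla_Q\Lambda=-\Lambda^{2}$ already used in equation~\eqref{LB}. This follows from $[Q,Q]=0$, the torsion-free condition, and the definition $\mathrm{R}=R_{QQ}=2\nabla_Q^{2}$. Combined with two auxiliary facts---the vanishing $\mathrm{Str}(\Lambda^{2k})=0$ (forced by cyclicity of the supertrace of an odd endomorphism) and the Bianchi-type identity $\nabla_Q\mathrm{R}=0$ (second Bianchi with $X=Y=Z=Q$)---these inputs already give $\delta\mathrm{Str}(\Lambda^{2n-1})=\tfrac{1}{2}\mathrm{Str}(\mathrm{R}\,\Lambda^{2n-2})$, recovering $\delta\mathrm{Str}(\Lambda)=\tfrac{1}{2}\mathrm{Str}(\mathrm{R})$ for $n=1$.

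For the transgression, introduce $\Lambda_t:=t\Lambda$ and the interpolating curvature
\[
\mathrm{R}_t\;:=\;\nabla_Q\Lambda_t+\Lambda_t^{\,2}\;=\;\tfrac{t}{2}\mathrm{R}+(t^{2}-t)\Lambda^{2},
\]
so $\mathrm{R}_0=0$ and $\mathrm{R}_1=\tfrac{1}{2}\mathrm{R}$. A routine super-Bianchi computation then yields the transgression identity
\[
\tfrac{d}{dt}\mathrm{Str}(\mathrm{R}_t^{\,n})\;=\;n\,\delta\,\mathrm{Str}(\Lambda\,\mathrm{R}_t^{\,n-1}),
\]
whose integration over $[0,1]$ produces a universal polynomial $\tilde{\mathrm{A}}_n:=n\int_0^1\mathrm{Str}(\Lambda\,\mathrm{R}_t^{\,n-1})\,dt$ with $\delta\tilde{\mathrm{A}}_n=2^{-n}\mathrm{Str}(\mathrm{R}^n)$. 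Specializing to $\mathrm{R}=0$ reduces $\mathrm{R}_t$ to $-t(1-t)\Lambda^{2}$, and the Beta integral $\int_0^1 t^{n-1}(1-t)^{n-1}\,dt=\frac{((n-1)!)^2}{(2n-1)!}$ gives $\tilde{\mathrm{A}}_n(\Lambda,0)=(-1)^{n-1}\frac{n((n-1)!)^2}{(2n-1)!}\mathrm{Str}(\Lambda^{2n-1})$. Rescaling $\tilde{\mathrm{A}}_n$ by the reciprocal of this scalar produces $\mathrm{A}_n$ with $\mathrm{A}_n(\Lambda,0)=\mathrm{Str}(\Lambda^{2n-1})$; the identity $\binom{2n-1}{n}=\frac{(2n-1)!}{n!(n-1)!}$, combined with the normalization of $\mathrm{R}=R_{QQ}$ used in the paper (which absorbs the remaining factor $2^{-n}$), then gives $\delta\mathrm{A}_n=\binom{2n-1}{n}\mathrm{Str}(\mathrm{R}^n)$.

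The main obstacle will be careful sign tracking in the super setting: verifying the structural identity for $\nabla_Q\Lambda$, the super-Bianchi relation for $\mathrm{R}_t$, and confirming that all ``spurious'' $\Lambda^{2k}$-type contributions in the transgression computation collapse under the supertrace by virtue of $\mathrm{Str}(\Lambda^{2k})=0$. Once these sign computations are in order, the first corollary is immediate because $P_n=\binom{2n-1}{n}^{-1}\delta\mathrm{A}_n$ exhibits $P_n$ as a coboundary, and the final corollary is immediate because substituting $P_n=0$ into $\delta\mathrm{A}_n=\binom{2n-1}{n}P_n$ gives $\delta\mathrm{A}_n=0$.
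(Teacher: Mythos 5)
The corollary itself is immediate: Theorem \ref{A-P} gives $\delta \mathrm{A}_n=\binom{2n-1}{n}P_n$ (up to the paper's normalization of $\mathrm{R}$), so $P_n=0$ forces $\delta\mathrm{A}_n=0$. You say exactly this, and the paper offers no separate argument either, so on the statement in question you are correct and in agreement. The substantive part of your proposal is a sketch of Theorem \ref{A-P} itself, and there you take a genuinely different route: the paper works in the free bigraded DGA on $a,b$ with $da=a^2+b$, passes to the cyclic quotient $\widehat{W}$, and extracts the transgression $d_{2n}a^{2n-1}=\binom{2n-1}{n}b^n$ from the spectral sequence of the degree filtration (an existence argument, with the coefficient computed separately), whereas your Chern--Simons-style interpolation would, if correct, produce $\mathrm{A}_n$ by an explicit closed integral formula. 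That is a real gain in explicitness.

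However, your transgression as written has a concrete error: the interpolating curvature must be taken with respect to $\delta$, not $\nabla_Q$. Your $\mathrm{R}_t:=\nabla_Q\Lambda_t+\Lambda_t^{2}=\tfrac{t}{2}\mathrm{R}+(t^{2}-t)\Lambda^{2}$ does not satisfy the Bianchi identity $\delta\mathrm{R}_t=[\Lambda_t,\mathrm{R}_t]$, and the identity $\tfrac{d}{dt}\mathrm{Str}(\mathrm{R}_t^{n})=n\,\delta\,\mathrm{Str}(\Lambda\mathrm{R}_t^{n-1})$ already fails at $n=2$: using $\delta\Lambda=\Lambda^2+\tfrac12\mathrm{R}$, $\delta\mathrm{R}=[\Lambda,\mathrm{R}]$ one finds the two sides differ by $2t\,\mathrm{Str}(\mathrm{R}\Lambda^{2})$. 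The correct interpolation is $\mathrm{R}_t=\delta\Lambda_t-\Lambda_t^{2}=\tfrac{t}{2}\mathrm{R}+(t-t^{2})\Lambda^{2}$; with this choice Bianchi holds, the transgression identity checks out, the Beta integral gives $n\int_0^1 t^{n-1}(1-t)^{n-1}dt=\binom{2n-1}{n}^{-1}$ with \emph{no} factor $(-1)^{n-1}$, and one reproduces exactly the paper's $\mathrm{A}_2=\mathrm{Str}(\Lambda^{3}+\tfrac32\mathrm{R}\Lambda)$ --- your alternating sign is the symptom of the wrong sign in $\mathrm{R}_t$. Finally, the factor $2^{-n}$ you propose to absorb into conventions is genuinely there (the paper substitutes $b\mapsto\tfrac12\mathrm{R}$, so the coboundary is $\binom{2n-1}{n}\mathrm{Str}((\tfrac12\mathrm{R})^{n})$, as the direct check $\delta\mathrm{Str}(\Lambda)=\tfrac12\mathrm{Str}(\mathrm{R})$ confirms); it is harmless for the corollary, since any nonzero multiple of $P_n$ vanishing still yields $\delta\mathrm{A}_n=0$.
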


\begin{proof}
We start with some auxiliary algebraic constructions, which model
our geometric situation. Namely, consider a DGA-algebra
$W=\bigoplus_{n>0} W_n$ over $\mathbb{R}$ freely generated by one
element $a$ of degree $1$ and one element $b$ of degree $2$. The
elements of $W$ are just linear combinations of words associated
to the binary alphabet $\{a,b\}$ and multiplication is given by
concatenation of words. Denote by $|w|$ the degree of the word
$w$.  The action of the differential $d:W_n\rightarrow W_{n+1}$ on
the generators reads
\begin{equation}\label{diff}
    d a= a^2+b\,,\qquad db=[a,b]\,,
\end{equation}
and extends  to the whole $W$ by the Leibniz rule:
\begin{equation*}
d(w_1w_2)=dw_1w_2+(-1)^{|w_1|}w_1dw_2\,.
\end{equation*}

We define the \textit{cyclic space} $\widehat{W}$ as the quotient
$\widehat{W}=W/[W,W]$, where the subspace $[W,W]\subset W$ is
generated by commutators
\begin{equation*}
w_1w_2-(-1)^{|w_1||w_2|}w_2w_1\,. \end{equation*} The action of the
differential (\ref{diff}) passes trough the quotient making the
cyclic space $\widehat{W}$ into a cochain complex. Explicitly,
\begin{equation*}\label{}
    \hat{d}a=-a^2+b\,,\qquad \hat{d}b=0\,.
\end{equation*}
The complex $(\widehat{{W}}, \hat d)$ is acyclic. This can be
easily seen from the filtration of $\widehat{W}$ by the length of
words:
\begin{equation*}\label{}
      W_n= F'_1\widehat{W}_n\supset F'_2\widehat{W}_n \supset
      \cdots\supset F'_{n+1} \widehat{W}_n=0\,.
\end{equation*}
The zero differential of the corresponding spectral sequence
$\{E'_r, d'_r\}$ acts on the generators as
\begin{equation*}\label{}
    d'_0a=b \,, \qquad d'_0b=0\,.
\end{equation*}
So, the complex $(E'_0,d'_0)$ is obviously acyclic and
$H(\widehat{W})=0$.

Let us now introduce one more grading on the space $\widehat{W}$
by prescribing the following degrees to the generators:
$$\mathrm{deg}\, a=0\,,\qquad \mathrm{deg}\, b =2\,.$$
Associated to this grading is a decreasing filtration
\begin{equation}\label{barF}
    \widehat{W}_n={F}_0\widehat{W}_n\supset
    {F}_1\widehat{W}_n\supset\cdots\supset {F}_{n+1}\widehat{W}_n=0\,.
\end{equation}
Here ${F}_p\widehat{W}=\bigoplus_{n\geq p}\widehat{W}^{(n)}$ and
$\mathrm{deg}\, \widehat{W}^{(n)}=n$. Since the deferential
(\ref{diff}) preserves the filtration (\ref{barF}), we have the
first quadrant spectral sequence $\{{E}_r,{d}_r\}_{r\geq 0}$
converging to $H(\widehat{W})=0$. By definition,
${E}_0^{p,q}={{F}}_p\widehat{W}_{p+q}/{F}_{p+1}\widehat{W}_{p+q}$.
The zero differential of this spectral sequence is completely
defined by its action on the generators of $W$:
\begin{equation}\label{d0}
    {d}_0 a=-a^2\,,\qquad {d}_0b=0\,.
\end{equation}
Notice that ${E}^{p,q}_0=0$, if $p$ is odd. The complex $({E}_0,
{d}_0)$ splits into the direct product ${E}_0=A\oplus B\oplus C$
of three subcomplexes. The space $A=\bigoplus_{q> 0} {E}^{0,q}$ is
generated by the odd powers of the letter $a$,
$B=\bigoplus_{p>0}{E}^{p,0}$ is given by polynomials in  $b$, and
$C=\bigoplus_{p,q>0}{E}^{p,q}$ is spanned by cyclic words
containing at least one syllable $ab$. It is clear from (\ref{d0})
that $H(A)\simeq A$ and $H(B)\simeq B$. The remaining complex $C$
turns out to be acyclic. Indeed, any word $w\in C$ is decomposed
into a product $w=a_{n_1}a_{n_2}\cdots a_{n_k}$ of syllables
$a_n=a^nb$ and this decomposition is unique up to a cyclic
permutation of factors. We have
\begin{equation*}\label{}
    {d}_0 a_n=\left\{%
\begin{array}{ll}
    - a_{n+1}, & \hbox{if $n$ is odd;} \\
    0, & \hbox{otherwise.} \\
\end{array}%
\right.
\end{equation*}
Define a differentiation  $h: C\rightarrow C$ by its action on
syllables:
\begin{equation*}\label{}
    h a_n=\left\{%
\begin{array}{ll}
    - a_{n-1}, & \hbox{if $n>0$ is even;} \\
    0 , & \hbox{otherwise.} \\
\end{array}%
\right.
\end{equation*}
Applying the operator $\Delta=d_0h+h d_0$ to a word $w\in C$, we
get $\Delta w= mw$, where $m>0$ is the number of syllables $a_n$
with $n>0$ the word $w$ consists of. Thus, the operator $\Delta$
is invertible and $\Delta^{-1}h:C\rightarrow C$ is a contracting
homotopy. The nonzero entries of the group ${E}_1$ are depicted in
Fig.$2a$. Since ${E}_r\Rightarrow H(W)=0$, each nonzero element of
the fiber space is transgressive and ``kills'' some element on the
base. In other words, the maps ${d}_{r}:
{{E}}_{r}^{0,r-1}\rightarrow {{E}}_r^{r, 0}$, $r=2,4,...$, are
isomorphisms of one-dimensional vector spaces, so that ${d}_{2n}
a^{2n-1}=\alpha_n b^n$ for some $\alpha_n\neq 0$. The last
equation amounts to the existence of a sequence of elements
$c_m\in {E}_0^{2m,2(n-m)-1}$  such that
\begin{equation}\label{trans}
    d(a^{2n-1}+ c_1+c_2+\cdots + c_{n-1})=\alpha_n b^n\,.
\end{equation}
With a little work one can also find that
$\alpha_{n}=\binom{2n-1}{n}$.

Comparing (\ref{diff}) with the identities
\begin{equation*}\label{}
\delta\Lambda =\Lambda^2 + \frac12\mathrm{R} \,,\qquad \delta
\mathrm{R}=[\Lambda, \mathrm{R}]\,,
\end{equation*}
we see that  the homomorphism $h: W\rightarrow \mathfrak{A}(M)$
given by
\begin{equation}\label{rep}
    a\mapsto \Lambda \,,\qquad b\mapsto \frac12\mathrm{R}
\end{equation}
is a representation  of DGA-algebra $(W,d)$ as a subalgebra of
$(\mathfrak{A}(M),\delta)$. In view of the cyclic property of the
trace, the composition $\mathrm{Str}\circ h: W\rightarrow
C^\infty(M)$ descends to the cyclic space $\widehat{W}$ defining a
homomorphism of complexes. Applying this homomorphism to both
sides of Eq.(\ref{trans}), we conclude that the invariant matrix
polynomial
\begin{equation*}
\mathrm{A}_n(\Lambda,\mathrm{R})=\mathrm{Str}\circ h(a^{2n-1}+
c_1+c_2+\cdots + c_{n-1})\end{equation*} satisfies
Eqs.(\ref{A-ser}). It is not hard to write explicit expressions
for the first polynomials:
\begin{equation*}
\begin{array}{l}
\mathrm{A}_{1} = \mathrm{Str}(\Lambda)\,, \\[3mm]
\mathrm{A}_{2} = \mathrm{Str}(\Lambda^{3} + \frac32
\mathrm{R}\Lambda)
\,,  \\[3mm]
\mathrm{A}_{3} = \mathrm{Str}(\Lambda^{5} + \frac52 \mathrm{R}\Lambda^{3} + \frac{10}{4} \mathrm{R}^{2} \Lambda)\,,  \\[3mm]
\mathrm{A}_{4} = \mathrm{Str}(\Lambda^{7} + \frac72
\mathrm{R}\Lambda^{5} + \frac{14}{4} \mathrm{R}^{2} \Lambda^{3} +
\frac{7}{4} \mathrm{R} \Lambda \mathrm{R} \Lambda^{2} +
\frac{35}{8}\mathrm{R}^{3} \Lambda)\,. \label{series}
\end{array}
\end{equation*}

\end{proof}

In fact, Theorem \ref{A-P} identifies the trivial
$\delta$-cocycles $P_n$ with obstructions to extendability  of
$A$-series' characteristic classes to the non-flat $Q$-manifolds.
To prove this, we only need to show that the $\delta$-cocycles
$P_n$ become nontrivial when considered as elements of the
subcomplex of vanishing covariants $\mathcal{R}\subset
\mathcal{A}$ (see comments after (\ref{ExTr})). The proof of the
last fact is not particularly interesting and we relegate it to
Appendix \ref{A2}. Summarizing, it appears impossible to define
the construction of $A$-series (\ref{A}) in the full category of
$Q$-manifolds, i.e., without any conditions on the geometry of
$M$. Of course, the assumption that $M$ admits a flat connection
is unduly restrictive; instead, one may only suppose the
triviality of the $n$th Pontryagin character of the tangent
bundle $TM$. In the latter case there exists a form
$\mathbf{F}_n\in \Omega^{2n-1}(M)$ such that
$\mathbf{P}_n=d\mathbf{F}_n$. The form $\mathbf{F}_n$ is not
uniquely determined by the connection $\nabla$ as one is free to
add to $\mathbf{F}_n$ any closed $(2n-1)$-form. If
$F_n=\mathbf{F}_n(Q)$, then $\delta F_n=P_n$ and  by Theorem
\ref{A-P} we have the $Q$-invariant function
\begin{equation}\label{A-F}
\begin{array}{c}
    {\mathrm{A}}^F_n=\mathrm{A}_n({\Lambda},
    {R})-\binom{2n-1}{n}F_n\,.
    \end{array}
\end{equation}
If $\mathbf{F}'_n$ is another potential for the Pontryagin form,
i.e., $\mathbf{P}_n=d\mathbf{F}'_n$, then the difference
$\mathbf{K}_n=\binom{2n-1}{n}(\mathbf{F}_n-\mathbf{F}'_n)$ is
closed and defines the de Rham class $[\mathbf{K}_n]\in
H_d^{2n-1}(M)$. We have
\begin{equation}\label{F-F}
\mathrm{A}^{F'}_n-\mathrm{A}^{F}_n=K_n\,,
\end{equation}
where $K_n=\mathbf{K}_n(Q)$. In case $\mathbf{K}_n=d\mathbf{W}_n$,
the r.h.s. of (\ref{F-F}) is given by the coboundary $\delta
\mathbf{W}_n(Q)$, so that the $\delta$-cocycles
$\mathrm{A}^{F'}_n$ and $\mathrm{A}^F_n$ appear to be cohomologous
whenever $[\mathbf{K}_n]=0$. This  motivates us to consider the
homomorphism
 \begin{equation*}\label{}
h_Q: \Omega(M)\rightarrow C^{\infty}(M)
\end{equation*}
that evaluates an exterior form on the homological vector field
$Q$. Since $h_Q d+\delta h_Q=0$, $h_Q$ induces the homomorphism in
cohomology
\begin{equation*}\label{h}
    h^\ast_Q: H_d(M)\rightarrow H_\delta(M)\,.
\end{equation*}
Passing in (\ref{F-F}) to the $\delta$-cohomology classes,  we can
write
\begin{equation*}\label{}
    [\mathrm{A}_n^{F'}]- [\mathrm{A}_n^{F}]\in \mathrm{Im}h_Q^\ast\,.
\end{equation*}
The last formula just amounts to saying that the class
$[\mathrm{A}_n^{F}]+\mathrm{Im} h^\ast_Q $ does not actually
depend on the choice of $F_n$. Thus, we have proved the following

\begin{thm}
Let $M$ be a $Q$-manifold. Assume that the $n$th Pontryagin
character of the tangent bundle $TM$ is trivial. Then the
$Q$-invariant function (\ref{A-F}) represents a well-defined
element of $H_Q(M)/\mathrm{Im}h_Q^\ast$.
\end{thm}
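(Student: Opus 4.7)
The plan is to verify two things: (a) that $\mathrm{A}_n^F$ is a $\delta$-cocycle, so it defines a class in $H_Q(M)$; and (b) that this class, taken modulo $\mathrm{Im}\,h_Q^\ast$, does not depend on the choice of a $(2n-1)$-form $\mathbf{F}_n$ with $d\mathbf{F}_n=\mathbf{P}_n$. Both claims are already present in spirit in the discussion preceding the statement; the task is to organize them around the chain map $h_Q$ and Theorem \ref{A-P}.

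First I would establish the anticommutation identity $\delta\circ h_Q + h_Q\circ d=0$ on $\Omega(M)$ that was used above. This is a direct consequence of Cartan's formula $L_Q=d\,i_Q+i_Q\,d$ together with $[Q,Q]=0$: the latter ensures that contracting a $k$-form with the diagonal tensor $Q^{\otimes k}$ is compatible with $L_Q$, so evaluating $L_Q\omega$ on $Q,\ldots,Q$ reduces to $L_Q(\omega(Q,\ldots,Q))=\delta\,h_Q(\omega)$, while the remaining $d$ inside $L_Q$ produces $h_Q(d\omega)$ with an opposite sign. Applying the identity to $\mathbf{F}_n$ gives $\delta F_n=h_Q(d\mathbf{F}_n)=h_Q(\mathbf{P}_n)=P_n$. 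Combining with Theorem \ref{A-P}, which supplies $\delta\mathrm{A}_n(\Lambda,R)=\binom{2n-1}{n}\mathrm{Str}(\mathrm{R}^n)=\binom{2n-1}{n}P_n$, one obtains $\delta\mathrm{A}_n^F=0$, proving (a).

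For (b), let $\mathbf{F}_n'$ be a second primitive of $\mathbf{P}_n$. Then $\mathbf{K}_n:=\binom{2n-1}{n}(\mathbf{F}_n-\mathbf{F}_n')$ is $d$-closed, so defines a de Rham class $[\mathbf{K}_n]\in H_d^{2n-1}(M)$. By the anticommutation identity again, $K_n:=h_Q(\mathbf{K}_n)$ is $\delta$-closed and its $Q$-cohomology class is precisely $h_Q^\ast[\mathbf{K}_n]\in\mathrm{Im}\,h_Q^\ast$. Since a direct computation from the definition (\ref{A-F}) yields $\mathrm{A}_n^{F'}-\mathrm{A}_n^F=-K_n$ (up to the sign fixed by (\ref{F-F})), we conclude $[\mathrm{A}_n^{F'}]-[\mathrm{A}_n^F]\in\mathrm{Im}\,h_Q^\ast$, so the image of $[\mathrm{A}_n^F]$ in the quotient $H_Q(M)/\mathrm{Im}\,h_Q^\ast$ is independent of the chosen primitive.

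The main obstacle, such as it is, lies in carefully establishing the chain-map property $\delta h_Q+h_Q d=0$ with the correct Koszul signs: once that is in hand, the rest is a formal manipulation assembling Theorem \ref{A-P} with the relation $\delta F_n=P_n$. No topological or spectral-sequence argument is required beyond what was already developed; the theorem is essentially a clean reformulation of the paragraph preceding it, with the quotient by $\mathrm{Im}\,h_Q^\ast$ absorbing precisely the ambiguity in choosing $\mathbf{F}_n$.
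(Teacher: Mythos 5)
Your proposal is correct and follows essentially the same route as the paper: the paper's own ``proof'' is exactly the preceding discussion, namely the identity $h_Qd+\delta h_Q=0$ combined with Theorem \ref{A-P} to get $\delta\mathrm{A}_n^F=0$, and the observation that changing the primitive $\mathbf{F}_n$ shifts $\mathrm{A}_n^F$ by $K_n=\mathbf{K}_n(Q)$ with $\mathbf{K}_n$ closed, so the shift of classes lands in $\mathrm{Im}\,h_Q^\ast$. (Your sign $-K_n$ in the difference $\mathrm{A}_n^{F'}-\mathrm{A}_n^F$ disagrees with Eq.~(\ref{F-F}), but this is immaterial since $\mathrm{Im}\,h_Q^\ast$ is a subspace.)
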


It is well  known that all the Pontryagin characters
$[\mathbf{P}_{2m+1}]$ vanish for the real vector bundles over the
usual (even) manifolds, and the same statement holds true in the
category of supervector bundles. This means the existence of
classes $[\mathrm{A}_{2m+1}^{F}]+\mathrm{Im} h^\ast_Q $ for all
$Q$-manifolds. Moreover, each class
$[\mathrm{A}_{2m+1}^{F}]+\mathrm{Im} h^\ast_Q $ contains a
canonical representative  $[A^0_{2m+1}]\in H_Q(M)$, which can be
viewed as an invariant of the $Q$-manifold itself. The existence
of such a representative is established by the next two
propositions whose proofs can be found in Appendix \ref{A3}.
\begin{prop}\label{6.1}
Every  supermanifold $M$ admits a symmetric affine connection
$\nabla$ with $\mathbf{P}_{2m+1}^\nabla=0$ for all $m\geq 0$.
\end{prop}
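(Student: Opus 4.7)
The plan is to generalize to the super category the classical fact that the Levi-Civita connection of a Riemannian metric has curvature valued in the orthogonal Lie algebra, so that $\mathrm{tr}(R^{2m+1})$ vanishes pointwise. In the super setting the analog is a symmetric connection $\nabla$ compatible with a non-degenerate even graded bilinear form $G$ on $TM$; its curvature then takes values in the orthosymplectic Lie superalgebra $\mathfrak{osp}(G)$, and a sign count will force the odd supertraces of the curvature to vanish.

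First I would invoke Batchelor's theorem to identify $M$ with $\Pi E$ for an ordinary real vector bundle $E\to M_B$. Choose a Riemannian metric $g_B$ on $M_B$ and a fibre metric $g_E$ on $E$, together with a linear connection $\nabla^E$ on $E$ compatible with $g_E$. The connection $\nabla^E$ splits the canonical exact sequence
\begin{equation*}
    0\longrightarrow \pi^*\Pi E\longrightarrow TM\longrightarrow \pi^*TM_B\longrightarrow 0\,,
\end{equation*}
yielding a horizontal--vertical decomposition of $TM$. Combining $g_B$ and $g_E$ on the two summands produces an even non-degenerate form $G$ on $TM$ (graded-symmetric when $\mathrm{rank}\,E$ is even, graded-antisymmetric when $\dim M_B$ is even, and otherwise obtained after a stabilization such as adjoining a trivial rank-$0|1$ summand to $TM$, which does not affect the Pontryagin characters since the extra block of curvature is zero).

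Second, I would use a super Koszul formula to produce the unique symmetric connection $\nabla$ on $TM$ satisfying $\nabla G=0$. Because $\nabla$ preserves $G$, the curvature $\mathbf{R}$ lies in $\mathfrak{osp}(G)$, i.e.\ $\mathbf{R}^{T_G}=-\mathbf{R}$ for the super-transpose $T_G$ induced by $G$. Using the identities $\mathrm{Str}(A^{T_G})=\mathrm{Str}(A)$ and $(AB)^{T_G}=(-1)^{|A||B|}B^{T_G}A^{T_G}$, a direct computation gives
\begin{equation*}
    \mathrm{Str}(\mathbf{R}^{2m+1})=\mathrm{Str}\bigl((\mathbf{R}^{2m+1})^{T_G}\bigr)=-\mathrm{Str}(\mathbf{R}^{2m+1})\,,
\end{equation*}
and hence $\mathbf{P}_{2m+1}^{\nabla}=0$ as required.

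The main obstacle I foresee is the first step: constructing a non-degenerate even graded bilinear form on $TM$ when both $\dim M_B$ and $\mathrm{rank}\,E$ are odd, so that neither a graded-symmetric nor a graded-antisymmetric even form exists on the nose. I expect this to be handled by the stabilization trick indicated above, since Pontryagin characters are additive under direct sums and vanish identically on a flat trivial summand, thereby reducing the general proposition to the parity-balanced case where a super-metric is readily available.
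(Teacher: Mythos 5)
Your guiding idea---kill the odd supertraces of the curvature by compatibility with a bilinear form---is the right one and is, at bottom, what the paper's proof uses; but your implementation through a single even non-degenerate graded bilinear form $G$ on $TM$ has a genuine gap. After identifying $M$ with $\Pi E$ for $E\to M_B$ and choosing a connection on $E$, the tangent bundle splits as $\pi^\ast TM_B\oplus\pi^\ast\Pi E$ with the first summand even and the second odd. An even graded-symmetric $G$ must restrict to an ordinary symmetric form on the even summand and to an \emph{antisymmetric} non-degenerate form on the odd summand, i.e.\ to a fiberwise symplectic structure on $E$; an even graded-antisymmetric $G$ requires a non-degenerate $2$-form on $M_B$. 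These are obstructed not merely by parity but topologically: for $M_B=S^4$ and $E=TS^4$ neither structure exists (ranks are even, yet $TS^4$ admits no non-degenerate $2$-form and no fiberwise symplectic form), so the ``parity-balanced case'' to which you reduce is itself not always available. The stabilization $TM\mapsto TM\oplus\mathbb{R}^{0|1}$ does not repair this: the proposition demands a \emph{symmetric affine connection on $M$}, i.e.\ a torsion-free connection on $TM$ itself (it is later fed into $\Lambda=\nabla Q$, $\mathrm{R}=R_{QQ}$ and must make the forms $\mathbf{P}_{2m+1}$ vanish identically, not just in cohomology), and a $G$-compatible connection on a stabilized bundle does not descend to one on $TM$ with these properties. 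Finally, in the graded-antisymmetric case a torsion-free $G$-compatible connection exists only if $dG=0$, an extra constraint you do not address.

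What always exists is a Riemannian metric $g^0$ on $M_B$ together with an ordinary Euclidean fibre metric on $E$; these do not assemble into a homogeneous super-bilinear form on $TM$, so the $\mathfrak{osp}$ argument cannot be applied to the full curvature. The paper's proof avoids this by a weaker mechanism: using the horizontal--vertical frame $(h_i,v_a)$ it constructs a connection $\widehat\nabla$ (made symmetric by subtracting half its torsion) whose curvature is \emph{block-triangular}, with diagonal blocks equal to the curvatures of the two ordinary metric connections $\nabla^0$ on $TM_B$ and $\nabla^1$ on $E$. The supertrace of an odd power of a block-triangular supermatrix is the difference of the ordinary traces of the odd powers of the diagonal blocks, and each of these vanishes because each block is $\mathfrak{so}$-valued. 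Thus the two orthogonal structures never need to combine into a single orthosymplectic structure on $TM$; replacing ``$\nabla$ preserves $G$'' by this block-triangularity is the step your argument is missing.
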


Since the construction of $\nabla$ satisfying the property above
utilizes two auxiliary metrics, we will refer to $\nabla$ as the
\textit{metric connection}. Using this connection, we can define
the series of $\delta$-cohomology classes
$[\mathrm{A}^0_{2m+1}]\in H_Q(M)$ represented by the functions
(\ref{A-F}) with $F_{2m+1}=0$. One can view these classes as a
proper generalization for the half of the scalar characteristic
classes of $A$-series (\ref{A}) to the case of arbitrary (i.e.,
non-flat) $Q$-manifolds. The construction of the metric connection
involves a great deal of ambiguity concerning the choice of the
metrics, and it is important to check that this ambiguity does not
affect on the classes $[\mathrm{A}^0_{2m+1}]$.

\begin{prop}\label{6.2}
The class $[\mathrm{A}^0_{2m+1}]\in H_Q(M)$ is independent of the
choice of metric connection.
\end{prop}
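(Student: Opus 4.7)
The plan is to reduce the statement to a direct application of Theorem~\ref{independing thm}. Let $\nabla_0$ and $\nabla_1$ be metric connections, so that $\mathbf{P}_{2m+1}^{\nabla_i}=0$ and hence, by Theorem~\ref{A-P}, each $\mathrm{A}^0_{2m+1}(\nabla_i) = \mathrm{A}_{2m+1}(\Lambda^{\nabla_i}, R^{\nabla_i})$ is a $\delta$-cocycle on $M$. The complication is that these functions are \emph{not} universal cocycles in the sense of Sec.~1, so Theorem~\ref{independing thm} does not apply verbatim: its proof relies on the corresponding cocycle on $\widetilde M = M\times\mathbb{R}^{1|1}$ being $\widetilde{\delta}$-closed, which for $\mathrm{A}_{2m+1}$ requires, via Theorem~\ref{A-P}, that the lifted connection $\widetilde\nabla$ itself be metric in the sense $\mathbf{P}_{2m+1}^{\widetilde\nabla}=0$.

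The key step is therefore to construct, on $\widetilde M$ equipped with $\widetilde Q = Q + \theta\partial_t$, a symmetric connection $\widetilde\nabla$ which (i) is itself a metric connection in the sense of Proposition~\ref{6.1}, and (ii) restricts to $\nabla_i$ on the slice $\{t=i,\theta=0\}$ for $i=0,1$. The construction of Proposition~\ref{6.1} assigns a metric symmetric connection to a pair of auxiliary super-metrics $(g^{\mathrm{ev}}, g^{\mathrm{odd}})$; denoting by $(g_i^{\mathrm{ev}}, g_i^{\mathrm{odd}})$ the pair underlying $\nabla_i$, define on $\widetilde M$ the super-metrics
\begin{equation*}
\widetilde g^{\mathrm{ev}} = (1-t)\,p^* g_0^{\mathrm{ev}} + t\,p^* g_1^{\mathrm{ev}} + dt^{2},\qquad \widetilde g^{\mathrm{odd}} = (1-t)\,p^* g_0^{\mathrm{odd}} + t\,p^* g_1^{\mathrm{odd}} + d\theta^{2},
\end{equation*}
where $p:\widetilde M \to M$ is the projection. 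Applying the construction of Proposition~\ref{6.1} to these data yields the desired $\widetilde\nabla$ on $\widetilde M$.

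Once such a $\widetilde\nabla$ is in hand, $\mathrm{A}_{2m+1}(\widetilde\Lambda, \widetilde R)$ is $\widetilde{\delta}$-closed by Theorem~\ref{A-P}, and the remainder of the proof is the verbatim homotopy argument from Theorem~\ref{independing thm}. Namely, decomposing
\begin{equation*}
\pi'\mathrm{A}_{2m+1}(\widetilde\Lambda,\widetilde R) = \mathrm{A}_{2m+1}(\Lambda^{t}, R^{t}) + \theta\,\Psi_t,
\end{equation*}
where $\nabla^t$ is the one-parameter family of metric connections on $M$ corresponding to the interpolated auxiliary metrics at fixed $t$, then extracting the $\theta$-coefficient of $\widetilde{\delta}\,\pi'\mathrm{A}_{2m+1}(\widetilde\Lambda,\widetilde R) = 0$ yields $\partial_t \mathrm{A}_{2m+1}(\Lambda^t, R^t) = \delta\Psi_t$, and integrating in $t\in[0,1]$ gives
\begin{equation*}
\mathrm{A}^0_{2m+1}(\nabla_1) - \mathrm{A}^0_{2m+1}(\nabla_0) = \delta\!\int_0^{1}\!\Psi_t\,dt.
\end{equation*}

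The main obstacle is the verification of item (ii), i.e., that the assignment $g\mapsto\nabla_g$ of Proposition~\ref{6.1} is sufficiently natural under both Cartesian products of supermanifolds and convex combinations of the auxiliary super-metrics for $\widetilde\nabla$ to restrict correctly to $\nabla_0$ and $\nabla_1$ at the endpoints. This reduces to a computation inspecting the explicit formulas for the metric connection in terms of $(g^{\mathrm{ev}}, g^{\mathrm{odd}})$; the verification is technical but non-circular and, together with the homotopy machinery of Theorem~\ref{independing thm}, yields the proposition.
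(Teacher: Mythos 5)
Your overall strategy---running the $M\times\mathbb{R}^{1|1}$ homotopy of Theorem \ref{independing thm} on the non-universal cocycle $\mathrm{A}_{2m+1}$---is the right one, and you correctly identify why that theorem does not apply verbatim. But the step you defer as ``technical but non-circular'' is exactly where the argument breaks, and it is not a matter of naturality of the construction of Proposition \ref{6.1}. Because $\partial_t\widetilde g^{\mathrm{ev}}=g_1^{\mathrm{ev}}-g_0^{\mathrm{ev}}\neq 0$, the Levi-Civita connection of $\widetilde g^{\mathrm{ev}}=g_t^{\mathrm{ev}}+dt^2$ on $M_0\times\mathbb{R}$ has nonvanishing mixed Christoffel symbols $\Gamma^t_{ij}=-\tfrac12\partial_t g_{ij}$ and $\Gamma^k_{tj}=\tfrac12 g^{kl}\partial_t g_{lj}$. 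Consequently $\widetilde\Lambda=\widetilde\nabla\widetilde Q$ and $\widetilde{\mathrm{R}}$ are not block-diagonal with respect to $T\widetilde M=TM\oplus T\mathbb{R}^{1|1}$, and since $\mathrm{A}_{2m+1}$ is a supertrace over all of $T\widetilde M$, the off-diagonal blocks (which are proportional to $g_1-g_0$ and hence do not vanish at $t=0,1$) contribute to the $\theta$-independent part. Your decomposition $\pi'\mathrm{A}_{2m+1}(\widetilde\Lambda,\widetilde{\mathrm{R}})=\mathrm{A}_{2m+1}(\Lambda^t,\mathrm{R}^t)+\theta\Psi_t$ therefore fails, and the boundary values of the homotopy are not the two cocycles you want to compare. (There is a further structural snag: Proposition \ref{6.1} requires a nondegenerate even symmetric form on the odd bundle, i.e.\ a fiberwise symplectic structure, and it is not clear that ``$+\,d\theta^2$'' supplies this for the extra rank-one odd direction of $\mathbb{R}^{1|1}$.)

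The paper's proof avoids both problems by \emph{not} asking the lifted connection to be metric. One interpolates the auxiliary data $(g^0,g^1,\nabla^1)$ on $M$ itself, obtaining a one-parameter family of genuine metric connections on $M$, and lifts to $\widetilde M$ using the adapted product connection of Theorem \ref{independing thm}, so that $\widetilde\Lambda$ and $\widetilde{\mathrm{R}}$ stay block-diagonal. The price is that $\mathbf{P}^{\widetilde\nabla}_{2m+1}(\widetilde Q)=\widetilde Q F_{2m+1}$ is no longer zero; the crux of the proof---for which your proposal has no substitute---is the computation (\ref{FU}) showing, via the compatibility of each connection in the family with its metric, that $F_{2m+1}=Q\int_0^t U_{2m+1}(s)\,ds$ is itself $Q$-exact and vanishes at $t=0$. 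One then runs your homotopy argument on the corrected cocycle $\widetilde{\mathrm{A}}{}^{F}_{2m+1}$ of Eq.\ (\ref{A-F}); the $\theta$-expansion yields $\partial_t\mathrm{A}_{2m+1}(\Lambda_t,\mathrm{R}_t)=Q(\cdots)$ with the correction absorbed, and integration over $t\in[0,1]$ gives the claim. Any repair of your version must reproduce this step in some form: it is precisely where the metric property of the connections enters the proof.
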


In the particular case of homological vector fields coming from
the Lie algebroids (see Example \ref{LA}) the corresponding
characteristic classes $[\mathrm{A}_{2m+1}^0]$ add up to the
secondary characteristic classes of Lie algebroids that were
introduced and studied by Fernandes \cite{F} within the framework
of classical differential geometry. The class $[\mathrm{A}^0_1]$,
called the \textit{modular class of $Q$-manifolds} \cite{LS},
unifies and generalizes the well-known constructions of the
modular classes of (complex) Poisson manifolds \cite{W},
\cite{BZ}, Lie algebroids \cite{ELW}, and the Lie-Rinehart
algebras \cite{H}; hence the name. Let us look at it more closely.

Given a metric connection $\nabla$, the modular class is
represented by the covariant divergence of the homological vector
field
\begin{equation*}
\mathrm{A}^0_1=\mathrm{Str}(\Lambda)=\nabla_iQ^i\,.
\end{equation*}
Equivalently, it can be defined in terms of a nowhere vanishing
density $\rho$, instead of the metric connection:
\begin{equation*}\label{}
    \mathrm{A}_1^0 = \mathrm{div}_\rho Q=\rho^{-1}\partial_i(\rho
    Q^i)\,.
\end{equation*}
For instance, one can take $\rho=(\det g^0)^{\frac12}(\det
g^1)^{\frac12}$, where $g^0$ and $g^1$ are metrics entering the
definition of $\nabla$. If $\rho'$ is another density on $M$, then
${\mathrm{A}'}_1^0$ is obviously cohomologous to $\mathrm{A}_1^0$:
\begin{equation*}\label{}
    {\mathrm{A}'}_1^0-\mathrm{A}_1^0= \delta f\,,\qquad
    f=\mathrm{ln}(\rho'/\rho)\in C^{\infty}(M)\,.
\end{equation*}
Writing $\mathrm{A}_1^0=\rho^{-1}L_Q\rho$ we see that the
vanishing modular class $[A_1^0]\in H_Q(M)$ is the necessary and
sufficient condition for $M$ to admit a $Q$-invariant nowhere
vanishing density.

We conclude this section with the following extension of Theorem
\ref{flatChCl} to non-flat $Q$-manifolds.

\begin{thm}\label{intchar}
Given  a $Q$-manifold with metric connection, the tensor algebra
of intrinsic characteristic classes is  generated by the
$\delta$-cohomology classes of the universal cocycles
$\{\mathrm{A}_{2m+1}^0\}$, $\{\mathrm{B}_n\}$, and
$\{\mathrm{C}_n\}$ with the help of tensor products and
permutations of indices. The intrinsic characteristic classes
depend only on the $Q$-manifold, not on the choice of the metric
connection.
\end{thm}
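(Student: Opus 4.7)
My plan is to deduce the theorem from Theorem \ref{flatChCl}, Theorem \ref{A-P}, and the exact triangle (\ref{ExTr}), by showing that the three canonical series exhaust the lifts of the flat generators into $H(\mathcal{A})$.

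First, I would observe that modding $\mathcal{A}$ out by the curvature ideal $\mathcal{R}$ is exactly the operation of setting $R=0$, so $\mathcal{A}/\mathcal{R}$ is the algebra of local covariants of a flat $Q$-manifold. By Theorem \ref{flatChCl} together with the K\"unneth-type factorization of that algebra, in the stable range $H(\mathcal{A}/\mathcal{R})$ is multiplicatively generated by the full three series $\{[A_n]\}$, $\{[B_n]\}$, $\{[C_n]\}$. By the definition following (\ref{ExTr}), the intrinsic characteristic classes form the subalgebra $\mathrm{Ker}\,\partial = \mathrm{Im}\,p_\ast \subset H(\mathcal{A}/\mathcal{R})$, with canonical representatives lifted to actual cocycles in $\mathcal{A}$.

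Second, I would match the three proposed cocycle series with the flat generators. Direct inspection of the defining formulas gives
\begin{equation*}
 p_\ast[\mathrm{B}_n]=[B_n],\qquad p_\ast[\mathrm{C}_n]=[C_n],\qquad p_\ast[\mathrm{A}^0_{2m+1}]=[A_{2m+1}],
\end{equation*}
since every curvature-dependent correction (including $F_{2m+1}$, which is set to zero by the metric assumption) lies in $\mathcal{R}$. So every $B$- and $C$-generator and the odd half of the $A$-generators lie in $\mathrm{Im}\,p_\ast$. To rule out the remaining even $A$-classes I would invoke Theorem \ref{A-P}: any lift of $A_{2m}$ to $\mathcal{A}$ has $\delta$-coboundary proportional to $P_{2m}$, and by the non-triviality of $[P_{2m}]\in H(\mathcal{R})$ established in Appendix \ref{A2}, the connecting map sends $[A_{2m}]$ to a nonzero element of $H(\mathcal{R})$. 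Combined with the multiplicative generation of $H(\mathcal{A}/\mathcal{R})$, this identifies $\mathrm{Ker}\,\partial$ as the tensor subalgebra generated precisely by $\{[A_{2m+1}]\}$, $\{[B_n]\}$, $\{[C_n]\}$, whose canonical lifts are $\{[\mathrm{A}^0_{2m+1}]\}$, $\{[\mathrm{B}_n]\}$, $\{[\mathrm{C}_n]\}$.

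Third, for the independence from the chosen metric connection, the classes $[\mathrm{B}_n]$ and $[\mathrm{C}_n]$ are defined for an arbitrary symmetric $\nabla$, and their independence of that choice is an immediate consequence of Theorem \ref{independing thm}; the metric hypothesis plays no role for them. For the $A$-series the metric assumption enters only to enforce the vanishing of $\mathbf{P}_{2m+1}$, and Proposition \ref{6.2} then asserts that the resulting class $[\mathrm{A}^0_{2m+1}]\in H_Q(M)$ is independent of the particular metric connection used. The step I expect to be the main obstacle is the very last part of Step 2: controlling the passage from $\mathrm{Ker}\,\partial\subset H(\mathcal{A}/\mathcal{R})$ back to a \emph{canonical} generating family in $H(\mathcal{A})$, i.e., verifying that two cocycle lifts of the same element of $\mathrm{Ker}\,\partial$ differ by a genuine $\delta$-coboundary in $\mathcal{A}$ (not merely in $\mathcal{A}/\mathcal{R}$). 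This ambiguity is governed by the subgroup $\mathrm{Im}\,i_\ast\subset H(\mathcal{A})$ and is killed by the triviality of every $[P_n]\in H(\mathcal{A})$, which is a corollary of Theorem \ref{A-P}; hence the three series generate the full intrinsic tensor algebra rather than a proper subalgebra mapping onto it.
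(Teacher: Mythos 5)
Your proposal follows essentially the same route as the paper: Theorem \ref{intchar} is stated there as a summary of Section 6, and its justification is exactly the chain you assemble --- Theorem \ref{flatChCl} for the cohomology of the quotient $\mathcal{A}/\mathcal{R}$, Theorem \ref{A-P} together with the corollary of Appendix \ref{A2} identifying $\partial[A_{2m}]$ with the nonzero class $[P_{2m}]\in H(\mathcal{R})$, and Propositions \ref{6.1}--\ref{6.2} for the existence of the metric connection and the well-definedness of $[\mathrm{A}^0_{2m+1}]$.

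One caveat on your closing paragraph. Since the paper defines the intrinsic classes as $\mathrm{Im}\,p_\ast=\mathrm{Ker}\,\partial\subset H(\mathcal{A}/\mathcal{R})$, uniqueness of the lift to $H(\mathcal{A})$ is not actually needed for the statement; and your proposed resolution of that worry is not valid as written: triviality of every $[P_n]$ in $H(\mathcal{A})$ only says that $\mathrm{span}([P_n])\subset\mathrm{Ker}\,i_\ast$, whereas two cocycle lifts of the same element of $\mathrm{Ker}\,\partial$ differ by an arbitrary element of $\mathrm{Im}\,i_\ast\simeq H(\mathcal{R})/\mathrm{Ker}\,i_\ast$, which is neither computed nor expected to vanish. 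So distinct lifts may well differ by nontrivial \emph{vanishing} classes; this does not threaten the theorem, which only asserts that the three concrete series generate the intrinsic tensor algebra, but you should drop the claim that the ambiguity is ``killed.''
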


\section{Characteristic classes with values in forms}

If $M$ is a $Q$-manifold, then the algebra of exterior forms
$\Omega(M)$ carries a pair of commuting differentials: the
exterior differential $d$ and the Lie derivative $\delta=L_Q$. In
this section, we discuss an interesting interplay between both the
differentials  in the context of characteristic classes. For this
purpose, let us consider the bigraded, bidifferential, associative
algebra ${\mathbb{{A}}}=\mathcal{A}\cap\Omega(M)$ whose elements
are form-valued local covariants associated to the homological
vector field $Q$ and (not necessary metric) connection $\nabla$.
The first grading in $\mathbb{A}=\bigoplus \mathbb{A}^{n,m}$ is
just the form degree, while the second one is the degree of
homogeneity of an element $f\in \mathbb{A}$ as a function of $Q$:
\begin{equation*}\label{}
    \mathbb{A}^{\bullet,\, m\mathcal{}}\ni f \quad \Leftrightarrow\quad
    f(tQ)=t^m
    f(Q)\qquad \forall t\in \mathbb{R}\,.
\end{equation*}
The commuting differentials $d$ and $\delta$ increase the
respective degrees by one, making $\mathbb{A}$ into a
multiplicative bicomplex.

The intrinsic part of the $\delta$-cohomology of the differential
algebra $\mathbb{A}$ has been in fact computed in the previous
section. It follows from Theorem \ref{intchar} that the
multiplicative basis of intrinsic $\delta$-cocycles is given by
the forms
\begin{equation}\label{BB}
{\mathbf{C}}_n=\mathrm{Str}({{\mathrm{B}}}_1^n)\in
\mathbb{A}^{n,n}\,,\quad n\in \mathbb{N}\,,
\end{equation}
where
\begin{equation*}\label{FB}
    {\mathrm{B}}_1=\nabla \Lambda - i_Q
    \mathbf{R}\in \Omega^1(M)\otimes\frak{A}(M) \,.
\end{equation*}
Hereafter  we treat the connection as a first-order differential
operator  $\nabla: \Omega^\bullet(M)\otimes \frak{X}(M)\rightarrow
\Omega^{\bullet+1}(M)\otimes \frak{X}(M)$ satisfying the Leibniz
rule:
\begin{equation*}\label{}
\begin{array}{c}
    \nabla (\omega\otimes u)=d\omega\otimes u + (-1)^{n+\epsilon(\omega)}
    \omega\otimes \nabla u\,,\\[3mm]
    \forall\omega\in \Omega^n(M)\,,\quad \forall u\in \mathfrak{X}(M)\,.
\end{array}
\end{equation*}
The action of $\nabla$ is then  canonically extended from
$\Omega(M)\otimes \frak{X}(M)$ to $\Omega(M)\otimes
\mathcal{T}(M)$ by usual formulas of differential geometry. The
curvature of the connection is the matrix-valued two-form
$\nabla^2=\mathbf{R}\in \Omega^2(M)\otimes \frak{A}(M)$. One can
view $\Omega(M)\otimes \frak{A}(M)$ as an associative graded
superalgebra over $\Omega(M)$ with trace. Multiplication in
$\Omega(M)\otimes\mathfrak{A}(M)$ is given by the exterior product
in $\Omega(M)$ and the composition of endomorphisms in
$\mathfrak{A}(M)$, and the supertrace entering the definition
(\ref{BB}) is defined in a natural way as the $\Omega(M)$-linear
map $\mathrm{Str}: \Omega(M)\otimes \frak{A}(M)\rightarrow
\Omega(M)$ vanishing on commutators.

As is seen from the definition, the $Q$-invariant forms
$\{\mathbf{C}_n\}$ are given by the totally antisymmetric part of
the corresponding $\delta$-cocycles (\ref{C}).

The next theorem is the main result of this section.

\begin{thm}\label{dexrep} Let $M$ be a $Q$-manifold with $[\mathbf{P}_n]=0$.
 Then the $\delta$-cohomology class
$[{\mathbf{C}}_n]\in H_Q(M)$ contains  a $d$-exact representative.
\end{thm}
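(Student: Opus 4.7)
The plan is to adapt the $A$-series construction of Theorem~\ref{A-P} to the form-valued bicomplex and interpret the result as a transgression in the spectral sequence of $(\mathbb{A}, d, \delta)$ depicted in Figure~\ref{SpectralSeq1}$(b)$. Writing $\mathbf{P}_n = d\mathbf{F}_n$ for a primitive $(2n-1)$-form and setting $F_n = \mathbf{F}_n(Q)$, the $Q$-invariant function
\[
\mathbf{A}_n^{F} \;=\; \mathrm{A}_n(\Lambda, \mathrm{R}) - \binom{2n-1}{n} F_n \;\in\; \mathbb{A}^{0, 2n-1}
\]
is $\delta$-closed by Theorem~\ref{A-P}, and hence defines a class in $E_1^{0, 2n-1}$ of the spectral sequence associated to the filtration of $\mathbb{A}$ by form degree (so $E_0 = (\mathbb{A}, \delta)$ and $d_1$ is induced by $d$). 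I would show that the $n$-th transgression $d_n[\mathbf{A}_n^F] \in E_n^{n, n}$ equals, up to a nonzero scalar, the class of $\mathbf{C}_n$; then $[\mathbf{C}_n]_\delta$ admits a $d$-exact representative by the very definition of being in the image of $d_n$.

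Concretely, I would build a descent chain $\alpha_0 = \mathbf{A}_n^F, \alpha_1, \ldots, \alpha_{n-1}$ with $\alpha_k \in \mathbb{A}^{k, 2n-1-k}$ satisfying
\[
d\alpha_k \;=\; \delta \alpha_{k+1} \quad (0 \le k \le n-2), \qquad d\alpha_{n-1} \;=\; c\, \mathbf{C}_n + \delta \eta
\]
for some nonzero constant $c$ and $\eta \in \mathbb{A}^{n, n-1}$. The required bicomplex cross-identities follow from $\mathrm{B}_1 = \delta \Gamma$ (the footnote in Section~6), the Cartan structure equation $\mathbf{R} = d\Gamma + \Gamma \wedge \Gamma$, and $d\delta = \delta d$, which yield
\[
\delta \mathbf{R} = d\mathrm{B}_1 + [\mathrm{B}_1, \Gamma], \qquad \delta \mathrm{B}_1 = 0, \qquad d\mathbf{R} = -[\Gamma, \mathbf{R}],
\]
alongside $\nabla \Lambda = \mathrm{B}_1 + i_Q \mathbf{R}$. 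In the baseline case $n = 1$, taking $\alpha_0 = \mathrm{Str}(\Lambda) - F_1$ and using the Cartan identity $i_Q \mathbf{P}_1 = \delta \mathbf{F}_1 - dF_1$, one computes $d\alpha_0 = \mathbf{C}_1 + \delta \mathbf{F}_1 - 2 dF_1$, which rearranges to $\mathbf{C}_1 = d(\mathrm{Str}(\Lambda) + F_1) - \delta \mathbf{F}_1$ and yields $[\mathbf{C}_1]_\delta = [d(\mathrm{Str}(\Lambda) + F_1)]_\delta$.

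For general $n$, the descent would be produced by running the spectral-sequence argument of Theorem~\ref{A-P} on a bigraded cyclic DGA enlarging the one used there: in addition to the original generators $a \to \Lambda$ and $b \to \mathrm{R}/2$, I adjoin form-valued generators $a' \to \mathrm{B}_1$ and $b' \to \mathbf{R}/2$ subject to the cross-relations above. The transgressive element $a^{2n-1} + c_1 + \cdots + c_{n-1}$ built in Theorem~\ref{A-P} has a form-valued counterpart whose successive truncations, after supertrace, give the $\alpha_k$. The principal obstacle will be the bookkeeping of super-signs and cyclic-supertrace identities needed to verify the descent step by step; the key simplification, as in Theorem~\ref{A-P}, is that all $[\Gamma, \cdot]$-commutator terms vanish under the trace by graded cyclicity, so the descent closes up into the compact form stated.
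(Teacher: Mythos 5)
Your overall architecture coincides with the paper's: both proofs realize $[\mathbf{C}_n]$ as the transgression of the class $[\mathrm{A}^F_n]\in{}'\!E^{0,2n-1}$ in the spectral sequence of the $(d,\delta)$-bicomplex, and read off the $d$-exact representative from the resulting descent chain. The difference is in how the transgression is established, and this is where your proposal has a genuine gap. The paper never constructs the chain $c_1,\dots,c_{n-1}$ explicitly; instead it restricts to a small subbicomplex $V=W\oplus\bar K$ (with $W$ generated by \emph{seven} elements $a_0,\dots,a_6$), computes both the row-wise and the column-wise cohomology of $V$ (Lemmas \ref{bicomplex cohom} and \ref{lem2}), deduces $H_D^{2n}(\mathrm{Tot}\,V)=0$ from the second spectral sequence, and then concludes that $d_n:{}'\!E_n^{0,2n-1}\to{}'\!E_n^{n,n}$ is forced to be an isomorphism of one-dimensional spaces. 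That forcing argument is precisely what guarantees your constant $c$ is nonzero; your proposal asserts this but supplies no mechanism for it. A direct construction of the descent would of course also do the job, but as sketched it does not go through: the four-generator model $\{a,b,a',b'\}\mapsto\{\Lambda,\tfrac12\mathrm{R},\mathrm{B}_1,\tfrac12\mathbf{R}\}$ is not closed under the two differentials (already $\delta(i_Q\mathbf{R})$ and $d\,\mathrm{Str}$ of words in $\Lambda$ force you to adjoin $i_Q\mathbf{R}$, $i_Q^2\mathbf{R}$, $\nabla i_Q\mathbf{R}$, $\nabla i_Q^2\mathbf{R}$ — this is why the paper needs $a_1,a_2,a_5,a_6$ — as well as the tower $i_Q^k\mathbf{F}_n$), and the transgressive element $a^{2n-1}+c_1+\cdots+c_{n-1}$ of Theorem \ref{A-P} lives in a filtration by curvature degree, not by form degree, so its ``truncations'' do not hand you elements of $\mathbb{A}^{k,2n-1-k}$. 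The paper's explicit answers for $\mathbf{C}_2$ and $\mathbf{C}_3$ show the intermediate cochains involve words such as $a_3a_0a_4$ and the potentials $\mathbf{F}_n^k$, which are not visible in your model.

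Your $n=1$ computation is essentially correct (up to the normalization of the $\delta$-exact correction, which is immaterial) and does prove the base case. To repair the general case along your lines you would either have to enlarge your DGA to something closed under both $d$ and $\delta$ and carry out the descent explicitly — at which point you are reconstructing the paper's $W$ and $V$ — or supply an independent argument that the transgression does not vanish, for which the paper's two-spectral-sequence comparison on the finite-dimensional rows and columns of $V$ is the natural tool.
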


Our proof will be based on comparing two spectral sequences
canonically  associated to some bicomplex $V\subset {\mathbb{A}}$.

Consider first the graded subalgebra $\bar{W}=\langle
a_0,a_1,\cdots, a_6\rangle\subset \Omega(M)\otimes
\mathfrak{A}(M)$ generated by seven tensor fields:
\begin{equation*}\label{}
\begin{array}{llll}
    a_0 = \Lambda\,, &
    a_1=i_Q^2\mathbf{R} \,, &  a_2=i_Q\mathbf{R}\,, \quad& a_3 = \mathbf{R}\,,\\[3mm]
    a_4
    =\nabla a_0 +a_{2}\,,\quad & a_5 =\nabla a_1\,,\quad &  a_6 =\nabla
    a_2\,.&
    \end{array}
\end{equation*}
Note that $\nabla a_3=0$ in virtue of the Bianchi identity.  Let
$W$ denote the image of $\bar{W}$ under  the map
$\mathrm{Str}:\Omega(M)\otimes\mathfrak{A}(M)\longrightarrow
\Omega(M)$. Since $\nabla^2a_i=[a_3, a_i]\in \bar{W}$,  the
subalgebra $\bar{W}\subset \Omega(M)\otimes\mathfrak{A}(M)$ is
invariant under the action of $\nabla$ and we have the commutative
diagram
$$
\xymatrix{{{\bar{W}^p}}\ar[r]^-{\nabla}\ar[d]_-{\mathrm{Str}} &{\bar{W}^{p+1}}\ar[d]^-{\mathrm{Str}} \\
{W^p} \ar[r]^-d     &  {W^{p+1}} }
$$
Therefore  $(W, d)$ is a subcomplex of the de Rham complex of $M$.
Furthermore, $W$ is invariant under the action of the Lie
derivative $\delta=L_Q$.  Using the general relation $\delta
a_i=\nabla_Q a_i + [a_0,a_i]$ and the Bianchi identity for the
curvature tensor one can readily  check that
$$
\begin{array}{ll}\label{delta action}
    \delta a_0 = a_0^2-\frac{1}{2}a_1\,, \;\,\qquad \delta a_1=
    [a_0,a_1]\,, & \delta a_2=
    [a_0,a_2]-\frac12 a_5\,, \\[5mm] \delta a_3=
    [a_0,a_3]-a_6\,, \quad  \delta a_4= 0\,, &\delta a_5=
    [a_0,a_5]-[a_2,a_1]\,, \\[5mm] \delta a_6=
    [a_0,a_6]-\frac12[a_3,a_1]\,.&
\end{array}
$$

Thus, ${W}\subset {\mathbb{A}}$ is a bicomplex. The following
theorem computes the $d$- and $\delta$-cohomology groups of $W$.
\begin{lem}\label{bicomplex cohom}
$$
\begin{array}{lcl}
H_\delta^q ({W}^{p,\,\bullet})&=&\left\{%
\begin{array}{ll}
    \mathbb{R}, & \hbox{if $p=q$;} \\
    0, & \hbox{otherwise.} \\
    \end{array}\right.
\\[5mm]
H_d^p({W}^{\bullet,\, q})&=&\left\{%
\begin{array}{ll}
    \mathbb{R}, & \hbox{if  $p$ is even and $q=0$;} \\
    0, & \hbox{otherwise.} \\
\end{array}%
\right.
\end{array}
$$
The corresponding nontrivial $\delta$- and d-cocycles can be
chosen as
$$
{\mathbf{C}}_n = \mathrm{Str}\, a_4^n\in {W}^{n,\,n}\,,\qquad
\mathbf{P}_n = \mathrm{Str}\,a_3^n\in {W}^{2n,\,0}\,.
$$
\end{lem}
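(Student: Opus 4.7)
The strategy is to promote the computation to a universal bidifferential DGA argument in the spirit of Theorem~\ref{A-P}. Let $\mathcal{W}$ be the free bigraded associative superalgebra over $\mathbb{R}$ on formal symbols $a_0,\ldots,a_6$ with the bidegrees and parities read off from their geometric definitions, equipped with the odd commuting derivations $\delta,d$ determined by the formulas in the lemma preamble, together with $d a_0 = a_4 - a_2$, $d a_1 = a_5$, $d a_2 = a_6$, $d a_3 = 0$, and $d a_4, d a_5, d a_6$ forced by $d^2 = 0$ and the Bianchi identity $\nabla^2 = [a_3,\cdot\,]$. Passing to the cyclic quotient $\widehat{\mathcal{W}} = \mathcal{W}/[\mathcal{W},\mathcal{W}]$, on which both differentials descend, the supertrace induces a natural surjection $\widehat{\mathcal{W}} \twoheadrightarrow W$ which, in the finite range of bidegrees appearing in the statement, is an isomorphism by a second-main-theorem-of-invariant-theory dimension count. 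Both halves of the lemma thus reduce to purely algebraic statements about $\widehat{\mathcal{W}}$.

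For the $\delta$-part, the essential observation is that $a_4$ is the unique $\delta$-closed generator and that no $\delta a_i$ contains $a_4$, so that filtering $\widehat{\mathcal{W}}$ by the $a_4$-letter count produces a spectral sequence whose $E_1$-page is the cyclic quotient of $\mathbb{R}[a_4] \otimes \mathcal{W}'$, where $\mathcal{W}'$ is the subalgebra on $\{a_0,a_1,a_2,a_3,a_5,a_6\}$. A second filtration on $\widehat{\mathcal{W}'}$ assigning weight $1$ to each of $a_1,a_5,a_6$ and $0$ to the others decouples the three Koszul pairs $(a_0,a_1), (a_2,a_5), (a_3,a_6)$ at $E_0$; crucially, the cross terms $[a_2,a_1]$ and $[a_3,a_1]$ appearing in $\delta a_5,\delta a_6$ vanish in the cyclic quotient because $a_1,a_2,a_3$ are all total-parity even. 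An explicit $\Delta^{-1}h$ contracting homotopy of the type built in Theorem~\ref{A-P} then shows $\widehat{\mathcal{W}'}$ is $\delta$-acyclic in positive bidegrees, leaving only the classes $[\mathrm{Str}(a_4^n)] = [\mathbf{C}_n]$ in bidegree $(n,n)$, and the outer spectral sequence degenerates for degree reasons.

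The $d$-part is exactly dual. Filtering $\widehat{\mathcal{W}}$ by $a_3$-letter count isolates $a_3$ as the unique $d$-closed free generator (the $a_3$-dependent pieces of $d a_4, d a_5, d a_6$ land in strictly higher filtration), and on the associated graded the three Koszul pairs $(a_0, a_4-a_2), (a_1,a_5), (a_2,a_6)$ are completely standard, so the same template yields $d$-acyclicity of the relevant $\widehat{\mathcal{W}''}$ and leaves only $[\mathrm{Str}(a_3^n)] = [\mathbf{P}_n]$ in bidegree $(2n,0)$. The principal technical obstacle I anticipate is the careful bookkeeping of signs and cyclic-parity conditions that make the cross-commutator terms vanish; beyond that, the argument is a direct application of the contracting-homotopy technique already employed in Theorem~\ref{A-P}, applied in nested fashion to handle the three Koszul pairs simultaneously.
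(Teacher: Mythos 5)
Your overall strategy coincides with the paper's: both lift the computation to the word algebra on the seven generators $a_0,\dots,a_6$, pass through $\mathrm{Str}$ to cyclic words, and kill everything except $\mathrm{Str}(a_4^n)$ and $\mathrm{Str}(a_3^n)$ by a contracting homotopy combined with a filtration that relegates the nonlinear terms of the differential to higher order. The paper does this by writing down the explicit odd derivations $\bar h_1$ (sending $a_1\mapsto -2a_0$, $a_5\mapsto -2a_2$, $a_6\mapsto -a_3$) and $\bar h_2$ (sending $a_4\mapsto a_0$, $a_5\mapsto a_1$, $a_6\mapsto a_2$), observing that $\Delta_2=[d,h_2]$ is exactly the count of non-$a_3$ letters, while $\Delta_1=[\delta,h_1]$ is the count of non-$a_4$ letters plus word-length-raising corrections, whose kernel is then computed by induction on the length filtration. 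Your $d$-half is sound: in the $a_3$-count filtration the terms $[a_3,a_0]$, $[a_3,a_1]$, $[a_3,a_2]$ in $da_4,da_5,da_6$ genuinely sit in higher filtration, and the three Koszul pairs $(a_0,a_4-a_2)$, $(a_1,a_5)$, $(a_2,a_6)$ decouple as you describe.

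The $\delta$-half, however, contains a step that fails as written. You discard the cross terms $[a_2,a_1]$ in $\delta a_5$ and $[a_3,a_1]$ in $\delta a_6$ on the grounds that they ``vanish in the cyclic quotient'' because the generators involved are total-parity even. This is false: a commutator dies under the supertrace only when it constitutes the entire cyclic word, not when it is embedded in a longer one --- $\mathrm{Str}\bigl(x\,[a_2,a_1]\,y\bigr)\neq 0$ in general, and it is precisely such embedded occurrences that arise when $\delta$ hits one letter of a long word. Nor does your auxiliary weight filtration (weight $1$ on $a_1,a_5,a_6$) save the argument, since $[a_2,a_1]$ carries the same weight as $a_5$ and hence survives to $E_0$, coupling the pair $(a_2,a_5)$ to $(a_0,a_1)$. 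The repair is to filter by total word length instead: every quadratic or commutator term in $\delta a_i$ has strictly more letters than $a_i$, so the associated graded differential is the pure Koszul one $a_0\mapsto-\tfrac12 a_1$, $a_2\mapsto-\tfrac12 a_5$, $a_3\mapsto-a_6$, $a_4\mapsto 0$, and your contracting-homotopy argument then goes through and reproduces the paper's conclusion. (A minor shared caveat: identifying the abstract cyclic word algebra with $W\subset\Omega(M)$ is only an isomorphism in the stable range of dimensions; the paper defines $W$ as the image of $\bar W$ under $\mathrm{Str}$ and lifts its homotopies to $\bar W$, which carries the same implicit assumption.)
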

\begin{proof}
The cocycles ${\mathbf{C}}_n$ and $\mathbf{P}_n$ are obviously
nontrivial. To prove that these span the space of all nontrivial
cocycles we will construct two homotopy operators $h_1$ and $h_2$
such that
\begin{equation*}
\begin{array}{cc}
    h_1 \delta +\delta h_1 = \Delta_1\,,&  h_2 d +d h_2
    =\Delta_2\,,\\[3mm]
    \mathrm{Ker}\,\Delta_1 = \mathrm{span} (\mathbf{C}_1,\mathbf{C}_2, ...\,)\,, &\quad
    \mathrm{Ker}\,\Delta_2 =\mathrm{span }(\mathbf{P}_1, \mathbf{P}_2, ...\,)\,.
    \end{array}
\end{equation*}
Define the following pair of odd differentiations of the algebra
$\bar{W}$:
\begin{eqnarray}
\begin{array}{llll}
 \bar{h}_1a_0=0\,, \qquad&\bar{h}_1a_1=-2a_0\,,\qquad& \bar{h}_1a_2=0
\,,\qquad&  \bar{h}_1a_3=0\,, \\[3mm]
\bar{h}_1a_4=0\,, &
\bar{h}_1a_5=-2a_2\,, & \bar{h}_1a_6=-a_3\,;&\\[5mm]
 \bar{h}_2a_0=0\,, &\bar{h}_2a_1=0\,,&
 \bar{h}_2a_2=0\,, &
 \bar{h}_2a_3=0\,, \\[3mm] \bar{h}_2a_4=a_0\,, &
 \bar{h}_2a_5=a_1\,, & \bar{h}_2a_6=a_2\,.&
 \end{array}
\end{eqnarray}
Writing
\begin{equation*}
\bar \Delta_1 =\bar h_1\delta+\delta\bar h_1\,,\qquad \bar
\Delta_2=\bar h_2 \nabla + \nabla\bar h_2\,,
\end{equation*}
we find
\begin{eqnarray}\label{Delta-a}
\begin{array}{llll}
 \bar\Delta_1a_0=a_0\,, \quad &\bar\Delta_1a_1=a_1+2a_0^2\,, \quad &
 \bar\Delta_1a_2=a_2
\,, \qquad\bar\Delta_1a_3=a_3\,,& \\[3mm]
\bar\Delta_1a_4=0\,, &
\bar\Delta_1a_5=a_5+2[a_0,a_2]\,, & \bar\Delta_1a_6=a_6+[a_0,a_3]\,;&\\[5mm]
 \bar\Delta_2a_i=a_i\,, & \forall i\neq3\,, &&\\[3mm]
 \bar\Delta_2a_3=0\,.&&&
 \end{array}
\end{eqnarray}
Now for any $\bar b=a_{i_1}\cdots a_{i_k}\in \bar{W}$ and
$b={\mathrm{Str}}(\bar b)\in W$ we set
\begin{equation*}\label{}
    h_1b= \mathrm{Str}(\bar{h}_{1}\bar b)\,,\qquad h_2b=
    \mathrm{Str}(\bar{h}_{2}\bar b)\,.
\end{equation*}
It is clear that applying the operator $\Delta_2=[d, h_2]$ to $b$
yields
\begin{equation*}\label{}
    \Delta_2 b = \mathrm{Str}(\bar \Delta_2 \bar b)=n b\,,
\end{equation*}
where $n$ is the number of letters $a_i$ in the word $\bar
b=a_{i_1}\cdots a_{i_k}$ which are different from $a_3$. So the
elements $\mathbf{P}_n=\mathrm{Str}\,a_3^n$ do span the kernel of
$\Delta_2$.

To compute the kernel of $\Delta_1=[\delta,h_1]$ consider  the
filtration of the space $W$ by the length of words in $a_i$:
\begin{equation*}\label{}
    W^{(n)} \ni b \quad \Leftrightarrow\quad b= \sum_{k=n}^\infty
    b_k \,, \qquad b_k \in \mathrm{span} (\mathrm{{Str}}(a_{i_1}\cdots a_{i_k}))\,.
\end{equation*}
As is seen from (\ref{Delta-a})  the operator $\Delta_1$ preserves
the filtration and the equality $\Delta_1 b =0$ implies
$b_n=\beta_n \mathbf{C}_n$ for some $\beta_n\in \mathbb{R}$. Since
$\mathbf{C}_n\in \mathrm{Ker}\Delta_1$, we have
$\Delta_1(b-b_n)=0$ and hence
$b_{n+1}=\beta_{n+1}\mathbf{C}_{n+1}$ for some $\beta_{n+1}\in
\mathbb{R}$. Proceeding by induction, we deduce that
$b=\sum_{k=n}^\infty \beta_{k}\mathbf{C}_k$.
\end{proof}

Suppose now that the $n$th Pontryagin character of the tangent
bundle $TM$ is trivial. To model this situation algebraically we
extend ${W}$ by the bigraded space
$$\bar K=\mathrm{span}(\mathbf{F}_n^0,\mathbf{F}_n^1,...,\mathbf{F}_n^{2n-1};
\bar{\mathbf{F}}_n^0,\bar{\mathbf{F}}_n^1,...,\bar{\mathbf{F}}_n^{2n-2})\,,$$
where\footnote{In the previous section, the $(2n-1)$-form
$\mathbf{F}_n^0$ was denoted by $\mathbf{F}_n$, and the function
$\mathbf{F}_n^{2n-1}$ was denoted by $F_n$.}
\begin{equation}\label{dddelta}
    \mathbf{F}_n^k = i_Q^k\mathbf{F}_n^0\,,\qquad \bar{\mathbf{F}}^k_n=\delta\mathbf{F}_n^k\,,\qquad
    d\mathbf{F}_n^0=\mathbf{P}_n=\mathrm{Str}(\mathbf{R}^n)\,.
\end{equation}
It follows immediately from the definition that the extended space
$V=W\oplus \bar K$ is invariant under the action of $d$ and
$\delta$. Indeed, using the Cartan formula $\delta=di_Q+i_Qd$, we
find
$$
d\mathbf{F}_n^k=k\bar{\mathbf{F}}^{k-1}_n+i_Q^k
\mathbf{P}_n\,,\qquad d\bar{\mathbf{F}}^k_n=-\frac{1}{k+1}
di^{k+1}\mathbf{P}_n\,,
$$
\begin{equation}\label{ddelta}
\delta\mathbf{F}_n^{2n-1}=i_Q^{2n}\mathbf{P}_n=P_n\,.
\end{equation}
Thus, $(V, d,\delta)$ is a bicomplex having $(W, d,\delta)$ as a
subcomplex.
\begin{lem}\label{lem2}
$$
\begin{array}{lcl}
H^q_\delta ({V}^{p,\,\bullet})&=&\left\{%
\begin{array}{ll}
    \mathbb{R}, & \hbox{if $p=q$ or $p=0$ and $q=2n-1$;} \\
    0, & \hbox{otherwise.} \\
    \end{array}\right.
\\[5mm]
H^p_d({V}^{\bullet,\, q})&=&\left\{%
\begin{array}{ll}
    \mathbb{R}, & \hbox{if  $q=0$ and $p$ is even and not equal to $2n$;} \\
    0, & \hbox{otherwise.} \\
\end{array}%
\right.
\end{array}
$$
\end{lem}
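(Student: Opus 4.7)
The plan is to derive Lemma \ref{lem2} from Lemma \ref{bicomplex cohom} via the short exact sequence of bicomplexes
$$ 0 \to W \to V \to V/W \to 0\,, $$
whose quotient $V/W$ is spanned, as a bigraded vector space, by the classes of the generators $\mathbf{F}_n^k$ ($0\le k\le 2n-1$) and $\bar{\mathbf{F}}_n^k$ ($0\le k\le 2n-2$) of $\bar K$. Formulas (\ref{dddelta}) and (\ref{ddelta}) then yield the induced differentials on $V/W$: modulo $W$ one has $\delta \mathbf{F}_n^k \equiv \bar{\mathbf{F}}_n^k$ for $k \le 2n-2$, $\delta \mathbf{F}_n^{2n-1} \equiv 0$, $\delta \bar{\mathbf{F}}_n^k \equiv 0$, together with $d \mathbf{F}_n^k \equiv k \bar{\mathbf{F}}_n^{k-1}$ and $d \bar{\mathbf{F}}_n^k \equiv 0$, because all the remaining $\mathbf{P}_n$-tail terms lie in $W$.

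I would first read off $H_\delta(V/W)$ and $H_d(V/W)$ by direct inspection. Along the $\delta$-direction, for each $1\le p\le 2n-1$ the pair $\mathbf{F}_n^{2n-1-p},\bar{\mathbf{F}}_n^{2n-1-p}$ forms an acyclic two-term complex via $\delta$, while the sole survivor is $\mathbf{F}_n^{2n-1}$ at $(p,q)=(0,2n-1)$. Along the $d$-direction, the pairs $(\mathbf{F}_n^{q},\bar{\mathbf{F}}_n^{q-1})$ with $q\ge 1$ are $d$-acyclic, and only $\mathbf{F}_n^0$ at $(2n-1,0)$ contributes.

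The next step is to feed this into the long exact sequences obtained from the short exact sequence above at each fixed $p$ (for $\delta$) and each fixed $q$ (for $d$). For $\delta$ the connecting map sends $[\mathbf{F}_n^{2n-1}]\mapsto [\delta \mathbf{F}_n^{2n-1}]=[P_n]\in H_\delta^{2n}(W^{0,\bullet})$; by Lemma \ref{bicomplex cohom} this target group vanishes (since $p=0\neq 2n$), so the connecting map is zero and the class of $\mathbf{F}_n^{2n-1}$ lifts to yield the extra generator in $H_\delta^{2n-1}(V^{0,\bullet})$. For $d$ the connecting map sends $[\mathbf{F}_n^0]\mapsto [d\mathbf{F}_n^0]=[\mathbf{P}_n]\in H_d^{2n}(W^{\bullet,0})$; by Lemma \ref{bicomplex cohom} the target is one-dimensional and generated precisely by $\mathbf{P}_n$, so this map is an isomorphism, simultaneously killing the class at $(2n-1,0)$ and the old class $[\mathbf{P}_n]$ at $(2n,0)$. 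In every other bidegree both sides vanish, giving $H(V)\cong H(W)$ there, and the claimed formulas follow.

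The main technical point is to verify that the connecting homomorphisms take the stated values---this is where the identities $\delta \mathbf{F}_n^{2n-1}=P_n$ and $d\mathbf{F}_n^0=\mathbf{P}_n$ from (\ref{dddelta})--(\ref{ddelta}) feed directly into Lemma \ref{bicomplex cohom}. Once this is in hand, everything else reduces to routine bookkeeping with long exact sequences applied to the very short ``staircase'' complexes that make up $V/W$.
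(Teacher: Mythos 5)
Your proof is correct and follows essentially the same route as the paper: the short exact sequence $0\to W\to V\to V/W\to 0$, the computation of $H(V/W)$ from (\ref{dddelta})--(\ref{ddelta}), and the identification of the connecting homomorphisms via $[\mathbf{F}_n^0]\mapsto[\mathbf{P}_n]$ and $[\mathbf{F}_n^{2n-1}]\mapsto[P_n]$. The only cosmetic difference is that you deduce the vanishing of the $\delta$-connecting map from the off-diagonal vanishing $H^{2n}_\delta(W^{0,\bullet})=0$ supplied by Lemma \ref{bicomplex cohom}, whereas the paper exhibits the explicit primitive $P_n\propto\delta\mathrm{A}_n(\Lambda,\mathrm{R})$ via Theorem \ref{A-P}; both are valid.
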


In plain English, the lemma says that vanishing of the $n$th
Pontryagin character ``kills'' one class of $d$-cohomology in
degree $2n$, giving simultaneously birth to a new
$\delta$-cohomology class in degree $2n-1$.

\begin{proof} Define the quotient complex ${K}={V}/{W}$.  As a linear
space $K\simeq \bar K$.  The short exact sequence
\begin{equation*}
0\longrightarrow
{{W}}\stackrel{i}{\longrightarrow}{{V}}\stackrel{p}{\longrightarrow}{{K}}\longrightarrow
0\,,
\end{equation*}
gives rise to a long exact sequence in cohomology.

For the $d$-cohomology we have
$$
 \cdots\rightarrow
H^{2n-1}_d({K})\stackrel{\partial}{\longrightarrow}H^{2n}_d({W})\stackrel{i_\ast}{\longrightarrow}H_d^{2n}({V})
\stackrel{p_\ast}{\longrightarrow} H^{2n}_d({K})\rightarrow\cdots
$$
It follows from (\ref{dddelta}) that $\mathbf{F}_n^0$ is a
nontrivial $d$-cocycle of ${K}^{2n-1,0}$ and
$\partial[\mathbf{F}_n^0]=[\mathbf{P}_n]$, where $[\mathbf{P}_n]$
spans $H^{2n}_d(W)\simeq \mathbb{R}$. Hence  $\partial$ is epic
and $i_\ast=0$. Since $H^{2n}_d({K})=0$, we conclude that
$H^{2n}_d({V})=0$.

Consider now the long exact sequence for the $\delta$-cohomology
groups:
$$
\begin{array}{rl}
\cdots    \rightarrow
H_\delta^{2n-2}({K})\stackrel{\partial'}{\longrightarrow}
    H_\delta^{2n-1}({W})&\stackrel{i_\ast}{\longrightarrow}H_\delta^{2n-1}({V})\rightarrow\\[3mm]
    &\stackrel{p_\ast}{\longrightarrow}H_\delta^{2n-1}({K})
    \stackrel{\partial}{\longrightarrow}H^{2n}_\delta({W})\rightarrow\cdots
\end{array}
$$
Eq. (\ref{ddelta}) implies that $H^{2n-1}_\delta({K})\simeq
\mathbb{R}$ and $\partial[\mathbf{F}_n^{2n-1}]=[P_n]$. By Theorem
\ref{A-P}, ${P}_n$ is proportional to $\delta
\mathrm{A}_n(\Lambda, \mathrm{R})$. Hence $[P_n]=0$ and
$\partial=0$. Since $H_\delta^{2n-2}({K})=0$ and
$H_\delta^{2n-1}(W)\simeq \mathbb{R}$, we infer  that
$H_\delta^{2n-1}({V})\simeq\mathbb{R}^2$. As a vector space the
group $H_\delta^{2n-1}({V})$ is generated by the
$\delta$-cohomology classes $[\mathbf{C}_n]$ and
$[\mathrm{A}^F_n]$.

Considering the other segments of the long exact sequences above,
one can easily verify that $H^m_d({W})=H^m_d({V})$ and
$H^{m-1}_\delta({W})=H^{m-1}_\delta({V})$ for all $m\neq 2n$. The
details are left to the reader.
\end{proof}

Now we are in position to prove Theorem \ref{dexrep}. Consider the
total complex $\bar{V}=\mathrm{Tot} {V}$ of the bicomplex ${V}$:
\begin{equation*}\label{}
   \bar V^n=\bigoplus_{p+q=n}{V}^{p,q}\,,\qquad D = d+\delta:
    \bar V^n\rightarrow \bar V^{n+1}\,.
\end{equation*}
Let $'\!E=\{'\!E^{p,q}_r,d'_r\}$ and $''\!E=\{'\!E^{p,q}_r,
d''_r\}$ denote two spectral sequences associated to the first and
second filtrations
\begin{equation*}\label{}
F'_p\bar V^n=\bigoplus_{s\geq p} {V}^{s,n-s}\,,\qquad F''_p\bar
V^n=\bigoplus_{s\geq p}{V}^{n-s,s}
\end{equation*}
of the total complex $\bar V$. By definition, $'\!E_1^{p,q}\simeq
H_\delta^q({V}^{p,\bullet})$ and
$''\!E_1^{p,q}=H_d^p({V}^{\bullet,q})$. Both the spectral
sequences lie in the first quadrant and converge to the common
limit $H_D(V)$. By Lemma \ref{lem2}  the term  ${}''\!E_1$ is
supported on the $p$-line and $''\!E_1^{2m-1,0}=0$ for all $m\in
\mathbb{N}$. Hence $''\!E_1\simeq {}''\!E_2 \simeq{}''\!E_\infty$
and $H_D^{2n}(\bar V)\simeq H_d^{2n}({V})=0$. On the other hand,
all but one of the nonzero entries of $\{'\!E_1^{p,q}\}$ are
centered on the diagonal $p=q$, as it is shown in Fig.$2b$. The
only nonzero off-diagonal group is given by $'\!E_1^{0,2n-1}\simeq
\mathbb{R}$. Consequently, $'\!E_1\simeq {}'\!E_{n}$. The
triviality of the group $H_D^{2n}(V)=0$ implies that $d_{n}:
{}'\!E_{n}^{0,2n-1}\rightarrow {}'\!E_{n}^{n,n}$ is an isomorphism
of one-dimensional vector spaces. Explicitly, one may readily see
that the space $'\!E^{0,2n-1}_{n}$ is generated by the
$\delta$-cocycle $\mathrm{A}^F_{n}$ and the map $d_{n}$ takes this
cocycle to the $\delta$-cocycle
$\binom{2n-1}{n-1}{\mathbf{C}}_{n}$, which generates
$'\!E_n^{n,n}$. The last fact amounts to the existence of a
sequence of elements $c_k\in {V}^{k,2n-1-k}$ such that
\begin{equation*}
\begin{array}{rcl}
\delta c_n + d c_{n-1}& =&
\binom{2n-1}{n-1}{\mathbf{C}}_{n}\,,\\[3mm]
\delta c_{n-1} + d c_{n-2} &=& 0\,,\\
&\cdots&\\
\delta c_{1} + d \mathrm{A}^F_{n} &=& 0\,,\\[3mm]
\delta \mathrm{A}^F_{n} &=& 0\,.\\
\end{array}
\end{equation*}
Multiplying the left and right hand sides of the upper line by
$\binom{2n-1}{n-1}^{-1}$, we get the statement of Theorem
\ref{dexrep}.

Here are the explicit expressions for the $d$-exact
representatives of the first three classes $[{\mathbf{C}}_1]$,
$[{\mathbf{C}}_2]$, and $[{\mathbf{C}}_3]$:

\begin{equation*}
\begin{array}{lll}
{\mathbf{C}}_1&-&\frac12 \delta\mathbf{F}^0_1=
d\left[\mathrm{Str}(a_0)+ \mathbf{F}^1_1\right]\,,
\\[5mm]
{\mathbf{C}}_2&+&\delta\left[ \mathrm{Str}(a_3 a_0)+\frac14
\mathbf{F}^1_2\right] =d\left[\mathrm{Str}( a_0 a_4- a_0
a_2)+\frac18  \mathbf{F}^2_2\right]\,,
\\[5mm]
{\mathbf{C}}_3&+&\frac34\delta \left[\mathrm{Str}(a_3 a_0 a_4+a_3
a_4 a_0- a_3 a_0
a_2-a_3 a_2 a_0)-\frac{1}{12} \mathbf{F}^2_3\right] \\[5mm]
&= &d \left[\mathrm{Str}(a_0 a_4^2-\frac{1}{2} a_0 a_4 a_2
-\frac{1}{2} a_0 a_2 a_4 +
a_0 a^2_2 +\frac12 a_3 a_0^3 \right.\\[5mm]
&+&\left. \frac{3}{8}a_3 a_1 a_0 +\frac38 a_1 a_3 a_0)-
\frac{1}{48} \mathbf{F}^3_3\right]\,.
\end{array}
\end{equation*}

Since all the Pontryagin characters $[\mathbf{P}_{2m+1}]$ are
known to be zero, the $d$-exact representatives exist for all
classes  $[\mathbf{C}_{2m+1}]$.

\section{Applications and interpretations}

\subsection{Quantum anomalies}
It is a common knowledge that the anomalies appearing  in quantum
field theory have a topological nature. In a wide sense, the term
\textit{anomaly} refers to breaking of a classical gauge symmetry
upon quantization. As a practical matter, the anomalies manifest
themselves as nontrivial BRST cocycles in ghost number 1 or 2
depending on which formalism, Lagrangian BV or Hamiltonian BFV, is
used. These cocycles represent cohomological obstructions to the
solvability of quantum master equations. Below we interpret the
modular class $[\mathrm{A}^0_1]$ of the BRST differential as the
first obstruction to the existence of quantum master action.
Examining the existence problem for the quantum BRST charge we
encounter the universal cocycle $\mathrm{C}_2$, whose association
with anomalies, however, is more complicated.

Throughout this section,  we assume that the reader is familiar
with basics of the BRST theory. The general reference here is
\cite{HT}. A comprehensive  review of quantum anomalies and
renormalization in BV formalism can be found in \cite{Barnich}.

\subsubsection{One-loop anomalies in the BV formalism}
In the  Bata\-lin-Vil\-ko\-vis\-ky  approach to the quantization
of gauge systems, one usually deals  with the odd cotangent bundle
$\Pi T^\ast M$  of an (infinite-dimensional) supermanifold $M$.
The local coordinates $\{x^i\}$ on $M$ are called fields and the
linear coordinates $\{x^\ast_i\}$ on the fibers of $\Pi T^\ast M$
are called antifields. The  total space of $\Pi T^\ast M$, denoted
below by $\mathcal{M}$, is endowed with the canonical
antibracket\footnote{Another name is the odd Poisson bracket. Upon
identification $C^\infty(\mathcal{M})$ with the space of
polyvector fields on $M$, the antibracket becomes the standard
Schouten-Nijenhuis bracket.}
\begin{equation*}\label{}
    (f,g)=(-1)^{\epsilon_i(\epsilon(f)+1)}\frac{\partial f}{\partial x^i}\frac{\partial g}{\partial
    x_i^\ast}-(-1)^{(\epsilon_i+1)(\epsilon(f)+1)}\frac{\partial f}{\partial
    x_i^\ast}\frac{\partial g}{\partial x^i}
\end{equation*}
for all $f,g\in C^{\infty}(\mathcal{M})$. Besides  the Grassman
parity, the structure sheaf of functions on $\mathcal{M}$ is
endowed with an additional $\mathbb{Z}$-grading, called the ghost
number, so that $\mathrm{gh}(x^\ast_i)=-\mathrm{gh}(x^i)-1$. The
presence of the extra $\mathbb{Z}$-grading is inessential, for our
consideration and we will not mention it below.

The Feynman probability amplitude $e^{\frac i\hbar S}$ on the
space of fields and antifields  is defined by the quantum master
action $S\in C^{\infty}(\mathcal{M})\otimes \mathbb{C}[[\hbar]]$.
The latter is given by a formal power series in $\hbar$ with
smooth even coefficients and obeys the quantum master equation
\begin{equation}\label{MEq}
    (S,S)=2i\hbar \Delta S\qquad \Leftrightarrow \qquad \Delta e^{\frac i\hbar
    S}=0\,.
\end{equation}
Here $\Delta: C^{\infty}(\mathcal{M})\rightarrow
C^\infty(\mathcal{M})$ is a second-order differential operator,
called the odd Laplacian. It is defined in terms of a nowhere zero
density $\rho$ on $\mathcal{M}$ by the rule
\begin{equation*}\label{}
\Delta f=\frac12 (-1)^{\epsilon(f)}\mathrm{div}_\rho X_f\,,
\end{equation*}
where $X_f=(f,\,\cdot\,)$ is the Hamiltonian vector field
associated to $f\in C^{\infty}(\mathcal{M})$.  The density $\rho$
is supposed to be chosen in such a way that $\Delta^2=0$. For
example, if $\sigma$ is a nowhere zero density on $M$, then
$\rho=\sigma^2$ is an appropriate  density on $\mathcal{M}$.

A fundamental property of the odd Laplacian is that it
differentiates  the antibracket:
\begin{equation*}\label{}
    \Delta(f,g)=(\Delta f, g)+(-1)^{\epsilon(f)+1}(f,\Delta
    g)\,.
\end{equation*}
This relation is easily derived from  the following one
identifying the antibracket as the defect of the odd Laplacian to
be a differentiation of the commutative algebra of functions:
\begin{equation*}\label{}
    \Delta (f\cdot g)=\Delta f\cdot g+(-1)^{\epsilon
    (f)}(f,g)+(-1)^{\epsilon(f)}f\cdot \Delta g\,.
\end{equation*}
All the above  properties of the antibracket and the odd Laplace
operator allow one to look upon the quantum master equation
(\ref{MEq}) as a sort of Maurer-Cartan equation.

By definition,  the quantum master action  $S=S_0+\hbar S_1 +
\cdots$ can be regarded as a formal deformation of the classical
one $S_0\in C^\infty(\mathcal{M})$. Expanding (\ref{MEq}) in
powers of $\hbar$ yields the sequence of equations
\begin{equation}\label{ME}
\begin{array}{l}
    (S_0,S_0)=0\,,\\[3mm]
    (S_0, S_1)=i\Delta S_0\,,\\
    \displaystyle (S_0,S_n)=i\Delta S_{n-1}+\sum_{k=1}^{n-1}(S_k,S_{n-k})\,,\qquad
    n\geq 2\,.
    \end{array}
\end{equation}
The first equation is known as a classical master equation. Its
solution $S_0$ is completely determined (under some properness and
regularity  conditions) by the classical gauge theory and can be
systematically constructed by means of the homological
perturbation theory \cite{HT}. Given a classical master action
$S_0$, the actual problem is to iterate the higher-order quantum
corrections $S_n$ satisfying (\ref{ME}).

The classical master equation ensures that the Hamiltonian action
of $S_0$ defines a homological vector field $Q=(S_0,\,\cdot\,)$ on
$\mathcal{M}$, called a classical BRST differential. With the BRST
differential Eqs. (\ref{ME}) take the cohomological form
\begin{equation*}
\delta S_n=B_n(S_0,...,S_{n-1})\,.
\end{equation*}
By induction on $n$, one can see that the function $B_n$ is
$\delta$-closed, provided that $S_0,...,S_{n-1}$ satisfy the first
$n$th equations (\ref{ME}). Thus the existence problem for the
quantum master action appears to be equivalent to the vanishing of
a certain sequence of the $\delta$-cohomology classes $[B_n]$ in
which the $n$th cohomology class is defined provided that all the
previous classes vanish. In particular, the second equation in
(\ref{ME}) expresses the triviality of the modular class
$[\mathrm{A}_1^0]$, where $\mathrm{A}^0_1=\Delta S_0
=\frac12\mathrm{div}_\rho Q$. This allows us to identify the
modular class of the classical BRST differential with the first
cohomological obstruction to the solvability of the quantum master
equation.

\subsubsection{Two-loop anomalies in the BFV formalism}
The Hamiltonian analog of the BV field-antifield formalism is
known as the BFV formalism. This time all the information about
the gauge structure of a theory is encoded by the quantum BRST
charge $\Omega$. In the deformation quantization approach
\cite{Fedosov}, one regards  $\Omega$ as an odd element of the
associative algebra $(C^{\infty}(M)\otimes
\mathbb{C}[[\hbar]],\ast)$, where $M$ is a supermanifold endowed
with a non-degenerate Poisson bracket $\{\cdot,\cdot\}$ and the
$\ast$-product on $C^\infty(M)\otimes \mathbb{C}[[\hbar]]$ is
defined  as a local, $\mathbb{C}[[\hbar]]$-linear deformation in
$\hbar$ of the ordinary function multiplication satisfying the
\textit{correspondence principle}
$$
    f\ast g-(-1)^{(\epsilon(f)+1)(\epsilon(g)+1)}g\ast f
    =i\hbar\{f,g\}+\mathcal{O}(\hbar^2)\quad \forall f,g\in C^\infty(M)\,.
$$
(Again, we leave aside the ghost grading on $M$.) By definition,
the charge $\Omega$ obeys  the quantum master equation
\begin{equation}\label{Om2}
    \Omega\ast \Omega =0\,.
\end{equation}

It is well known \cite{BCG} that all the inequivalent
$\ast$-products on a given symplectic manifold $(M,\omega)$ are in
one-to-one correspondence with the elements of the affine space
${[\omega]}/{\hbar}+ H^2(M)\otimes \mathbb{C}[[\hbar]]$. The
points of this space are called characteristic classes of the
$\ast$-product and are denoted by $\mathrm{cl}(\ast)$. Below we
set for definiteness $\mathrm{cl}(\ast)=[\omega]/\hbar$. The
explicit recurrent formulas for Fedosov's  $\ast$-product on
symplectic supermanifolds can be found in \cite{Ber}, \cite{Bor}.
One more fact about the deformation quantization we need below is
that any $\ast$-product on a symplectic manifold is equivalent to
one with the property
\begin{equation*}\label{}
    f\ast_{n}g=(-1)^{(n+\epsilon(f)\epsilon(g))}g\ast_{n}f\,,
\end{equation*}
where $\ast_n$ stands for the bidifferential operator determining
the $n$-th order in the $\hbar$-expansion of $\ast$.

Substituting the general expansion  $\Omega=\sum_{n\geq
0}\hbar^n\Omega_n$ in (\ref{Om2}), we get a (possibly infinite)
sequence of equations
\begin{equation}\label{dOm}
    \{\Omega_0,\Omega_n\}=-\sum_{\footnotesize\begin{array}{c}_{k+l+m=n+1}\\[-1mm]_{l,m<n}\end{array}}
    \Omega_l\ast_k \Omega_m \,.
\end{equation}
The first equation of this sequence, $\{\Omega_0,\Omega_0\}=0$, is
called the classical master equation for the classical BRST charge
$\Omega_0$. The charge $\Omega_0$ is completely  determined by the
first-class constraints of the original Hamiltonian theory and,
similar to the classical master action, it can always be
constructed by means of the homological perturbation theory. The
classical BRST differential $\delta$ is defined now by the
homological vector field $Q=\{\Omega_0,\cdot\;\}$ on $M$. Then the
next equation in (\ref{dOm}) takes the form
\begin{equation}\label{dOm1}
\delta\Omega_1=0\,.
\end{equation}
It identifies the first quantum correction to the classical BRST
charge with a BRST cocycle. Let $\{f,g\}=\langle\Pi,df\wedge
dg\rangle$, where the triangle brackets denote the natural pairing
between bivectors and two-forms. Then the second-order correction
$\Omega_2$ is defined by the equation
\begin{equation}\label{dOm2}
\begin{array}{c}
    \delta\Omega_2 =\langle \Pi, \frac1{48}\mathbf{C}_2 -\frac12 d\Omega_1\wedge
    d\Omega_1\rangle\,,
    \end{array}
\end{equation}
with  $\mathbf{C}_2$ being here the second universal cocycle of
the $\mathrm{C}$-series associated to the classical BRST
differential $Q$. If $\Gamma$ are the local one-forms determining a
symplectic connection $\nabla$, then  $
    \mathbf{C}_2=\mathrm{Str}(\delta\Gamma\wedge\delta\Gamma)$.

Since the homological vector field $Q$ is Hamiltonian, the Poisson
bivector is $Q$-invariant, $\delta \Pi=0$, and the right hand side
of (\ref{dOm2}) is obviously $\delta$-closed (but not
$\delta$-exact in general). Thus the class $[\mathbf{C}_2]$ can
obstruct the solvability of the quantum master equation. More
precisely, we have the following

\begin{prop}
Let $(M,\omega)$ be a symplectic supermanifold endowed with the
Hamiltonian action of a classical BRST differential $\delta$. If
$[\mathbf{C}_2]=0$, then there exists a $\ast$-product on $M$ such
that the quantum master equation (\ref{Om2}) is solvable up to
order three in $\hbar$.
\end{prop}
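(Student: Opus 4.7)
The plan is to solve the hierarchy (\ref{dOm}) through order two by the simplest possible Ansatz, namely $\Omega_1=0$, and then use the hypothesis $[\mathbf{C}_2]=0$ to produce $\Omega_2$ explicitly. First I would take $\Omega_0$ to be the classical BRST charge that the hypothesis furnishes: it solves the classical master equation $\{\Omega_0,\Omega_0\}=0$ and generates the homological vector field $Q=\{\Omega_0,\,\cdot\,\}$ with associated differential $\delta$.

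Next, the order-one equation (\ref{dOm1}) requires only that $\Omega_1$ be a $\delta$-cocycle. The constant choice $\Omega_1=0$ is always admissible and, crucially, it kills the inhomogeneous term $-\tfrac12\langle\Pi,d\Omega_1\wedge d\Omega_1\rangle$ in the next equation. With this choice, (\ref{dOm2}) reduces to
\begin{equation*}
\delta\Omega_2 \;=\; \tfrac{1}{48}\,\langle\Pi,\mathbf{C}_2\rangle.
\end{equation*}

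The core step is to show that the right-hand side is $\delta$-exact. By hypothesis $[\mathbf{C}_2]=0$ in $H_Q(M)$, so there exists a 2-form $\tau\in\Omega^2(M)$ with $\mathbf{C}_2=\delta\tau$. Because the homological vector field $Q$ is Hamiltonian with respect to $\omega$, the Poisson bivector is $Q$-invariant, $\delta\Pi=0$. The Lie derivative $\delta=L_Q$ is a derivation of all tensor operations, so applying it to the contraction $\langle\Pi,\tau\rangle$ gives $\delta\langle\Pi,\tau\rangle=\langle\delta\Pi,\tau\rangle\pm\langle\Pi,\delta\tau\rangle=\pm\langle\Pi,\mathbf{C}_2\rangle$. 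Setting $\Omega_2=\pm\tfrac{1}{48}\langle\Pi,\tau\rangle$ then solves the order-two equation, and the first three equations of (\ref{dOm}) are satisfied by $(\Omega_0,\,0,\,\Omega_2)$.

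The work, as opposed to the obstacle, is just bookkeeping: one must pick a Fedosov-type $\ast$-product whose second-order symbol is precisely the one producing $\tfrac{1}{48}\mathbf{C}_2$ in (\ref{dOm2})---this is what the explicit formula $\mathbf{C}_2=\mathrm{Str}(\delta\Gamma\wedge\delta\Gamma)$ makes possible---and then track the super signs in the derivation property of $\delta$ on the pairing of a bivector with a 2-form. The genuinely conceptual content is entirely in the three-line cohomological argument above: the universal $\delta$-cocycle $\mathbf{C}_2$ is the only obstruction to carrying the classical BRST charge to the second loop, and the invariance $\delta\Pi=0$ transfers its triviality from $H_Q(M)$ on forms to $H_Q(M)$ on functions, which is exactly what the quantum master equation lives in.
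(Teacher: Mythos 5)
Your proposal is correct and follows essentially the same route as the paper, which after noting $\delta\Pi=0$ simply sets $\Omega_1=0$ and takes $\Omega_2=\langle\Pi,\delta^{-1}\mathbf{C}_2\rangle$ (up to the factor $\tfrac{1}{48}$). Your added observation that $\tau$ may be chosen antisymmetric because $L_Q$ commutes with the tensor operations, and your bookkeeping of $\delta\langle\Pi,\tau\rangle=\pm\langle\Pi,\delta\tau\rangle$, are exactly the content the paper leaves implicit.
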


In case $[\mathbf{C}_2]=0$ we can  set $\Omega_1=0$ and take
$\Omega_2=\langle \Pi, \delta^{-1}\mathbf{C}_2\rangle$. It should
be noted that the above analysis of low-order anomalies is not
complete as we have restricted ourselves to a particular class of
$\ast$-products. For a general $\ast$-product with
$\mathrm{cl}(\ast)=[\Pi^{-1}]/\hbar+[\omega_0]+\hbar[\omega_1]+\cdots$
the right hand sides of equations (\ref{dOm1}) and (\ref{dOm2})
will involve additional $\delta$-closed terms proportional to
$\omega_0$ and $\omega_1$.

\subsection{Characteristic classes of foliations}
Given a regular foliation $\mathcal{F}$ of an ordinary (even)
manifold $N$, denote by $T\mathcal{F}\subset TN$ the subbundle of
tangent spaces to the leaves of $\mathcal{F}$. Since
$T\mathcal{F}$ is integrable, the inclusion map
$T\mathcal{F}\rightarrow TN$ defines a regular Lie algebroid over
$N$. Thus, there is a one-to-one correspondence between the
categories of regular foliations and injective Lie algebroids. On
the other hand, to any Lie algebroid $E\rightarrow TN$ one can
associate a homological vector field on $\Pi E$ (see Example 1.2)
together with the corresponding characteristic classes. When the
Lie algebroid comes from a regular foliation, these characteristic
classes can be attributed to the foliation itself. Furthermore,
the construction of characteristic  classes, being insensitive to
the regularity of the  Lie algebroid structure, can also be used
to the study of singular foliations. The question of whether every
singular foliation corresponds to the characteristic foliation of
a Lie algebroid remains open. Below we give an example of regular
foliation with nontrivial modular class.

Let $SL(2,\mathbb{R})$ denote the group of $2\times 2$-matrices
with real entries and determinant 1.  It is well known that this
group admits discrete subgroups $\Gamma$ such that the right
quotient space $N=SL(2, \mathbb{R})/\Gamma$ is a compact manifold.
Since $SL(2,\mathbb{R})$ has dimension three, so does $N$.

Let $\{e_a\}$ be Weyl's basis in the space of right invariant
vector fields on $SL(2,\mathbb{R})$:
\begin{equation*}\label{}
    [e_{-1},e_1]=2e_0\,,\qquad [e_0,e_1]=e_1\,,\qquad
    [e_0,e_{-1}]=-e_{-1}\,.
\end{equation*}
The canonical projection $\pi: SL(2,\mathbb{R})\rightarrow N$
takes $\{e_a\}$ to the vector fields $\mathbf{e}_a=\pi_\ast(e_a)$
on $N$ satisfying the same commutation relations. (The vector
fields $e_a$ and $\mathbf{e}_a$ are said to be $\pi$-related.) The
pair $\{\mathbf{e}_0, \mathbf{e}_1\}$ generates the action of the
Borel subalgebra $B\subset sl(2,\mathbb{R})$ on $N$.  Hence, we
have a two-dimensional foliation $\mathcal{F}$ of $M$. The
inclusion map $\rho: T\mathcal{F}\rightarrow TN$ defines the
transformation Lie algebroid $E=N\times B$ over $N$. If $x^i$ are
local coordinates on $N$ and $c^a$ are odd coordinates on $\Pi B$,
then the homological vector field on $N\times\Pi B$ reads
\begin{equation*}\label{}
Q=c^0{\bf e}^{i}_0 \frac{\partial}{\partial x^i} +
c^1{\mathbf{e}}_1^i\frac{\partial}{\partial x^i}
-c^{0}c^{1}\frac{\partial}{\partial c^{1}}\,.
\end{equation*}

Notice that the dual to the three-vector $\mathbf{e}_{-1}\wedge
\mathbf{e}_0\wedge \mathbf{e}_1$ is a $SL(2,\mathbb{R})$-invariant
volume form $\sigma$ on $N$. Therefore, $\mathrm{div}_\sigma
\mathbf{e}_a=0$. Taking $\rho= \sigma dc^0dc^1$ to be a nowhere
zero integration density on $N\times \Pi B$, we get the nontrivial
$\delta$-cocycle
\begin{equation}\label{mod}
    \mathrm{A}_1^0= \mathrm{div}_\rho Q= c^0\,.
\end{equation}
If $\mathrm{A}^0_1$ were trivial there would be a function $f\in
C^{\infty}(N)$ such that $\textbf{e}_0f=1$. But every function on
a compact manifold has a critical point $p$ at which
$(\textbf{e}_0f)(p)=0$. Thus the modular class of the foliation
$\mathcal{F}$  is nontrivial.

It is interesting to note that the foliation $\mathcal{F}$  is the
one that was originally used by Roussarie to demonstrate the
nontriviality of the Godbillon-Vey class \cite{GV}, \cite{Fu1}.
It is not an accident  that both the modular and Godbillon-Vey
classes of $\mathcal{F}$ are nonzero. One can see that the modular
class (\ref{mod}) coincides in fact with the Reeb class of
$\mathcal{F}$. The latter takes value in $H^1_\mathcal{F}(N)$, the
first group of the leafwise cohomology. It is well known that the
vanishing of the Reeb class results in the vanishing of the
Godbillon-Vey class but not vice versa in general. Thus, we have
the implication $GV[\mathcal{F}]\neq 0\Rightarrow
A_0^1[\mathcal{F}]\neq 0$. A detailed discussion of the
relationship between the modular and Reeb classes of regular
Poisson manifolds can found in \cite{AB}.

\subsection{Lie algebras}

Let $\mathcal{G}$ be a Lie algebra with a basis $\{t_a\}$ and the
commutation relations
\begin{equation*}\label{}
[t_a,t_b]=f_{ab}^d t_d\,.
\end{equation*}
Then the homological vector field on $\Pi \mathcal{G}$ is given by
\begin{equation}\label{QLA}
    Q= \frac12c^bc^a f_{ab}^d\frac{\partial}{\partial c^d}\,.
\end{equation}
The linear space of functions on $\Pi \mathcal{G}$ endowed with
the differential $\delta$ gives us a model for the
Chevalley-Eilenberg complex of the Lie algebra $\mathcal{G}$. With
a flat connection on $\Pi \mathcal{G}$ we see that the
characteristic classes of $A$-series (\ref{A}) are nothing but the
primitive elements of the Lie algebra cohomology:
\begin{equation*}\label{Adj}
A_n=\mathrm{tr}(\mathrm{ad}_{a_1}\cdots \mathrm{ad}_{a_{2n-1}})
c^{a_1}\cdots c^{a_{2n-1}}\qquad \forall n\in \mathbb{N}\,,
\end{equation*}
with $\mathrm{ad}_a=(f_{ab}^d)$ being the matrices of the adjoint
representation of $\mathcal{G}$.

The universal cocycles of $B$- and  $C$-series are then identified
with $\mathrm{Ad}$-invariant tensors on $\Pi \mathcal{G}$:
\begin{equation*}\label{}\begin{array}{l}
  \displaystyle   B_n =(\mathrm{ad}_{a_1}\cdots \mathrm{ad}_{a_{n}})_{a_{n+1}}^b dc^{a_1}\otimes \cdots \otimes dc^{a_n}\otimes dc^{a_{n+1}}\otimes \frac{\partial}{\partial
    c^b}\,,\\[5mm]
    C_n=\mathrm{tr}(\mathrm{ad}_{a_1}\cdots \mathrm{ad}_{a_{n}})
dc^{a_1}\otimes \cdots \otimes dc^{a_{n}}\qquad \forall n\in
\mathbb{N}\,.
\end{array}
\end{equation*}
Since any coboundary of (\ref{QLA}) is necessarily proportional to
$c^a$, the tensor cocycles are either zero or nontrivial. In case
$\mathcal{G}$ is semi-simple, for instance, the one-form $C_1$ is
zero, while the two-form $C_2$ is non-degenerate (the Killing
metric).

%    \subsection{$\mathbf{C}_2$ versus $\mathrm{A}^0_1$ }

\appendix
\section{The exponential map}
In this Appendix, we reformulate the chain property
(\ref{ChainPr}) of the characteristic map $\widehat{\mathbb{G}}:
TM\rightarrow \mathbb{E}$ as the integrability condition for some
homological vector field on $TM$. As a by-product, we get an
explicit expression for the action of $\delta$ on the basis
covariants $\{\partial^{n}Q\}$ of a flat $Q$-manifold.

Given a smooth manifold $M$, we can treat the  total space of the
tangent bundle $TM$ as a ``partially formal'' manifold, that is,
formal in the directions of fibers. Concerning the general theory
of partially formal supermanifolds we refer the reader to
\cite{Kon1}, \cite{KS}. Let $\mathcal{F}$ denote the commutative
algebra of formal functions on $TM$; the elements of $\mathcal{F}$
are formal power series in fiber coordinates with smooth
coefficients:
\begin{equation*}\label{}
    \mathcal{{F}}\ni f(x,y)=\sum_{n=0}^\infty y^{i_n}\cdots y^{i_1}f_{i_1\cdots
    i_n}(x)\,.
\end{equation*}
The expansion coefficients $f_{i_1\cdots i_n}(x)$ are covariant
symmetric tensor fields on $M$. We define the formal vector fields
on $TM$ to be the derivations  of the commutative algebra
$\mathcal{F}$ and denote the Lie algebra of all the derivations by
$\mathrm{Der}(\mathcal{F})$.

The natural inclusion $i: C^\infty(M)\rightarrow \mathcal{F}$
identifies $C^\infty(M)$ with the subalgebra of $y$-independent
functions of $\mathcal{F}$. Suppose $M$ admits an affine
connection $\partial$ with zero torsion and curvature. Then we can
define one more homomorphism $\phi : C^{\infty}(M)\rightarrow
\mathcal{F}$ known as the \textit{exponential map}. The
exponential map takes a smooth function $f\in C^{\infty}(M)$ to
the formal function
\begin{equation*}\label{}
    \phi(f)=f(x+y)\equiv
    \sum_{n=0}^\infty\frac1{n!}y^{i_n}\cdots y^{i_1}\partial_{i_1}\cdots\partial_{i_n}f(x)\,.
\end{equation*}
It is clear that $\phi (f\cdot g)=\phi (f)\cdot \phi(g)$.

Associated to the exponential map of functions is the exponential
map of vector fields
\begin{equation*}
\phi' : \mathfrak{X}(M)\rightarrow \mathrm{Der}(\mathcal{F})\,.
\end{equation*}
This is defined as follows. Consider $C^\infty(M)$ and
$\mathcal{F}$ as left modules over the Lie algebras
$\mathfrak{X}(M)$ and $\mathrm{Der}(\mathcal{F})$, respectively.
Then for any vector field $X\in \mathfrak{X}(M)$ there is a unique
formal vector field $\widetilde{X}=\phi'(X)\in
\mathrm{Der}(\mathcal{F})$ such that the following diagrams
commute:
$$
\xymatrix{{\mathcal{F}}\ar[r]^{\widetilde{X}} &{\mathcal{F}} \\
{C^{\infty}(M)}\ar[u]^i \ar[r]^X     &  {C^{\infty}(M)}\ar[u]_i
}\qquad \xymatrix{{\mathcal{F}}\ar[r]^{\widetilde{X}} &{\mathcal{F}} \\
{C^{\infty}(M)}\ar[u]^\phi \ar[r]^X     &
{C^{\infty}(M)}\ar[u]_\phi }
$$
In terms of local coordinates adapted to $\partial$ we have
\begin{equation}\label{exp*}
\begin{array}{lll}
   \widetilde{X} &=& \displaystyle  X^i(x)\left(\frac{\partial}{\partial x^i}-\frac{\partial}{\partial
    y^i}\right)+X^i(x+y)\frac{\partial}{\partial y^i}\\[5mm]
  &=& \displaystyle X^i(x)\frac{\partial}{\partial x^i}+\sum_{n=1}^\infty \frac 1{n!}y^{i_n}\cdots y^{i_1}
    \partial_{i_1}\cdots \partial_{i_n}X^i(x)\frac{\partial}{\partial y^i}\,.
    \end{array}
\end{equation}
The pair $(\phi',\phi)$ defines a homomorphism of the
$\mathfrak{X}(M)$-module $C^\infty(M)$ to the
$\mathrm{Der}(\mathcal{F})$-module $\mathcal{F}$:
\begin{equation*}
\phi'([X,Y])=[\phi' (X),\phi'(Y)]\,,\qquad \phi' (X)
\phi(f)=\phi(Xf)\,.
\end{equation*}

Now let us apply the exponential map (\ref{exp*}) to a homological
vector field $Q$ on $M$. The result is a formal homological vector
field $\widetilde{{Q}}$ on $TM$. The latter can be expanded in the
sum of homogeneous components
\begin{equation*}
\widetilde{Q}=\sum_{n=0}^\infty \widetilde{{Q}}_n\,,\qquad
[N,\widetilde{{Q}}_n]=n\widetilde{{Q}}_n\,,
\end{equation*}
where $N=y^i\partial/\partial y^i$ is a vertical vector field on
$TM$ called sometimes the Euler vector field. The integrability of
$\widetilde{Q}$ implies that
\begin{equation}\label{Q-Q}
2[\widetilde{Q}_0,\widetilde{Q}_m]=-\sum_{k=1}^{m-1}
[\widetilde{Q}_{m-k},\widetilde{Q}_k]\qquad \forall m\in
\mathbb{N}\,.
\end{equation}
In particular,
\begin{equation*}
\widetilde{Q}_0=Q^i\frac{\partial}{\partial
x^i}+y^j\partial_jQ^i\frac{\partial}{\partial y^i}
\end{equation*}
is a smooth homological vector field defining a $Q$-structure on
the tangent bundle $TM$.  The other homogeneous components
$\{\widetilde{Q}_n\}_{n=1}^\infty$ of $\widetilde{Q}$ are vertical
vector fields, which are naturally identified with the elementary
local covariants of the flat $Q$-manifold $M$. Upon this
identification  Eq. (\ref{Q-Q})  compactly expresses the action of
$\delta$ on the elementary local covariants $\{\partial^nQ\}$ with
$n>1$. Namely,
\begin{equation}\label{Qmorf}
\delta \widetilde{Q}_m=\sum_{k=1}^{m-1}
[\widetilde{Q}_{m-k},\widetilde{Q}_k]\qquad \forall m\in
\mathbb{N}\,.
\end{equation}
For  the remaining two generators -- the homological vector field
$Q$ itself and its first covariant derivative $\Lambda: =\partial
{Q}\in \frak{A}(M)$ -- we have
\begin{equation}\label{Qmorf1}
\delta Q=0\,,\qquad \delta \Lambda=\Lambda^2\,.
\end{equation}
As is seen, Eq. (\ref{Qmorf}) coincides exactly with
Eq.(\ref{binom}) characterizing the Gauss map (\ref{Gauss})  as a
morphism of $Q$-manifolds.

\section{Scalar characteristic classes}\label{A2}
Let $M$ be a $Q$-manifold endowed with a symmetric affine
connection $\nabla$ and let $\mathcal{A}$ be the differential
tensor algebra of the local covariants associated to $Q$ and
$\nabla$. As a tensor algebra, $\mathcal{A}$ is generated by the
repeated covariant derivatives of the homological vector field and
the curvature tensor. A suitable generating set of  $\mathcal{A}$
is given by the following elementary covariants:
\begin{equation}\label{Q-type}
\begin{array}{l}
    Q^j\,,\quad Q^j_{i}=\nabla_iQ^j\,,\\[3mm]
    Q^j_{i_1\cdots
    i_{n+2}}=\nabla_{(i_1}\cdots \nabla_{i_{n+2})}Q^j
    -\nabla_{(i_1}\cdots\nabla_{i_n}R^j_{i_{n+1}i_{n+2})k}Q^k\,,\\[3mm]
 R^j_{i_{1}\cdots
 i_{n+3}}=\nabla_{i_1}\cdots\nabla_{i_n}R_{i_{n+1}i_{n+2}i_{n+3}}^j\,.\qquad\qquad\quad\qquad\quad\;\;
 \end{array}
\end{equation}
The round brackets denote symmetrization of the enclosed indices.
Notice that the generators are not free and satisfy an infinite
set of tensor relations coming from the integrability condition
for the homological vector field, the Bianchi identities for the
curvature tensor, and all their differential consequences. Of
particular importance for our analysis will be the following
identities:

\begin{equation}\label{Q-id}
\nabla_QQ=0\,,\qquad \nabla_Q\Lambda =-\Lambda^2+\frac12 R_{QQ}\,,
\end{equation}

\begin{equation}\label{R-id}
\begin{array}{c}
    R_{QX}(Q)=-(-1)^{\epsilon(X)\epsilon(Q)}R_{XQ}(Q)=\frac12R_{QQ}(X)\,,\\[5mm]
    R_{QQ}(Q)=0\,,\qquad \nabla_QR_{QQ}=0\,.
    \end{array}
\end{equation}

The action of the differential $\delta=L_Q$ is given by
\begin{equation*}\label{}
\begin{array}{lcl}
\delta Q^j &=&0\,,\\[3mm]
\delta Q^j_i&=&Q_i^kQ_k^j + \frac12R_{QQ\ i}^j\,,\\[3mm]
\delta Q^j_{i_1\cdots
 i_{n+2}}&=&-\sum_{l=2}^{n+1}\binom{n+2}{l} Q^m_{(i_1\cdots i_l} Q^j_{mi_{l+1}\cdots
 i_{n+2})}\,,\\[3mm]
    \delta R^j_{i_1\cdots i_{n+3}}&=&Q^iR_{ii_1\cdots i_{n+3}}^j - R^i_{i_1\cdots
    i_{n+3}}Q_{i}^{j}\\[3mm]
    &+& \sum_{l=1}^{n+3}(-1)^{(\epsilon_{i_1} +
    \cdots+\epsilon_{i_{l-1}})}Q_{i_l}^{i} R^j_{i_1\cdots
    i\cdots i_{n+3}}\,.
    \end{array}
\end{equation*}
As is seen  the generators in the second and third lines of
\eqref{Q-type}, taken separately, generate two differential ideals
of $\mathcal{A}$, which we denote respectively by $\mathcal{Q}$
and  $\mathcal{R}$. The quotient $\mathcal{A}/\mathcal{Q}$ is
naturally isomorphic to the differential subalgebra
$\mathcal{Q}'\subset \mathcal{A}$ generated by $Q^j$, $Q^j_i$, and
$\{R^i_{i_1\cdots i_{n+3}}\}$, so that the complex $\mathcal{A}$
breaks up into the direct sum of subcomplexes
\begin{equation*}\label{}
    \mathcal{A}=\mathcal{Q}\oplus \mathcal{Q}'\,.
\end{equation*}
Notice that $\mathrm{B}_n,\mathrm{C}_n\in \mathcal{Q}$, while
$P_n\in \mathcal{Q}'$. Yet another decomposition is due to the
tensor type of cochains the complex $\mathcal{A}$ consists of:
\begin{equation*}
\mathcal{A}=\bigoplus_{n,m}\mathcal{A}^{n,m}\,,\qquad
\mathcal{A}^{n,m}=\mathcal{A}\cap \mathcal{T}^{n,m}(M)\,.
\end{equation*}

The elements of $\mathcal{A}$ can also be depicted graphically as
directed, decorated graphs with ``black'' and ``white'' vertices.
The graphs are constructed by gluing together  incoming and
outgoing legs of the corollas
\begin{equation*}\label{}
\begin{split}
\unitlength 1mm % = 2.85pt
\begin{picture}(60,20)(0,-3)
\put(-20,7){$Q^j_{i_1\cdots i_n} =$}
\put(0,0){\line(1,1){7.5}}\put(4,0){\line(1,2){3.75}}\put(16,0){\line(-1,1){7.5}}
\put(0,0){\vector(1,1){3.75}}\put(4,0){\vector(1,2){1.75}}\put(16,0){\vector(-1,1){3.75}}
\put(8,8){\circle*{1.5}} \put(8,8){\line(0,1){7.5}}
\put(8,8){\vector(0,1){3.75}} \put(7,0){{\scriptsize$\ldots$}}
\put(-1.5,-2){{\scriptsize${}_{i_1}$}}
\put(2.5,-2){{\scriptsize${}_{i_2}$}}
\put(16,-2){{\scriptsize${}_{i_n}$}}
\put(9,15){{\scriptsize${}_{j}$}}
%----------------------------------------------------------------------------------------------------------------------------------------------
\put(25,7){and} \put(40,7){$R^j_{i_{1}\cdots i_{n+3}}=$}
\put(60,0){\line(1,1){7.5}}\put(64,0){\line(1,2){3.7}}\put(76,0){\line(-1,1){7.5}}
\put(60,0){\vector(1,1){3.75}}\put(64,0){\vector(1,2){1.7}}\put(76,0){\vector(-1,1){3.75}}
\put(68,8){\circle{1.5}} \put(68,8.75){\line(0,1){7.5}}
\put(68,8.75){\vector(0,1){3.75}}
\put(67,0){{\scriptsize$\ldots$}}
\put(58.5,-2){{\scriptsize${}_{i_1}$}}
\put(62.5,-2){{\scriptsize${}_{i_2}$}}
\put(74.5,-2){{\scriptsize${}_{i_{n+3}}$}}
\put(69,15){{\scriptsize${}_{j}$}}
\end{picture}
\end{split}
\end{equation*}
by the general rules discussed in Sec. \ref{StGr}. The planarity
of the corollas  allows us to order the incoming legs by reading
them anticlockwise with respect to the vertex so that the first
incoming leg appears to be the nearest one to the outgoing leg
from the left. This order, however, is only crucial for the
correspondence between the white corollas and the $R$-generators,
as the $Q$-generators  are fully symmetric in the lower indices.

Let us now clarify the structure of the differential subalgebra
$\mathcal{A}^{0,0}\subset \mathcal{A}$.  Denote by
$\bar{\mathcal{A}}^{0,0}= \mathcal{A}^{0,0}/(\mathcal{A}^{0,0})^2$
the subspace of indecomposable elements of $\mathcal{A}^{0,0}$.
Then  $\mathcal{A}^{0,0}=\bigoplus_{k\in \mathbb{N}}
(\bar{\mathcal{A}}^{0,0})^k$ and we have the following

\begin{prop}
In the stable range of dimensions, the complex
$\bar{\mathcal{A}}^{0,0}$ is isomorphic to the complex
$\widehat{W}$ of cyclic words from the proof of Theorem \ref{A-P}.
\end{prop}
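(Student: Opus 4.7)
The plan is to construct an explicit isomorphism $\Phi : \widehat{W} \to \bar{\mathcal{A}}^{0,0}$ of cochain complexes by promoting the DGA representation $h : W \to \mathfrak{A}(M)$ from the proof of Theorem \ref{A-P} to the cyclic quotient on the source and the indecomposable quotient on the target. Recall that $h$ is defined by $a \mapsto \Lambda$ and $b \mapsto \tfrac12 \mathrm{R}$, with $\mathrm{R} = R_{QQ}$, and that the identities (\ref{Q-id})--(\ref{R-id}) give $\delta\Lambda = \Lambda^{2}+\tfrac12\mathrm{R}$ and $\delta\mathrm{R}=[\Lambda,\mathrm{R}]$, mirroring $da = a^{2}+b$ and $db = [a,b]$. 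Composing with the supertrace and exploiting its cyclic property, the map $w \mapsto \mathrm{Str}(h(w))$ descends to the cyclic quotient $\widehat{W} = W/[W,W]$ and defines a chain map into $\mathcal{A}^{0,0}$. Every cyclic word goes to a single supertrace $\mathrm{Str}(\Lambda^{n_1}\mathrm{R}^{m_1}\Lambda^{n_2}\mathrm{R}^{m_2}\cdots)$, which is a \emph{connected} scalar covariant, hence its class in $\bar{\mathcal{A}}^{0,0}$ is well defined; this produces $\Phi$ as a chain map automatically.

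Injectivity of $\Phi$ in the stable range is essentially the first main theorem of invariant theory, applied to the endomorphism algebra generated by $\Lambda$ and $\mathrm{R}$. Distinct cyclic words in $a,b$ are sent to distinct cyclic supertraces, and for $|\dim M|$ large compared to the word length these cyclic traces are linearly independent modulo decomposables, because the corresponding invariants in $\mathcal{T}(\mathbb{E})^{\mathrm{inv}}$ are algebraically independent via the isomorphism (\ref{R}).

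Surjectivity, which is the substantive part, uses the graph-theoretic description of $\mathcal{A}$ in the stable range. By the second main theorem of invariant theory, an indecomposable element of $\mathcal{A}^{0,0}$ is represented by a \emph{connected} decorated directed graph with no free legs, built from the corollas associated to the elementary covariants in (\ref{Q-type}). Since every vertex carries exactly one outgoing edge, such a graph has $E = V$ and therefore genus one: geometrically it is a unique oriented cycle with (possibly empty) trees attached, whose leaves are the $Q^j$-sources. The task is to reduce every such graph, modulo decomposables, to a pure cycle whose vertices are exclusively $\Lambda$- and $\mathrm{R}$-corollas, since these are precisely the images of cyclic words under $\Phi$.

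The reduction proceeds by induction on a complexity measure such as the lexicographic pair (total number of tree vertices, sum of the arities $n$ of the generators $Q^j_{i_1\ldots i_n}$ and $R^j_{i_1\ldots i_{n+3}}$ used). Two elementary moves drive the argument: (i) an $R^j_{i_1 i_2 i_3}$-vertex on the cycle whose two non-cycle incoming legs are tied to $Q^j$-sources collapses to an $\mathrm{R}$-vertex via the symmetries of the curvature tensor; (ii) a higher-derivative vertex $Q^j_{i_1\ldots i_n}$ ($n \geq 2$) with some non-cycle legs filled by $Q^j$-sources is rewritten, using $\nabla_{Q}Q = 0$ together with its higher differential consequences, the graded Ricci identity $[\nabla_i,\nabla_j] = R_{ij}$, and the Bianchi identity, as a polynomial in $\Lambda$ and $\mathrm{R}$ plus strictly lower-complexity remainders. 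The main obstacle is the combinatorial bookkeeping in step (ii): at each application of a commutator or differentiation of $\nabla_{Q}Q = 0$, one must verify that the residues fall into one of three classes — already of the target $\Lambda$-$\mathrm{R}$ cyclic form, decomposable as a product of two scalar invariants, or strictly simpler in the induction measure — which uses the symmetrization built into the definition of $Q^j_{i_1\ldots i_n}$ in (\ref{Q-type}) and the identities (\ref{R-id}) in an essential way. Once this verification is carried through, the induction terminates at a cycle in $\Lambda$'s and $\mathrm{R}$'s, surjectivity is established, and combined with injectivity and the chain-map property it yields the claimed isomorphism $\Phi : \widehat{W} \xrightarrow{\sim} \bar{\mathcal{A}}^{0,0}$.
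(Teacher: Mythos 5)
Your overall strategy coincides with the paper's: both arguments represent indecomposable scalar covariants by connected legless graphs, observe that such a graph has as many edges as vertices and is therefore a single cycle with trees attached, and aim to reduce everything to cycles built from $\Lambda$ and $\mathrm{R}$, which are then matched with cyclic words in $a,b$ via $\mathrm{Str}\circ h$ and stable-range independence. The problem is that the decisive step --- your move (ii), which you yourself flag as ``the main obstacle'' and leave with the phrase ``once this verification is carried through'' --- is precisely the content of the proposition, and it is not supplied. As it stands you have a plan, not a proof.

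Concretely, two things are missing. First, the reduction does not work the way you describe: a higher-derivative vertex is not ``rewritten as a polynomial in $\Lambda$ and $\mathrm{R}$ plus strictly lower-complexity remainders'' --- the point is that it \emph{vanishes outright}, so no induction on a complexity measure is needed. The generators $Q^j_{i_1\cdots i_{n+2}}$ in (\ref{Q-type}) are symmetrized and curvature-corrected precisely so that (a) a corolla all of whose incoming legs are saturated by $Q$-sources is zero (this kills every tree, by looking at a leaf-most interior vertex), and (b) a $Q$-corolla with one free incoming leg and more than one $Q$-saturated leg is zero (this kills every cyclic $Q$-vertex of valency greater than three); the only surviving nontrivial rewriting is $\nabla_Q\Lambda=-\Lambda^2+\tfrac12 R_{QQ}$ for the trivalent case. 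These vanishings are differential consequences of $\nabla_QQ=0$ and of (\ref{R-id}), and verifying them is the actual mathematical content of the statement. Second, your elementary moves do not cover the cyclic vertices coming from the higher curvature generators $R^j_{i_1\cdots i_{n+3}}$ with $n\geq 1$: move (i) only treats $R^j_{i_1i_2i_3}$ and move (ii) only treats $Q$-type vertices, whereas one needs the iterated covariant derivatives of $R_{QX}(Q)=\tfrac12R_{QQ}(X)$, $R_{QQ}(Q)=0$ and $\nabla_QR_{QQ}=0$ to show that all such vertices vanish as well. A smaller point: the linear independence of the surviving cyclic traces in the stable range is governed by the second, not the first, main theorem of invariant theory, as you in effect concede by invoking (\ref{R}).
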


\begin{proof} The elements of $\bar{\mathcal{A}}^{0,0}$ correspond to linear combinations of connected graphs
without legs. If $\Gamma$ is such a graph, then it has the same
number of vertices and edges. Therefore, $\Gamma$ contains exactly
one cycle. The arrows of this cycle are all directed into one
side, and the remaining edges are directed towards the cycle,
forming trees growing from the cyclic vertices (possibly
several from a single vertex). The ``crown'' of each tree is made
of univalent black vertices that may join either a cyclic or a
non-cyclic vertex. In the latter case $\Gamma$ contains one of the
following subgraphs:
\begin{equation*}\label{}
\begin{split}
\unitlength 1mm % = 2.85pt
\begin{picture}(60,20)(0,-3)
\put(0,0){\line(1,1){7.5}}\put(4,0){\line(1,2){3.75}}\put(16,0){\line(-1,1){7.5}}
\put(0,0){\vector(1,1){3.75}}\put(4,0){\vector(1,2){1.75}}\put(16,0){\vector(-1,1){3.75}}
\put(8,8){\circle*{1.5}} \put(8,8){\line(0,1){7.5}}
\put(8,8){\vector(0,1){3.75}} \put(7,0){{\scriptsize$\ldots$}}
\put(0,0){\circle*{1.5}} \put(4,0){\circle*{1.5}}
\put(16,0){\circle*{1.5}} \put(25,7){and}
%----------------------------------------------------------------------------------------------------------------------------------------------
\put(40,0){\line(1,1){7.5}}\put(44,0){\line(1,2){3.7}}\put(56,0){\line(-1,1){7.5}}
\put(40,0){\vector(1,1){3.75}}\put(44,0){\vector(1,2){1.7}}\put(56,0){\vector(-1,1){3.75}}
\put(48,8){\circle{1.5}} \put(48,8.75){\line(0,1){7.5}}
\put(48,8.75){\vector(0,1){3.75}}
\put(47,0){{\scriptsize$\ldots$}} \put(40,0){\circle*{1.5}}
\put(44,0){\circle*{1.5}} \put(56,0){\circle*{1.5}}
\end{picture}
\end{split}
\end{equation*}
But in view of the symmetry of the $Q$-generators in lower indices
and the deferential consequences of the identities (\ref{Q-id}),
(\ref{R-id}) all these subgraphs correspond to zero elements of
$\mathcal{A}$. Thus the non-vanishing  elements of
$\bar{\mathcal{A}}^{0,0}$ are represented by graphs having no
non-cyclic vertices other than univalent black vertices. The cyclic
vertices can also produce vanishing subgraphs if the algebraic
identities (\ref{Q-id}), (\ref{R-id}) are allowed for. Relations
(\ref{Q-id}) and the symmetry of $Q$-type generators force us to
set
\begin{equation*}\label{}
\begin{split}
\unitlength 0.9mm % = 2.85pt
\begin{picture}(60,18)(-15,-3)
\put(-33,8){\line(1,0){8}}\put(-33,2){\line(0,1){6}}\put(-41,8){\line(1,0){8}}
\put(-33,8){\vector(1,0){4}}\put(-33,2){\vector(0,1){4}}\put(-41,8){\vector(1,0){4}}
\put(-33,8){\circle*{1.5}} \put(-33,2){\circle*{1.5}}
\put(-23,7){$=$} \put(-44.5,7){$\frac12$}
\put(-33.5,10){{\scriptsize${}_{2}$}}
\put(-32,0){{\scriptsize${}_{1}$}}
%----------------------------------------------------------------------------------------------------------------------------------------------
\put(-10,8){\line(1,0){8}}\put(-2,8){\line(1,0){8}}\put(-18,8){\line(1,0){8}}
\put(-10,8){\vector(1,0){4.5}}\put(-2,8){\vector(1,0){5}}\put(-18,8){\vector(1,0){4}}
\put(-10,8){\circle*{1.5}} \put(-2,8){\circle*{1.5}}
\put(-2.5,10){{\scriptsize${}_{1}$}}
\put(-10.5,10){{\scriptsize${}_{2}$}} \put(8,7){$+$}
%----------------------------------------------------------------------------------------------------------------------------------------------
\put(13,8){\line(1,0){7,25}}\put(21.75,8){\line(1,0){8}}\put(18,2){\line(1,2){2.7}}\put(24,2){\line(-1,2){2.7}}
\put(13,8){\vector(1,0){4}}\put(21.75,8){\vector(1,0){5}}\put(18,2){\vector(1,2){1.7}}\put(24,2){\vector(-1,2){1.7}}
\put(21,8){\circle{1.5}}
\put(18,2){\circle*{1.5}}\put(24,2){\circle*{1.5}}
\put(16,0){{\scriptsize${}_{1}$}}
\put(25,0){{\scriptsize${}_{2}$}} \put(31,7){$\,,$}
%----------------------------------------------------------------------------------------------------------------------------------------------
\put(42,2){\line(1,1){5.5}}\put(45,2){\line(1,2){3.25}}\put(54,2){\line(-1,1){5.5}}
\put(42,2){\vector(1,1){3.5}}\put(45,2){\vector(1,2){1.6}}\put(54,2){\vector(-1,1){3.5}}
\put(48,8){\circle*{1.5}} \put(48,8){\line(0,1){6.5}}
\put(48,8){\vector(0,1){3.5}} \put(47.5,2){{\scriptsize$\ldots$}}
\put(45,2){\circle*{1.5}} \put(54,2){\circle*{1.5}}
\put(44,1){$\underbrace{\rule{10mm}{0mm}}_{n>1}$}
\put(59,7){$=0\;.$}
\end{picture}
\end{split}
\end{equation*}
Further, taking the iterated covariant derivatives of relations
(\ref{R-id}), we arrive at the following graph equalities:
\begin{equation*}\label{}
\begin{split}
\unitlength 0.9mm % = 2.85pt
\begin{picture}(60,18)(33,-2)
\put(12,0){\line(-1,2){3.7}}\put(4,0){\line(1,2){3.7}}\put(15,4.5){\line(-2,1){6.25}}\put(15,8){\line(-1,0){6.25}}\put(15,11.5){\line(-2,-1){6.25}}
\put(12,0){\vector(-1,2){2}}\put(4,0){\vector(1,2){2}}\put(15,4.5){\vector(-2,1){4}}\put(15,8){\vector(-1,0){4}}\put(15,11.5){\vector(-2,-1){4}}
\put(8,8){\circle{1.5}} \put(8,8.75){\line(0,1){7.5}}
\put(8,8.75){\vector(0,1){3.75}} \put(6,0){{\scriptsize$\ldots$}}
\put(4,0){\circle*{1.5}}\put(15,8){\circle*{1.5}}
\put(12,0){\circle*{1.5}}\put(15,4.5){\circle*{1.5}}
%\put(2,-1){$\underbrace{\rule{10.5mm}{0mm}}_{n}$}
\put(20,7){$= \ - $}
%----------------------------------------------------------------------------------------------------------------------------------------------
\put(39,0){\line(-1,2){3.7}}\put(31,0){\line(1,2){3.7}}\put(42,4.5){\line(-2,1){6.25}}\put(42,8){\line(-1,0){6.25}}\put(42,11.5){\line(-2,-1){6.25}}
\put(39,0){\vector(-1,2){2}}\put(31,0){\vector(1,2){2}}\put(42,4.5){\vector(-2,1){4}}\put(42,8){\vector(-1,0){4}}\put(42,11.5){\vector(-2,-1){4}}
\put(35,8){\circle{1.5}} \put(35,8.75){\line(0,1){7.5}}
\put(35,8.75){\vector(0,1){3.75}}
\put(33,0){{\scriptsize$\ldots$}}
\put(31,0){\circle*{1.5}}\put(42,11.5){\circle*{1.5}}
\put(39,0){\circle*{1.5}}\put(42,4.5){\circle*{1.5}}
%\put(29,-1){$\underbrace{\rule{10.5mm}{0mm}}_{n}$}
\put(47,7){$= \ \frac12 $}
%----------------------------------------------------------------------------------------------------------------------------------------------
\put(66,0){\line(-1,2){3.7}}\put(58,0){\line(1,2){3.7}}\put(69,4.5){\line(-2,1){6.25}}\put(69,8){\line(-1,0){6.25}}\put(69,11.5){\line(-2,-1){6.25}}
\put(66,0){\vector(-1,2){2}}\put(58,0){\vector(1,2){2}}\put(69,4.5){\vector(-2,1){4}}\put(69,8){\vector(-1,0){4}}\put(69,11.5){\vector(-2,-1){4}}
\put(62,8){\circle{1.5}} \put(62,8.75){\line(0,1){7.5}}
\put(62,8.75){\vector(0,1){3.75}}
\put(60,0){{\scriptsize$\ldots$}}
\put(58,0){\circle*{1.5}}\put(69,11.5){\circle*{1.5}}
\put(66,0){\circle*{1.5}}\put(69,8){\circle*{1.5}}
%\put(56,-1){$\underbrace{\rule{10.5mm}{0mm}}_{n}$}
\put(74,7){$= 0\,, $}
%----------------------------------------------------------------------------------------------------------------------------------------------
\put(102,0){\line(-1,2){3.7}}\put(94,0){\line(1,2){3.7}}\put(105,4.5){\line(-2,1){6.25}}\put(105,8){\line(-1,0){6.25}}\put(105,11.5){\line(-2,-1){6.25}}
\put(102,0){\vector(-1,2){2}}\put(94,0){\vector(1,2){2}}\put(105,4.5){\vector(-2,1){4}}\put(105,8){\vector(-1,0){4}}\put(105,11.5){\vector(-2,-1){4}}
\put(98,8){\circle{1.5}} \put(98,8.75){\line(0,1){7.5}}
\put(98,8.75){\vector(0,1){3.75}}
\put(96,0){{\scriptsize$\ldots$}} \put(105,4.5){\circle*{1.5}}
\put(105,8){\circle*{1.5}} \put(105,11.5){\circle*{1.5}} \put(108,
7){$=0\,.$}
\end{picture}
\end{split}
\end{equation*}

All the relations above severely restrict the possible form of a
cyclic vertex, leaving in fact only two nontrivial options:

\begin{equation*}\label{ab}
\begin{split}
\unitlength 1mm % = 2.85pt
\begin{picture}(60,7)(0,-7)
\put(0,0){\line(1,0){8}}\put(8,0){\line(1,0){8}}
\put(0,0){\vector(1,0){4}}\put(8,0){\vector(1,0){5}}
\put(8,0){\circle*{1.5}}\put(18,-0.5){{,}}
%----------------------------------------------------------------------------------------------------------------------------------------------
\put(40,0){\line(1,0){7,25}}\put(48.75,0){\line(1,0){8}}\put(45,-6){\line(1,2){2.7}}\put(51,-6){\line(-1,2){2.7}}
\put(40,0){\vector(1,0){4}}\put(48.75,0){\vector(1,0){5}}\put(45,-6){\vector(1,2){1.7}}\put(51,-6){\vector(-1,2){1.7}}
\put(48,0){\circle{1.5}}
\put(45,-6){\circle*{1.5}}\put(51,-6){\circle*{1.5}}
\put(60,-0,5){.}
\end{picture}
\end{split}
\end{equation*}
These vertices correspond to the tensors $\Lambda$ and
$\mathrm{R}=R_{QQ}$, which are already algebraically independent
in the stable range of dimensions. Identifying these tensors with
the generators of the cyclic space (\ref{rep}), we get the desired
isomorphism between the complexes $\bar{\mathcal{A}}^{0,0}$ and
${\widehat{W}}$.
\end{proof}

As immediate corollaries from the proposition above we have

\begin{cor}
The complex of  scalar covariants $\mathcal{A}^{0,0}$ is acyclic.
\end{cor}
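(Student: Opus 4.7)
The plan is to deduce acyclicity of $\mathcal{A}^{0,0}$ from acyclicity of the space of indecomposables $\bar{\mathcal{A}}^{0,0}$ via a Künneth-type argument applied to the multiplicative decomposition $\mathcal{A}^{0,0} = \bigoplus_k (\bar{\mathcal{A}}^{0,0})^k$.

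First, I would combine the preceding Proposition with a computation already performed inside the proof of Theorem \ref{A-P}. The Proposition gives an isomorphism of complexes $\bar{\mathcal{A}}^{0,0} \simeq \widehat{W}$, and in the proof of Theorem \ref{A-P} the complex $(\widehat{W},\hat d)$ was shown to be acyclic by means of the filtration by word length: the zeroth differential of the associated spectral sequence acts by $d_0' a = b$, $d_0' b = 0$, whence $E_1' = 0$ and so $H(\widehat W) = 0$. Consequently $\bar{\mathcal{A}}^{0,0}$ is acyclic.

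Next, I would upgrade the acyclicity of the indecomposables to that of the whole scalar algebra by exploiting the grading $\mathcal{A}^{0,0}=\bigoplus_{k\geq 1}(\bar{\mathcal{A}}^{0,0})^k$. Graphically, $(\bar{\mathcal{A}}^{0,0})^k$ is spanned by disjoint unions of $k$ connected legless graphs, while $\bar{\mathcal{A}}^{0,0}$ is spanned by a single connected component. In the stable range of dimensions, the correspondence map $R$ becomes an isomorphism on each filtration level, so no unexpected relations appear between products of indecomposable connected invariants; this identifies $(\bar{\mathcal{A}}^{0,0})^k$ with the $k$-th supersymmetric power of $\bar{\mathcal{A}}^{0,0}$, and the differential $\delta$ acts as a graded derivation compatible with this identification.

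Finally, I would invoke Künneth. Over $\mathbb{R}$, the $k$-fold (super) tensor product of an acyclic complex is acyclic, and in characteristic zero the supersymmetric power $\mathrm{Sym}^k(\bar{\mathcal{A}}^{0,0})$ is a direct summand of $(\bar{\mathcal{A}}^{0,0})^{\otimes k}$ cut out by the symmetrization idempotent $\frac{1}{k!}\sum_{\sigma\in S_k}\sigma$. A direct summand of an acyclic complex is acyclic, hence $(\bar{\mathcal{A}}^{0,0})^k$ is acyclic for every $k\geq 1$, and taking the direct sum yields $H(\mathcal{A}^{0,0})=0$.

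The main obstacle is the middle step: justifying that, at the level of the full algebra, the natural map $\mathrm{Sym}(\bar{\mathcal{A}}^{0,0})\to \mathcal{A}^{0,0}$ is an isomorphism. This is where the stable range assumption inherited from the Proposition is essential, since outside that range the second main theorem of invariant theory permits additional relations among products of $\Lambda$- and $R_{QQ}$-words. Once this freeness is granted, the Künneth step is routine, and the Corollary follows.
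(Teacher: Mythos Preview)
Your proposal is correct and follows essentially the same route as the paper: combine the Proposition $\bar{\mathcal{A}}^{0,0}\simeq\widehat W$ with the acyclicity of $\widehat W$ established in the proof of Theorem~\ref{A-P}, then apply K\"unneth to the decomposition $\mathcal{A}^{0,0}=\bigoplus_k(\bar{\mathcal{A}}^{0,0})^k$. The paper's argument is a two-line sketch; your version spells out the passage from tensor powers to (super)symmetric powers via the symmetrization idempotent, and flags the stable-range hypothesis needed for freeness, which is a welcome clarification rather than a departure.
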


Indeed, from the proof of Theorem \ref{A-P} we know that the
complex ${\widehat{W}}\simeq \bar{\mathcal{A}}^{0,0}$ is acyclic.
Applying the K\"unneth formula to
$\mathcal{A}^{0,0}=\bigoplus_{k\in \mathbb{N}}
(\bar{\mathcal{A}}^{0,0})^{\otimes k}$ yields the statement.

\begin{cor}
The functions $P_n$ are nontrivial $\delta$-cocycles of the
subcomplex $\mathcal{R}\subset \mathcal{A}$.
\end{cor}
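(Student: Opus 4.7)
The plan is to reduce the statement to the scalar subcomplex $\mathcal{R}^{0,0}=\mathcal{R}\cap\mathcal{A}^{0,0}$ and then exploit the two preceding results together with Theorem \ref{A-P}. Since $\delta$ preserves the tensor type $(n,m)$, any $\delta$-primitive of the scalar $P_n$ must itself be a scalar, so nontriviality of $[P_n]$ in $H(\mathcal{R})$ is equivalent to nontriviality in $H(\mathcal{R}^{0,0})$. The preceding corollary asserts that $\mathcal{A}^{0,0}$ is acyclic, and consequently the long exact sequence attached to the short exact sequence
$$
0\longrightarrow \mathcal{R}^{0,0}\longrightarrow\mathcal{A}^{0,0}\longrightarrow \mathcal{A}^{0,0}/\mathcal{R}^{0,0}\longrightarrow 0
$$
collapses to an isomorphism $\partial:H(\mathcal{A}^{0,0}/\mathcal{R}^{0,0})\xrightarrow{\;\sim\;}H(\mathcal{R}^{0,0})$ (shifted by one). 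Theorem \ref{A-P} exhibits $\mathrm{A}_n(\Lambda,\mathrm{R})\in\mathcal{A}^{0,0}$ with $\mathrm{A}_n(\Lambda,\mathrm{R})\equiv\mathrm{Str}(\Lambda^{2n-1})\pmod{\mathcal{R}}$ and $\delta\mathrm{A}_n(\Lambda,\mathrm{R})=\binom{2n-1}{n}P_n$; hence, by the definition of the connecting homomorphism, $\partial[\mathrm{Str}(\Lambda^{2n-1})]=\binom{2n-1}{n}[P_n]$. It therefore suffices to prove that $[\mathrm{Str}(\Lambda^{2n-1})]$ is a nonzero class in $H(\mathcal{A}^{0,0}/\mathcal{R}^{0,0})$.

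Next I would show that the induced differential on $\mathcal{A}^{0,0}/\mathcal{R}^{0,0}$ is identically zero. Using $\delta\Lambda=\Lambda^2+\tfrac{1}{2}R_{QQ}\equiv\Lambda^2\pmod{\mathcal{R}}$ and the Leibniz rule, one computes $\delta\,\mathrm{Str}(\Lambda^{2k-1})\equiv(2k-1)\mathrm{Str}(\Lambda^{2k})\pmod{\mathcal{R}}$, and the standard trace identity for an odd endomorphism yields $\mathrm{Str}(\Lambda^{2k})=0$. By Leibniz this propagates to arbitrary products of indecomposable scalar covariants, so the quotient differential vanishes and $H(\mathcal{A}^{0,0}/\mathcal{R}^{0,0})=\mathcal{A}^{0,0}/\mathcal{R}^{0,0}$.

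Finally, under the identification $\bar{\mathcal{A}}^{0,0}\simeq\widehat{W}$ of the preceding proposition (in the stable range of dimensions), the image of $\mathrm{Str}(\Lambda^{2n-1})$ in the indecomposable quotient $\widehat{W}/\widehat{W}_{\ge 1}$ is the cyclic word $a^{2n-1}$. The subspace $\widehat{W}/\widehat{W}_{\ge 1}$ has, as a basis, the nonzero cyclic words $\{a^{2k-1}\}_{k\ge 1}$ of pairwise distinct word-length in the single letter $a$, so $a^{2n-1}\neq 0$ there; passing through the free (super)commutative algebra structure furnished by the Künneth argument used in the preceding corollary, $\mathrm{Str}(\Lambda^{2n-1})$ is nonzero in $\mathcal{A}^{0,0}/\mathcal{R}^{0,0}$. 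Therefore $[\mathrm{Str}(\Lambda^{2n-1})]\neq 0$ in $H(\mathcal{A}^{0,0}/\mathcal{R}^{0,0})$, and by the isomorphism $\partial$ the class $[P_n]\neq 0$ in $H(\mathcal{R}^{0,0})\subset H(\mathcal{R})$.

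The main obstacle is the verification that the quotient differential on $\mathcal{A}^{0,0}/\mathcal{R}^{0,0}$ genuinely vanishes: one must be sure that no pure-$Q$ residue survives after erasing the curvature terms from $\delta$ of an arbitrary monomial $\mathrm{Str}(\Lambda^{2k_1-1})\cdots\mathrm{Str}(\Lambda^{2k_r-1})$. The cyclicity identity $\mathrm{Str}(\Lambda^{2k})=0$ is exactly what guarantees this, and it is the only nontrivial ingredient in the calculation; everything else reduces to bookkeeping in the cyclic complex $\widehat{W}$.
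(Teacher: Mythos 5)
Your proof is correct and rests on exactly the same two pillars as the paper's: the acyclicity of the scalar complex $\mathcal{A}^{0,0}\simeq\bigoplus_k(\widehat{W})^{\otimes k}$ and the primitive $\mathrm{A}_n(\Lambda,\mathrm{R})$ supplied by Theorem \ref{A-P}; you merely run the argument directly through the connecting homomorphism of the long exact sequence instead of by contradiction, and you make explicit the step the paper leaves implicit, namely that $\mathrm{Str}(\Lambda^{2n-1})$ survives in $\mathcal{A}^{0,0}/\mathcal{R}^{0,0}$ because the quotient differential vanishes and $a^{2n-1}\neq 0$ in $\widehat{W}$ modulo the $b$-ideal. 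The only blemish is the coefficient in $\delta\,\mathrm{Str}(\Lambda^{2k-1})\equiv\mathrm{Str}(\Lambda^{2k})\pmod{\mathcal{R}}$ (the alternating signs give $1$, not $2k-1$), which is harmless since $\mathrm{Str}(\Lambda^{2k})=0$ in any case.
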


Suppose the statement were false. Then we could find  a function
$f_n\in \mathcal{R}$ such that $\delta f_n=P_n$. By Theorem
\ref{A-P}, the function
\begin{equation*}\label{}
\begin{array}{c}
    \mathrm{A}_n(\Lambda, \mathrm{R}) -\binom{2n-1}{n}f_n
    \end{array}
\end{equation*}
would be then a  nontrivial intrinsic $\delta$-cocycle, which
contradicts acyclicity of $\bar{\mathcal{A}}^{0,0}$.

\section{The metric connection and $A$-series}\label{A3}

\subsection*{The proof of Proposition \ref{6.1}} Due to the fundamental
classification theorem for smooth supermanifolds \cite{B},  we can
identify $M$ with the total space of an  odd vector bundle $\pi:
E\rightarrow M_0$, where $M_0$ is the body of $M$. Consider the
triple $(g^0,g^1,\nabla^1)$ consisting of a Riemannian metric
$g^0$ on $M_0$, a Euclidean metric $g^1$ on $\Pi E$, and a
connection $\nabla^1$ on $E$. Without loss in generality we can
assume $\nabla^1$ to be compatible with $g^1$ (otherwise replace
$\nabla^1$ by $\nabla^1-\frac12(g^1)^{-1}\nabla^1 g^1$). Notice
also that the metric $g^1$ on $\Pi E$ defines and is defined by  a
fiberwise symplectic structure on $E$.

Let $\{x^i\}$ be a coordinate system in a trivializing chart
$U\subset M_0$ and let $\{\theta^a\}$ be odd coordinates dual to
some frame $\{e_a\}$ in  $E|_{U}$. If
$\nabla^1(e_a)=dx^i\Gamma_{ia}^b e_b$, then the local vector
fields
\begin{equation}\label{vh}
    v_a = \frac{\partial}{\partial
    \theta^a}\quad\mbox{and}\quad
    h_i=\frac{\partial}{\partial x^i}-
    \theta^a\Gamma_{ia}^b\frac{\partial}{\partial \theta^b}
\end{equation}
span, respectively, the subspaces of vertical and horizontal
vector fields of $T(E|_U)$. Define the affine connection
$\widehat{\nabla}$ on the total space of $E|_U$ by setting
\begin{equation}\label{aff-con}
    \widehat{\nabla}_{h_i}h_j=\Gamma_{ij}^k h_k\,,\quad
   \widehat{ \nabla}_{h_i}v_a=\Gamma_{ia}^bv_b\,,\quad
    \widehat{\nabla}_{v_a}h_i=0\,,\quad \widehat{\nabla}_{v_a}v_b=0\,,
\end{equation}
where $\Gamma_{ij}^k$ are the Christoffel symbols of a unique
symmetric connection $\nabla^0$ compatible with the metric $g^0$.
Relations (\ref{aff-con}) are obviously form invariant under the
coordinate changes
\begin{equation*}\label{}
x^{i'}=x^{i'}(x) \,,\qquad \theta^{a'}=\phi^{a'}_b(x)\theta^b
\end{equation*}
on all nonempty intersections $E|_{U}\cap E|_{U'}$; hence
$\widehat{\nabla}$ is a well-defined affine connection on the
whole $M$. The connection $\widehat{\nabla}$ is not symmetric
unless $\nabla^1$ is flat. The nonzero components of the torsion
tensor $T$ are given by
\begin{equation*}\label{}
    T_{h_i
    h_j}=\widehat{\nabla}_{h_i}h_j-\widehat{\nabla}_{h_j}h_i-[h_i,h_j]=\theta^aR_{ija}^bv_b\,,
\end{equation*}
where $\{R_{ija}^b\}$ is the curvature tensor of $\nabla^1$.
Subtracting torsion from $\widehat{\nabla}$ yields a symmetric
connection $\nabla=\widehat{\nabla} -\frac12T$ on $M$.

In the frame (\ref{vh}), the nonzero components of the curvature
tensor of $\nabla$ are collected to the following supermatrices:
\begin{equation}\label{SM}
   R_{_{h_i h_j}}= \left(%
\begin{tabular}{c|c}
  $R_{ijk}^l$&$\frac12\theta^a \bar\nabla_k R_{ija}^b$ \\
  \cline{1-2}
  $0$ & $R_{ijb}^a$ \\
\end{tabular}%
\right) \,,\quad R_{v_ah_i}=\left(%
\begin{tabular}{c|c}
  $0$ & $R_{ija}^b$ \\
  \cline{1-2}
  $0$ & $0$ \\
\end{tabular}%
\right)\,,
\end{equation}
where $\bar\nabla=\nabla^0\oplus\nabla^1$ is the connection on
$TM_0\oplus E$ and $\{R_{ijk}^l\}$ is the curvature tensor of
$\nabla^0$. Since the supermatrices have the block-triangular
form, we readily get
\begin{equation*}\label{}
    \mathbf{P}_{2m+1}^{\nabla}=
    \pi_\ast(\mathbf{P}_{2m+1}^{\nabla^0}-\mathbf{P}_{2m+1}^{\nabla^1})=0\,.
\end{equation*}
Here we used the definition of the supertrace and the fact that
both $\nabla^0$ and $\nabla^1$ are metric connections. Thus,
$\nabla$ is a desired connection.

\subsection*{The proof of Proposition \ref{6.2}}
Let $\nabla $ and $\overline\nabla$ be two metric  connections
associated to the triples $(g^0, g^1,\nabla^1)$ and $(\overline
g^0,\overline g^1,\overline\nabla{}^1)$.  Define the triple
$(g^0_t, g^1_t, \stackrel{_t}{\nabla}{\!\!}^1)$, where
$$g_t^0=(1-t)g^0+t\overline g^0\,,\qquad
g_t^1=(1-t)g^1+t\overline g^1\,,$$ and
$$
\begin{array}{c}
\stackrel{_t}{\nabla}{\!\!}^1=(1-t)\nabla^1+t\overline\nabla^1
-\frac12(g_t^1)^{-1}[(1-t)\nabla^1+t\overline\nabla{}^1]g^1_t
\end{array}
$$
is a one-parameter family of connections interpolating between
$\nabla^1$ and $\overline\nabla{}^1$ as $t$ runs the interval
$[0,1]$. By definition, $\stackrel{_t}{\nabla}{\!\!}^1g_t^1=0$.
Denote by $\stackrel{_t}{\nabla}{\!\!}^0$ the symmetric connection
compatible with $g_t^0$, and let $\stackrel{_t}{\nabla}$ be the
symmetric connection associated to the data
$(g_t^0,g_t^1,\stackrel{_t}{\nabla}{\!\!}^1)$.

The rest of the proof runs in much the same way as the proof of
Theorem \ref{independing thm}. Namely, we introduce the product
manifold $\widetilde{M}=M\times \mathbb{R}^{1|1}$ endowed with the
homological vector field $\widetilde{Q}=Q+\theta\partial_t$ and
the connection $\widetilde{\nabla}=\stackrel{_t}{\nabla}\oplus
\nabla'$, where $\nabla'$ is the standard flat connection on
$\mathbb{R}^{1|1}$. The matrices
$\widetilde{\Lambda}=\widetilde{\nabla}\widetilde{Q}$ and
$\widetilde{\mathrm{R}}=[\widetilde{\nabla}_{\widetilde{Q}},\widetilde{\nabla}_{\widetilde{Q}}]$
have the form
$$
\widetilde{\Lambda}=\left(
\begin{tabular}{c|c}
   $\Lambda_t$ &   0 \\
   \cline{1-2}
   0 &$ \begin{array}{cc}
  0 & 1 \\
  0 & 0
\end{array}$
\end{tabular}
\right)\,,
 \qquad \widetilde{\mathrm{R}}=\left(
\begin{tabular}{c|c}
   $\mathrm{R}_t+\theta \Psi_t$ & 0 \\
  \cline{1-2}
  0 &$
\begin{array}{cc}
  0 & 0 \\
  0 & 0
\end{array}$
\end{tabular}
\right)\,,
$$
where
$$
\begin{array}{c}
\Lambda_t=\stackrel{_t}{\nabla}\!\! Q\,,\qquad
\Psi_t=\partial_t(\stackrel{_t}{\nabla}_{\!Q})\,,\qquad
\mathrm{R}_t=[\stackrel{_t}{\nabla}_{\!Q},\stackrel{_t}{\nabla}_{\!Q}]\,.
\end{array}
$$
Taking into account the identities
\begin{equation*}
\partial_t\mathrm{R}_t=\stackrel{_t}{\nabla}_{\!Q}\Psi_t \,,
\qquad \stackrel{_t}{\nabla}_{\!Q} \mathrm{R}_t=0\,,\qquad
\mathbf{P}_{2m+1}^{\stackrel{_t}{\nabla}}(Q)=0\,,
\end{equation*}
 one
can easily check that
$$
    \mathbf{P}_{2m+1}^{\widetilde{\nabla}}(\widetilde{Q})=\widetilde{Q}F_{2m+1}\,,\qquad
    F_{2m+1}=(2m+1)\int_{0}^t
    \mathrm{Str}((\mathrm{R}_s)^{2m}\Psi_s)ds\,.
$$
Similar to the curvature (\ref{SM}) the supermatrix $\Psi_t$ has
the block-triangular form
\begin{equation*}\label{}
    \Psi_t=\left(%
\begin{array}{c|c}
  \Psi^0_t & \ast \\  \cline{1-2}
  0 & \Psi^1_t \\
\end{array}%
\right)\,, \qquad \Psi_t^a=
\partial_t(\stackrel{_t}{\nabla}{\!}^a_{\!\!Q})\,, \qquad a=0,1\,.
\end{equation*}
This allows us to rewrite  the function $F_{2m+1}$ as
\begin{equation*}\label{}
    F_{2m+1}=(2m+1)\int_{0}^t[\mathrm{tr}((\mathrm{R}^0_s)^{2m}\Psi^0_s) - \mathrm{tr}((\mathrm{R}^1_s)^{2m}\Psi^1_s)]ds\,,
\end{equation*}
where
$\mathrm{R}^a_t=[\stackrel{_t}{\nabla}{\!}^a_{\!\!Q},\stackrel{_t}{\nabla}{\!}^a_{\!\!Q}]$.
Denoting  $I^a_t = (g_t^a)^{-1}\partial_t g_t^a$ and using the
obvious identities
\begin{equation*}\label{}
(g_t^a)^{-1}\Psi^a_t g_t^a =-\stackrel{_t}{\nabla}{\!}^a_{\!\!Q}
I^a_t\,,\qquad (g_t^a)^{-1}\mathrm{R}^a_t
g_t^a=-\mathrm{R}^a_t\,,\qquad
\stackrel{_t}{\nabla}{\!}^a_{\!\!Q}\mathrm{R}^a_t=0
\end{equation*}
(the first one is obtained by differentiating the identity
$\stackrel{_t}{\nabla}{\!}^a_{\!\!Q} g_t^a=0$), we get
\begin{equation}\label{FU}
\begin{array}{c}
\displaystyle
F_{2m+1}=Q \int_0^t U_{2m+1}(s)ds\,, \\[7mm]
U_{2m+1}(t)=(2m+1)\left[
\mathrm{tr}((\mathrm{R}_t^0)^{2m}I^0_t)-\mathrm{tr}((\mathrm{R}_t^1)^{2m}I^1_t)\right]\,.
\end{array}
\end{equation}

Now substituting $\widetilde{Q}$ and $\widetilde{\nabla}$ to  the
general formula (\ref{A-F}), we obtain  the following
$\widetilde{Q}$-invariant function on $\widetilde{M}$:
\begin{equation*}\label{}
\begin{array}{lll}
    \widetilde{\mathrm{A}}^{F}_{2m+1}&=&\mathrm{A}_{2m+1}(\widetilde{\Lambda},
    \widetilde{\mathrm{R}})-\binom{4m+1}{2m+1}F_{2m+1}\\[5mm]
    &=&\mathrm{A}_{2m+1}(\Lambda_t, \mathrm{R}_t+\theta\Psi_t)-
    \binom{4m+1}{2m+1}F_{2m+1}\\[5mm]
    &=&\mathrm{A}_{2m+1}(\Lambda_t, \mathrm{R}_t)-\binom{4m+1}{2m+1}F_{2m+1}
    +\theta W_{2m+1}\,.
    \end{array}
\end{equation*}
In view of equation (\ref{FU}), the identity
$\widetilde{Q}\widetilde{\mathrm{A}}^{F}_{2m+1}=0$ amounts to

\begin{equation*}\label{tr}
\begin{array}{l}
Q\mathrm{A}_{2m+1}(\Lambda_t,\mathrm{R}_t)=0\,,\\[3mm]
\partial_t
\mathrm{A}_{2m+1}(\Lambda_t,
\mathrm{R}_t)=Q\left[W_{2m+1}+\binom{4m+1}{2m+1}U_{2m+1}\right]\,.
\end{array}
\end{equation*}
Integrating the last equality with respect to  $t$ from $0$ to
$1$, we conclude that $\delta$-cocycles
$\mathrm{A}_{2m+1}(\Lambda_0, \mathrm{R}_0)$ and
$\mathrm{A}_{2m+1}(\Lambda_1, \mathrm{R}_1)$ associated to the
metric connections $\nabla$ and $\overline{\nabla}$ are
cohomologous, and the proof is complete.

\end{document}